\pdfoutput=1
\documentclass[10pt, doublecolumn]{IEEEtran}
\usepackage{cite}
\usepackage{indentfirst}
\usepackage{graphicx}
\usepackage{changepage}
\usepackage{stfloats}
\usepackage{amsfonts,amssymb}
\usepackage{bm}
\usepackage{algorithm}
\usepackage{algorithmic}
\usepackage{amsmath}
\usepackage{amsmath,amssymb,amsfonts}
\usepackage{times}
\usepackage{url}
\usepackage{cite}
\usepackage{textcomp}
\usepackage{xcolor}
\usepackage{color}
\usepackage{setspace}
\usepackage{enumitem}
\usepackage{float}
\usepackage{diagbox}
\usepackage[T1]{fontenc}
\usepackage[utf8]{inputenc}
\usepackage{authblk}
\usepackage{threeparttable}
\usepackage{booktabs}
\usepackage{footnote}
\usepackage{graphicx}
\usepackage{pifont}
\usepackage{subfigure}
\usepackage{mathrsfs}
\usepackage{bbm}
\usepackage{dsfont}

\allowdisplaybreaks[4]

\addtolength{\textwidth}{.18 in}%
\addtolength{\evensidemargin}{13in}%
\bibliographystyle{IEEEtran}
\newenvironment{proof}{{\indent  \indent \it Proof:}}{\hfill $\blacksquare$}

\begin{document}
\title{Network-Level Integrated Sensing and Communication: Interference Management and BS Coordination Using Stochastic Geometry}

\author{
	Kaitao Meng, \textit{Member, IEEE}, Christos Masouros, \textit{Fellow, IEEE}, Guangji Chen, and Fan Liu, \textit{Senior Member, IEEE}
	\thanks{Preliminary versions of this paper were presented at the IEEE WCNC 2024 \cite{meng2024bs} and 14th CSNDSP \cite{meng2024Networked}.}
	\thanks{K. Meng and C. Masouros are with the Department of Electronic and Electrical Engineering, University College London, London, UK (emails: \{kaitao.meng, c.masouros\}@ucl.ac.uk). G. Chen is with the Nanjing University of Science and Technology, Nanjing 210094, China (email: guangjichen@njust.edu.cn). F. Liu is with the National Mobile Communications Research Laboratory, Southeast University, Nanjing 210096, China (f.liu@ieee.org). }
}

\maketitle


\begin{abstract}
In this work, we study integrated sensing and communication (ISAC) networks with the aim of effectively balancing sensing and communication (S\&C) performance at the network level. Focusing on monostatic sensing, the tool of stochastic geometry is exploited to capture the S\&C performance, which facilitates us to illuminate key cooperative dependencies in the ISAC network and optimize key network-level parameters. Based on the derived tractable expression of area spectral efficiency (ASE), we formulate the optimization problem to maximize the network performance from the view point of two joint S\&C metrics. Towards this end, we further jointly optimize the cooperative BS cluster sizes for S\&C and the serving/probing numbers of users/targets to achieve a flexible tradeoff between S\&C at the network level. It is verified that interference nulling can effectively improve the average data rate and radar information rate. Surprisingly, the optimal communication tradeoff for ASE maximization tends to use all spatial resources for multiplexing and diversity gain, without interference nulling. In contrast, for sensing objectives, resource allocation tends to eliminate interference, especially when there are sufficient antenna resources, because inter-cell interference becomes a more dominant factor affecting sensing performance. This work first reveals the insight into spatial resource allocation for ISAC networks. Furthermore, we prove that the ratio of the optimal number of users and the number of transmit antennas is a constant value when the communication performance is optimal. Simulation results demonstrate that the proposed cooperative ISAC scheme achieves a substantial gain in S\&C performance at the network level.
\end{abstract}   

\begin{IEEEkeywords}
	Integrated sensing and communication, multi-cell networks, network performance analysis, stochastic geometry, interference nulling, cooperative sensing and communication. 
\end{IEEEkeywords}
\newtheorem{thm}{\bf Lemma}
\newtheorem{remark}{\bf Remark}
\newtheorem{Pro}{\bf Proposition}
\newtheorem{theorem}{\bf Theorem}
\newtheorem{Assum}{\bf Assumption}
\newtheorem{Cor}{\bf Corollary}

\section{Introduction}

Due to the scarcity of spectrum and the severe interference between separate sensing and communication (S\&C) systems under the shared use of wireless resources \cite{meng2024Networked, Mishra2019Toward}, there is a significantly growing interest in utilizing unified infrastructure/waveforms/networks to provide both S\&C services, which has motivated the recent research on integrated sensing and communication (ISAC) (also referred to as joint communication and sensing (JCAS)) \cite{Zhang2021OverviewSignal, Liu2022SurveyFundamental, Meng2022UAV}. By simultaneously conveying information to the receiver and extracting information from the scattered echoes, the ISAC technique has emerged as a promising solution to greatly improve the spectrum/cost/energy efficiency of the S\&C functionalities \cite{Cui2021Integrating}. Most recently, the international telecommunication union (ITU) approved the ISAC technique as one of the six key usage scenarios of six-generation (6G) networks. In the literature, most existing works on this topic have focused on performance analysis and optimization design at the link/system level, e.g., waveform optimization, echo signal processing, and resource balance for one or several base stations (BSs) \cite{Ouyang2022Performance, Liu2022Integrated, Meng2023Throughput, Liu2023DistributedUnsupervised}. In general, these studies only consider S\&C interference within a single cell, while the inter-cell interference for network-level ISAC is rarely taken into account.

For large-scale dense ISAC networks, inter-cell interference acts as a critical constraint that restricts network performance. Cooperative ISAC schemes, e.g., coordinated beamforming and cooperative resource allocation design, are promising solutions to avoid/reduce the inter-cell interference to extremely improve the S\&C performance of the whole network \cite{Shin2017CoordinatedBeamforming, Tanbourgi2014Tractable, meng2024cooperative}. Moreover, network-level ISAC also provides new degrees of freedom (DoF) to optimize and balance S\&C performance in a more flexible way, e.g., optimizing cooperative BS cluster sizes for S\&C and the average numbers of served/detected users/targets. Nonetheless, it is challenging to quantitatively analyze and optimize the corresponding average network performance for S\&C in ISAC networks, especially with consideration of the influence of channel fading and spatial attributes. This motivates us to develop new ISAC cooperation approaches to reduce the interference of the networks, and then accurately derive the S\&C performance for further network optimization.
 
Stochastic geometry (SG) provides a powerful mathematical tool for the analysis of multi-cell wireless networks, and has been widely used in a variety of settings \cite{Andrews2011TractableApproach, JoHanShin2012Heterogeneous, Chen2018Stochastic}. For instance, \cite{Andrews2011TractableApproach} proposed a general framework for analysis of the signal-to-interference-plus-noise ratio (SINR), the average data rate, and the coverage probability in multi-cell communication networks. Moreover, in \cite{Bjrnson2016Deploying}, a tractable uplink energy efficiency (EE) maximization problem is formulated, where several practical factors are considered, e.g., the BS density, the transmit power, the number of antennas and users per cell, and the pilot reuse factor. In addition to evaluating the performance of communication-only networks, SG may also be adopted for the sensing performance analysis in vehicular radar networks and wireless sensor networks (WSNs) \cite{al2017stochastic, wang2023performance, Roth2019FundamentalImplications}. For example, in \cite{al2017stochastic}, automotive radar interference is modeled under a two-lane scenario, where the success probability of automotive radars in detecting their targets is estimated. Furthermore, authors in \cite{wang2023performance} proposed a time-frequency division multiple access scheme to mitigate the interference in a coordinated manner and reduce interference probability for mmWave automotive radar networks. In \cite{Roth2019FundamentalImplications}, the authors investigated the localization performance without considering the inter-cell interference, and the results illustrated the impact of increasing the BS density on distance-based localization accuracy in WSNs.

In addition to the performance analysis of the communication-only and sensing-only networks, there are recent works that analyze the ISAC performance by exploiting the SG technique. For instance, \cite{chen2022isac} studied an ISAC beam alignment approach for THz networks, where the reference signal and the synchronization signal block are jointly designed to improve beam alignment performance, and then SG is utilized to derive the expression of coverage probability and network throughput. In \cite{Ram2022Frontiers}, the BS serves as a dual functional transmitter that supports both S\&C functionalities in a time division manner, where during the search interval, the radar scans the entire angular search space to find the maximum number of mobile users. In \cite{salem2022rethinking}, the EE for dual-functional radar-communication (DFRC) cellular networks was derived in closed form to facilitate the  network EE maximization by optimizing the BS density. Furthermore, \cite{Olson2022RethinkingCells} developed a mathematical framework for characterizing the S\&C coverage probability and ergodic capacity in a mmWave ISAC network. However, it is noteworthy that the above studies focus on providing services for only one user and one target in each cell within one frame, which definitely does not fully unleash the multiplexing gain offered by the spatial resources. Moreover, these works seldom consider the cooperation between BSs to prevent/suppress inter-cell interference for network performance improvement. 

In general, there are mainly two kinds of strategies to handle inter-cell interference for both S\&C, i.e., interference cancelling and interference nulling \cite{ghatak2020elastic, Zakhour2012BaseStation, Tanbourgi2014Tractable, Feng2019Location}. The basic principle of the former is to estimate the interfering signal and remove it from the received signal \cite{Xie2023Networked}, which generally requires stringent time synchronization requirements and high overhead over the networks. Clock synchronization becomes a more critical issue in multi-BS S\&C scenarios with increasing bandwidths and carrier frequencies \cite{Song2021TargetLocalization}, since even small errors in the time domain can result in significant range errors during signal processing. Moreover, to cancel the interference from other BSs, it requires the channel state information (CSI) and the original data, leading to higher overhead. Interference nulling involves designing the transmit beamforming in the null space of the interference channel to avoid inter-cell interference. Thus, it does not require data sharing between BSs and thus has a lower signalling overhead \cite{Li2015UserCentric}, which is more practical for dense networks to improve the service quality of S\&C. 

Based on the above discussion, in this work, we study a cooperative ISAC scheme to effectively balance S\&C performance at the network level.\footnote{Considering that the monostatic sensing has loose requirements of time synchronization between BSs, we focus on monostatic sensing instead of multi-static sensing in this work. Therefore, sensing cooperation in our work is with the intent of suppressing inter-cell interference.} We employ SG techniques to conduct performance analysis, shedding light on crucial cooperative dependencies within the ISAC network. This analysis offers valuable insights to guide a closer inspection of specific features and emerging trends. By using SG, the S\&C performance of the entire cellular network can be described more fairly despite the presence of an infinite number of random variables (BS locations, fading values). Different from the closest related works serving/probing one communication user/target at each time \cite{olson2022coverage, salem2022rethinking}, in this work, multiple users and targets are served/detected simultaneously, and the number of users/targets per BS are also optimized to improve the whole network performance. 

In the cooperative ISAC networks, the availability of multiple antennas can provide the flexibility to simultaneously realize three kinds of system objectives: serving/probing multiple users/targets at the same time-frequency resource block to achieve a \textit{spatial multiplexing gain}; providing \textit{spatial diversity gain} for communication users; and \textit{nulling inter-cell interference}. Then, an interesting question in this case is how to balance these competing benefits for both S\&C in ISAC networks. To answer this question, we aim for maximizing two proposed network S\&C performance metrics based on the derived tractable expression, by jointly optimizing the cooperative BS cluster sizes for S\&C and the serving/probing numbers of users/targets. This thus strikes an improved  tradeoff between S\&C at the network level. Surprisingly, as we will show later in this paper, the optimal communication tradeoff for the case of area spectral efficiency (ASE) maximization tends to employ all spacial resources towards multiplexing and diversity gain, without interference nulling. By contrast, for the sensing objectives, resource allocation tends to eliminate certain interference especially when the antenna resources are sufficient, since the interference in the sensing operations is more severe compared to the communications. The main contributions of this paper are summarized as follows:
\begin{itemize}[leftmargin=*]
	\item First, we propose a cooperative ISAC network with the exploitation of interference nulling for S\&C by adopting the coordinated beamforming technique. With the help of the proposed cooperative ISAC network model and SG tools, the S\&C performance is described analytically through the ASE expression to reveal the insights of spatial resource allocation for ISAC networks, and it is verified that the interference nulling can effectively improve the average data rate and radar information rate.
	\item Second, we derive the tractable expressions of the ASE for S\&C, based on which, we prove that it is not necessary to perform interference nulling when maximizing the communication ASE under some general parameter setups; in contrast, interference nulling is required when maximizing the sensing ASE. The primary reason for this susceptibility to inter-cell interference during the sensing operation is the weak effective signal caused by the round-trip pathloss of echo signals. Additionally, we demonstrate a fixed relationship between the optimal number of service users and the number of BS transmit antennas for optimal communication performance, which is essential for achieving optimal spatial diversity and multiplexing tradeoffs.
	\item Third, it is first revealed that, unlike communication, the distribution of sensing interference distances exhibits hole regions and thus the analytical approach in terms of characterizing communication performance cannot be applicable to characterize the sensing performance. To tackle this issue, we propose a geometry-based method to mathematically derive a tight tractable expression of radar information rate by removing the probability with respect to areas where inter-cell sensing interference does not exist. 
	\item Finally, we formulate a performance boundary optimization problem for ISAC networks, whose performance is compared with an inner bound to verify that nulling interference in ISAC networks can effectively improve the cooperation gain and achieve a more flexible tradeoff between S\&C. Moreover, in simulations, it is revealed that ISAC networks also exhibit a tradeoff between the average link performance (the average data rate and radar information rate) and the whole network performance (the ASE) for both S\&C.
\end{itemize}

Notation: $B(a,b,c) = \int_0^a t^{(b-1)} (1-t)^{c-1}dt$ is the incomplete Beta function, $\Gamma(a,b) = \int_0^b t^{a-1}e^{-t}dt$ is the incomplete gamma function. Lowercase letters in bold font will denote deterministic vectors. For instance, $X$ and ${\bf{X}}$ denote one-dimensional (scalar) random variable and random vector (containing more than one element), respectively. Similarly, $x$ and ${\bf{x}}$ denote scalar and vector of deterministic values, respectively. ${\rm{E}}_{x}[\cdot]$ represents statistical expectation over the distribution of $x$, and $[\cdot]$ represents a variable set. ${\cal{O}}(0,r)$ denotes the circle region with center at the origin and radius $r$.

\section{System Model}

\begin{figure}[t]
	\centering
	\includegraphics[width=8.4cm]{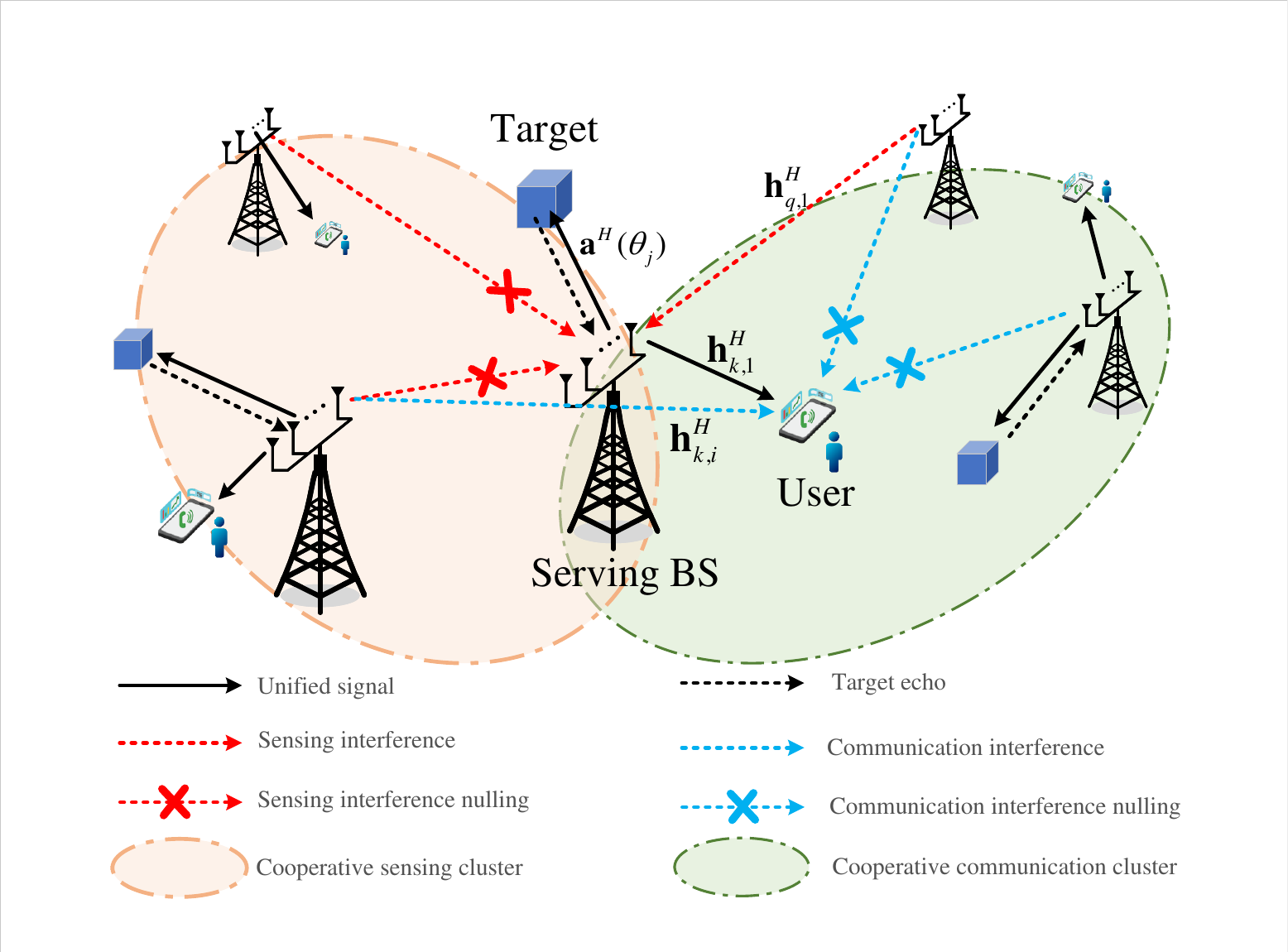}
	\vspace{-3mm}
	\caption{Illustration of cooperative ISAC networks with separate interference nulling for S\&C.}
	\label{figure1}
\end{figure}

\begin{table}[t]
	\footnotesize
	\caption{Important notations and symbols.} 
	\label{Notation}	
	\centering
	\begin{tabular}{ll}%
		\hline
		$\rm{\textbf{Notation}}$  & $\rm{\textbf{ Physical meaning}}$ \\
		\hline
		$\lambda_b, \lambda_u, \lambda_s$   &  Densities of BSs, users, and targets \\
		\hline  
		$Q$                       &  Cooperative cluster size for sensing  \\ 
		\hline  
		$L$                       &  Cooperative cluster size for communication \\
		\hline  
		$K$                       &  Number of served users in each cell \\
		\hline  
		$J$                       &  Number of detected targets in each cell \\
		\hline  
		${\bf{d}}_i$              &  Location of BS $i$ \\
		\hline  
		${\bf{W}}_i$              &  Beamforming matrix of BS $i$ \\
		\hline  
		$M_{\rm{t}}$, $M_{\rm{r}}$ &  Numbers of transmit and receive antennas of each BS \\
		\hline  
		$\alpha$                   &  The pathloss factor of communication channel\\
		\hline  
		$R_c$, $R_s$               &  Average data rate and average radar rate \\
		\hline  
		$T^{\rm{ASE}}_c$, $T^{\rm{ASE}}_s$    &  Area spectral efficiency of communication and sensing \\
		\hline  	
	\end{tabular}
\end{table}

\subsection{Network Model}
In this work, we design a coordinated beamforming scheme to achieve interference nulling in optimized cooperative BS clusters in ISAC networks. Specifically, in the considered system, each BS is equipped with $M_{\mathrm{t}}$ transmit antennas and $M_{\mathrm{r}}$ received antennas, and the location of BSs follows a homogeneous Poisson point process (PPP), denoted by $\Phi_b$. Here, $\Phi_b = \{ {\bf{d}}_i \in \mathbb{R}^2, \forall i \in \mathbb{N}^+\}$, and $ {\bf{d}}_i$ represents the location of BS $i$. Similarly, $\Phi_u$ and $\Phi_t$ respectively represent the point processes of the two-dimensional (2D) locations of single-antenna communication users and targets in the ISAC network. It is assumed that $\Phi_b, \Phi_u$, and $\Phi_t$ are mutually independent PPPs with intensities $\lambda_b$, $\lambda_u$, and $\lambda_s$, where $\lambda_u, \lambda_s \gg \lambda_b$ \cite{Marco2016Stochastic}. 

Focusing on one time-frequency resource block, we consider that each BS sends independent data to $K$ users while simultaneously sensing $J$ targets (e.g., localization and recognition) with unified ISAC signals in the Voronoi cell of the serving BS \cite{Liu2022Integrated}.\footnote{It is assumed that 
the number of users/targets in a network is far larger than the number of deployed BSs, which is generally valid in practical scenarios \cite{Lee2015Spectral}.} We note that compared with the power of the received echo signals which suffer from round-trip path loss, the sensing interference signals are relatively more significant since they are directly transmitted by the surrounding BSs. Therefore, in ISAC networks, sensing operations is more susceptible to inter-cell interference than communications. To suppress S\&C interference separately, we extend the coordinated beamforming model in pure communication network into ISAC networks \cite{Lee2015Spectral, Xu2019Modeling}, where $Q$ ($L$) closest BSs are selected for interference nulling of sensing (communication). Regarding information sharing among intra-cluster BSs, only CSI needs to be shared among BSs for cooperative beamforming, which incurs significantly less overhead compared to sharing data or communication information.

If the interference nulling request number received by the BS exceeds the limitation of DoF of spatial resources, some interference nulling requests will be randomly abandoned. The simulations demonstrate that the impact on network performance of abandoning certain user requests is negligible. In the following, we analyze the network performance based on the average number of interference nulling requests received by each BS to avoid exceeding its beamforming capability. By forming separate S\&C cooperative clusters based on targets' and users' locations, the corresponding beamforming vectors of $Q$ BSs ($L$ BSs) are jointly designed for interference nulling of sensing (communication), as shown in Fig.~\ref{figure1}. Therefore, the interference to the serving BS in the corresponding cooperative cluster of S\&C can be effectively avoided. To fully exploit the integration gain between S\&C in network-level ISAC, in this work, the number of the users and targets to be served, as well as the size of cooperative BS clusters for S\&C are jointly optimized based on the derived tractable expression of the S\&C performance metrics. 

To facilitate the performance analysis, the average number of nulling requests is adopted in this work. Within a given region  ${\cal{A}}$ with area size $|{\cal{A}}|$, the average number of BSs within this area can be denoted by $\bar B = |{\cal{A}}| \lambda_b$ according to the BS density $\lambda_b$ of the PPP. The total communication (and sensing) interference nulling request number received by the BSs within the area $|{\cal{A}}|$ can be expressed as $\bar B K (L-1)$ (and $\bar B J (Q-1)$). In this case, it can be readily proved that there are $\frac{\bar B K (L-1)  + \bar B J (Q-1)}{\bar B} = K(L-1) + J (Q-1)$ service requests received by each BS in this area. Important notations and symbols used in this work are given in Table \ref{Notation}. 

\subsection{Communication model}
Without loss of generality and in the common stochastic-geometry fashion, the typical user is denoted by user $k$ and is assumed to be located at the origin, and its performance is analyzed to generally represent the average performance of all users \cite{Bjrnson2016Deploying}. The typical user is served by its closest BS (namely serving BS), whose index is denoted by 1. The large-scale pathloss fading from the user to the serving BS is considered as $\left\|\mathbf{d}_1\right\|^{-\alpha}$, where $\mathbf{d}_1$ and $\left\|\mathbf{d}_1\right\|$ respectively represents the location of the serving BS and its distance to the typical user, and $\alpha \ge 2$ is the pathloss exponent. Then, the received signal at the typical user $k$ can be expressed as\footnote{The interference model can be easily extended to the case in which different pathloss laws for LOS and NLOS path are adopted \cite{Bai2015Coverage}. The coverage range of direct communication signals affects the decision on the cluster size, and the corresponding analysis can be extended by introducing a spherical spread range.}
\begin{equation}
	\begin{aligned}
		y_{c,k}=& \underbrace{\left\|\mathbf{d}_1\right\|^{-\frac{\alpha}{2}} \mathbf{h}_{k, 1}^H \mathbf{W}_1 \mathbf{s}_1}_{\text{intended signal}} +\underbrace{\sum\nolimits_{i=2}^{L}\left\|\mathbf{d}_i\right\|^{-\frac{\alpha}{2}} \mathbf{h}_{k, i}^H \mathbf{W}_i \mathbf{s}_i}_{\text{intra-cluster interference}}\\
		&+\underbrace{\sum\nolimits_{i=L+1}^{\infty}\left\|\mathbf{d}_i\right\|^{-\frac{\alpha}{2}} \mathbf{h}_{k, i}^H \mathbf{W}_i \mathbf{s}_i}_{\text{inter-cluster interference}} + \underbrace{n_{k,c}}_{\text{noise}},
	\end{aligned}
\end{equation}
where $\mathbf{h}^H_{k, i} \sim \mathcal{C N}\left(0, \mathbf{I}_{M_{\mathrm{t}}}\right)$ is the channel vector from BS at $\mathbf{d}_i$ to user $k$, $\mathbf{W}_i=\left[\mathbf{w}_1^i, \ldots, \mathbf{w}_K^i\right] \in C^{M_{\mathrm{t}} \times K}$ is the  precoding matrix of the BS at $\mathbf{d}_i$, and $\mathbf{s}_i=\left[s_1^i, \ldots, s_K^i\right]^T$ is the information symbol vector transmitted from the corresponding BS. We assume $\mathrm{E}\left[\mathbf{s}_i \mathbf{s}_i^H\right]=\frac{P_{\mathrm{t}}}{K} \mathbf{I}_K$ due to equal power allocation across the BSs, where $P_{\mathrm{t}}$ is the transmission power of BSs.

In light of severe interference within dense cell scenarios, this paper focuses on an interference-limited network where noise can be neglected, utilizing the signal-to-interference ratio (SIR) for performance analysis \cite{Park2016OptimalFeedback, lee2014spectral}. The BS at $\mathbf{d}_i$ uses zero-forcing (ZF) beamforming $\mathbf{W}_i$ with equal power allocated across the $K$ users, while nullifying intra-cluster S\&C interference and maximizing the desired signal strength for the $K$ users in the cluster. Then, the beamforming vector of the serving BS can be designed based on the equation below:
	\begin{equation}\label{TransmitBeamforming}
		\tilde {\bf{W}}_1 = {{\bf{H}}_1}{\left( {\bf{H}}_1 ^H {\bf{H}}_1 \right)^{-1}},
	\end{equation}
	where ${\bf{H}}_1 = [\tilde {\bf{G}}_{{1,c}}, \tilde {\bf{G}}_{{I,c}}, \tilde {\bf{G}}_{2,s},\cdots,\tilde {\bf{G}}_{\tilde Q,s}]$. Here, intra-cell interference channel is $\tilde {\bf{G}}_{{1,c}} = [( {\mathbf{h}}^H_{1, 1})^T,\cdots,( {\mathbf{h}}^H_{K, 1})^T]$, and $\tilde {\bf{G}}_{{I,c}} = [(\tilde {\mathbf{h}}^H_{1, 1})^T,\cdots,(\tilde {\mathbf{h}}^H_{N_I, 1})^T]$ represents the inter-cell interference channel between the serving BS and the users sending interference nulling requests, where $N_I$ represents the number of interference nulling requests received and accepted by the serving BS, with $N_I \le K(L-1)$. 
 Moreover, the intra-cell sensing interference channel is denoted by $\tilde {\bf{G}}_{q,s} = [(\tilde {\bf{h}}_{1,q}^H(\theta_1))^T,\cdots,(\tilde {\bf{h}}_{1,q}^H(\theta_J))^T]$, with $q \in \{2,\cdots, \tilde Q\}$, where $\tilde {\bf{h}}_{1,q}^H(\theta_j)$ denotes the interference channel between the serving BS and target $j$ detected by BS $q$. Similarly, $\tilde Q - 1$ represents the number of BSs sending the sensing interference nulling requests that are accepted by the serving BS, with $\tilde Q \le Q$.
 Therefore, the spatial resource constraint always holds, i.e., $K + N_I + J(\tilde Q-1) \le KL +J(Q-1) \le M_{\mathrm{t}}$. Let $\tilde  {\bf{W}}_i = [\tilde{\bf{w}}^i_1, \cdots, \tilde{\bf{w}}^i_K, \cdots, \tilde{\bf{w}}^i_{N_I+J(\tilde Q-1)}]$, where the first $K$ vectors $\mathbf{W}_i = [{\bf{w}}^i_1, \cdots, {\bf{w}}^i_K] \in {\mathbb{C}}^{M_{\mathrm{t}} \times K}$ form the beamforming matrix of BS $i$, where ${\bf{w}}^i_k=\tilde{\bf{w}}^i_k/\|\tilde{\bf{w}}^i_k\|$. The beamforming design assumes perfect channel knowledge.
Then, the SIR is given as follows: 
\begin{equation}\label{CommunicationSIR}
	{\rm{SIR}}_c = \frac{\frac{P_{\rm{t}}}{K}{g_{k,1}^k{{\left\| {{{\bf{d}}_1}} \right\|}^{ - \alpha }}}}{{\sum\nolimits_{i = L+1}^\infty  \frac{P_{\rm{t}}}{K} {{g_{k,i}}} {{\left\| {{{\bf{d}}_i}} \right\|}^{ - \alpha }}}},
\end{equation}
where $g_{k,1}^k=\left|\mathbf{h}_{k, 1}^H \mathbf{w}_k^1\right|^2$ denotes the effective desired signals' channel gain, $\sum\nolimits_{i=L+1}^{\infty} g_{k,i}\left\|\mathbf{d}_i\right\|^{-\alpha}$ is the remaining inter-cluster  interference, and $g_{k,i} = \sum\nolimits_{j=1}^{K} \left|\mathbf{h}_{k, i}^H \mathbf{w}_j^i\right|^2$.
In this work, ASE \cite{Ding2015AreaSpectralEfficiency, chen2020performance} is considered as the network-level communication performance metric which reflects the network throughput. Then, the mathematical expression for the communication ASE is given by
\begin{equation}\label{ASEcommunication}
	T^{\rm{ASE}}_{c}=\lambda_b K R_c,
\end{equation}
where $R_c=\mathrm{E}[\log (1+\mathrm{SIR}_c)]$ denotes the average data rate of users.

\subsection{Sensing Model}
\label{SensingModelSection}
Similarly, the typical target is denoted by target $j$ and is located at the origin, which is sensed by the nearest BS located at ${\bf{d}}_1$.\footnote{For ease notation, we use the same BS index notation to represent the serving BS's location of S\&C.}  The large-scale pathloss fading from the target to the serving BS is considered as $\left\|\mathbf{d}_1\right\|^{-2\beta}$, where $\beta \ge 2$ is the pathloss exponent from the serving BS to the typical target. Each BS designs the corresponding receive filter towards the directions of $J$ targets. To make the analysis tractable, the maximum-ratio combining (MRC) receive filter ${\bf{v}}_j^H (\theta_j) = \frac{1}{M_{\rm{r}}}[1, \cdots, e^{ -{j \pi(M_{\mathrm{r}}-1)  \cos(\theta_j) }}]^T$ is adopted in this work since MRC offers a reasonable balance between performance and analytic tractability, where $\theta_j$ denotes the direction of target $j$.

The target sensing can be implemented over a coherent processing interval consisting of $N$ snapshots in a monostatic setup. Let ${\bf{S}}_i = [\mathbf{s}_i[1],\cdots,\mathbf{s}_i[N]]$ and $\mathbf{s}_i[n]=\left[s_1^i[n], \ldots, s_K^i[n]\right]^T$, where $i$ denotes the index of transmit BS. The echo signals at the ISAC receiver are denoted by
\begin{align}\label{TargetEchoSignal}
			{\bf{y}}_s  = & {\bf{v}}_j^H(\theta_j)\underbrace { {\xi}{{{\left\| {{{\bf{d}}_1}} \right\|}^{-\beta}}} {\bf{b}}(\theta_j ){{\bf{a}}^H}(\theta_j )}_{{\text{target round-trip channel}}} {\bf{X}}_1 {\bf{P}}_1 \nonumber \\
			&+ \underbrace {\sum\nolimits_{q = 2}^Q { {\left\| {{{\bf{d}}_q} - {{\bf{d}}_1}} \right\|^{-\frac{\alpha}{2}}} {\bf{h}}_{q,1}^H {\bf{X}}_q {\bf{P}}_q }}_{{\text{intra-cluster interference}}} \nonumber \\
			&+ \underbrace{ \sum\nolimits_{{q} = Q+1}^{\infty} { {\left\| {{{\bf{d}}_q} - {{\bf{d}}_1}} \right\|}^{-\frac{\alpha}{2}}  {{\bf{h}}_{q,1}^H} {\bf{X}}_q {\bf{P}}_q} }_{\text{inter-cluster interference}} \nonumber \\
			& + \underbrace{\sum\nolimits_{x = 1}^X {\xi_{c,x}}  {\left\|  {{{\bf{d}}_{c,x}}} \right\|}^{ - \alpha }{\bf{X}}_1 {\bf{P}}_{1,x}}_{\text{clutter interference}} \nonumber \\
			&+ \underbrace {{\bf{v}}_j^H{{\bf{H}}_1} {\bf{X}}_1 {\bf{P}}_1}_{{\text{self-interference}}} + \underbrace {{\bf{v}}_j^H {\bf{N}}_{s}}_{\text{{{noise}}}},
\end{align}
where ${\bf{X}}_q = \mathbf{W}_q\mathbf{S}_q \in {\mathbb{C}}^{M_{\rm{t}} \times N}$ is the signal matrix, ${\bf{y}}_s \in {\mathbb{C}}^{1 \times N}$ denotes the echoes at the BS sensing receiver, ${\bf{P}}_1 = {\rm{diag}}(u(t-2\tau_{j,1}),\cdots,u(t-2\tau_{j,N}))$, ${\bf{P}}_q = {\rm{diag}}(u(t-2\tau_{q,1}),\cdots,u(t-2\tau_{q,N}))$, ${\tau_{j,n}}$ and ${\tau _{q,n}}$ respectively represent the transmission delay of target-serving BS link and BS $q$-serving BS link at the $n$th symbol. Furthermore, ${{\bf{d}}_{c,x}}$ denotes the clutter location, ${\xi_{c,x}}$ represents the radar cross-section (RCS) of the corresponding clutter reflection link, ${\bf{P}}_{1,x}$ represents the transmission delay of clutter reflection link, and $\xi$ denotes the RCS of the target, which can be estimated based on the prior information.\footnote{A target can be generally deemed as a distinct category of user, so it is imperative that the network is capable of discovering targets and, once identified, that their utility within the network is quantifiable and well-defined.} Here, $u(\cdot)$ represents the shifted unit step function. In (\ref{TargetEchoSignal}),
${{\bf{a}}^H}(\theta_j ) = [1, \cdots, e^{ {j \pi(M_{\mathrm{t}}-1)  \cos(\theta_j) }}]$, ${\bf{b}}(\theta_j ) = [1, \cdots, e^{ {j \pi(M_{\mathrm{r}}-1)  \cos(\theta_j) }}]^T$, $\left\| {{{\bf{d}}_q} - {{\bf{d}}_1}} \right\|$ represents the distance from the interfering BSs at ${{\bf{d}}_q}$ to the serving BS, and ${{\bf{H}}_1}$ denotes the self-interference channel from the transmit antennas to the receive antennas of the serving BS. $\mathbf{h}_{q,1}^H$ represents the equivalent channel between BS $q$ and the serving BS, encompassing both the direct interference and clutter-related interference links.

The transmit beamforming of BS $q$ in the cooperative cluster is designed based on the interference channel from BS $q$ to the serving BS (denoted by ${{\bf{G}}_{q,1}^H}$) and the receive filter ${\bf{v}}_j^H (\theta_j)$. By doing so, we only need to eliminate $J \times (Q-1)$ interference channels. For notational simplicity, let ${\bf{h}}_{q,1}^H(\theta_j) = {\bf{v}}_j^H (\theta_j) {\bf{G}}_{q,1}^H$. The receiving filtering yields a performance gain of $\kappa M_{\mathrm{r}}$, with $\kappa \in [0,1]$ denoting the mismatch loss. The value of $\kappa$ can be derived based on the characteristics of Fejer kernel. Then, with a matching filter over the symbol domain, the SIR of echo signals reflected from the typical target $j$ can be given by\footnote{The cancellation of sensing self-interference is based on the known channel ${{\bf{H}}_1}$ and transmit data $\mathbf{s}_1(t)$. Similarly, the clutter interference can be effectively reduced or eliminated based on previous observations since the signals are known at the BS receiver. We also provide an extension method for more detailed consideration of clutter interference in Section \ref{SensingPerformance}.
}
\begin{equation}\label{SensingSIR}
	{\rm{SIR}}_s =  \Delta T \kappa M_{\mathrm{r}} |\xi|^2 \frac{{h_{j,1}^t{{\left\| {{{\bf{d}}_1}} \right\|}^{ - 2\beta }}}}{{\sum\nolimits_{q = Q+1}^\infty  {{h_{q,1}}} {{\left\| {{{\bf{d}}_q} - {{\bf{d}}_1}} \right\|}^{ - \alpha }}}},
\end{equation}
where $h_{j, 1}^t=\sum\nolimits_{k = 1}^K \left|{{\bf{a}}^H}(\theta_j ) \mathbf{w}^1_k\right|^2$ is the effective signal channel gain from the serving BS towards the target's direction, ${\sum\nolimits_{q = Q+1}^\infty  {{h_{q,1}}} {{\left\| {{{\bf{d}}_q} - {{\bf{d}}_1}} \right\|}^{ - \alpha }}}$ is the inter-cluster interference, and ${{h_{q,1}}} = \sum\nolimits_{k = 1}^K \left|{\bf{h}}_{q,1}^H(\theta_j) {\mathbf{w}}^q_k\right|^2$. In (\ref{SensingSIR}), $\Delta T$ denotes the matching filter gain.
Here, for the traditional measurement algorithms (MUSIC, Capon \cite{Stoica1990Maximum}), the maximum distinguishable target's number $J_{\max}$ is limited by the number of receive antennas and the processing time requirements. 

In the literature, it was noticed that the greater the information rate between the target impulse response and the measurement, the better the capability of radar to estimate the parameters describing the target \cite{Yang2007MIMO, Tang2019Spectrally}, and it was proved that the distortion rate at the radar information rate lower bounds the average distortion \cite{Liu2023Deterministic, Dong2023Rethinking}. Hence, radar information rate is an appropriate metric to characterize the estimation accuracy of the system parameters. Several works proved this fact, maximizing the mutual information between the random target ensemble and the target reflections improves the performance in \cite{Yang2007MIMO}. The network-level sensing ASE can be naturally determined through the sensing rate. Here, we estimate ${\bm{\zeta}} \in {\mathbb{R}}^{Z}$ from the observation ${\bf{Y}}_s$ at the sensing receiver. In particular, ${\bm{\zeta}}$ represents the target parameter vector, e.g., angle, range, and velocity, having $Z$ dimension of the target parameters. Let ${\bf{H}}_s =\xi \cdot {\bf{b}}(\theta) \cdot {\bf{a}}(\theta)^H$ denote the target channel. In general, ${\bm{\zeta}} \to {\bf{H}}_s \to {\bf{Y}}_s$ forms a Markov chain \cite{Liu2023Deterministic}, and thus the mutual information between ${\bf{Y}}_s$ and ${\bm{\zeta}}$ equals to that between ${\bf{Y}}_s$ and ${\bf{H}}_s$. Then, we define the sensing ASE to jointly describe the network-level performance of ISAC. Please refer to Appendix A for a detailed derivation of the sensing ASE expression. The mathematical expression for sensing ASE is\footnote{Different from \cite{Olson2022RethinkingCells} dealing with a statistical posterior Cramer-Rao bound (PCRB) for unknown target parameters, our work focuses on the CRB where the system operates with deterministic target parameter settings.} 
\begin{equation}\label{ASEsensing}
	T^{\rm{ASE}}_{s}=\lambda_b J R_s,
\end{equation}
where $R_s=\mathrm{E}[\log (1+\mathrm{SIR}_s)]$ is the target' average radar information rate.

\section{Communication Performance}
\label{CommunicationPerformance}
In this section, we aim to characterize the communication rate analytically via the tools of SG.

\subsection{Expression of Communication Rate}
According to \cite{hamdi2010useful}, for uncorrelated variables $X$ and $Y$, we have
\begin{equation}\label{LemmaEquationLa}
	{\rm{E}}\left[ {\log \left( {1 + \frac{X}{Y}} \right)} \right] = \int_0^\infty  {\frac{1}{z}} \left( {1 - {\rm{E}}\left[{e^{ - z \left[ X\right] }}\right]} \right){\rm{E}}\left[{e^{ - z\left[ Y \right]}}\right]dz.
\end{equation}
Then, under a given distance $r$ from the typical user to the serving BS, the conditional expectation can be derived as follows:
\begin{equation}
	\begin{aligned}
		{\rm{E}}\left[ {\log \left( {1 + \mathrm{SIR}_c} \right)} \big| r \right] & \! = \! {\rm{E}} \! \left[ {\log \left( \! {1 + \frac{{g_{k,1}^k}}{{\sum\nolimits_{{{i}} = L+1}^\infty  {{g_{k,i}}} {{\left\| {{{\bf{d}}_{{i}}}} \right\|}^{ - \alpha }}{r^\alpha }}}} \right)} \right] \\
		&= \int_0^\infty  {\frac{{1 - {\rm{E}}\left[ {{e^{ - zg_{k,1}^k}}} \right]}}{z}} {\rm{E}}\left[ {{e^{ - z I_{\rm{C}}}}} \right]dz,
	\end{aligned}
\end{equation}
where $ I_{\rm{C}} = \sum\nolimits_{{{i}} = L+1}^\infty  {{g_{k,i}}} {{\left\| {{{\bf{d}}_{{i}}}} \right\|}^{ - \alpha }}{r^\alpha }$.
As defined in (\ref{CommunicationSIR}), $g_{k, 1}^k$ is the effective desired signal channel gain, $g_{k, 1}^k \sim \Gamma \left( M_{\mathrm{t}} - KL - J(Q-1) +1,1\right)$ \cite{Hosseini2016Stochastic}, and $g_{k,i} \sim \Gamma \left(K,1\right)$. Based on the above discussion, the useful signal power can be given by 
\begin{equation}\label{UsefulSignalPower}
	\begin{aligned}
		{\rm{E}}\left[ {{e^{ - zg_{k,1}^k}}} \right] &\simeq \int_0^\infty  {\frac{{{e^{ - zx}}{x^{{M_{\mathrm{t}} - KL - J(Q-1) }}}{e^x}}}{{\Gamma \left( {M_{\mathrm{t}} - KL - J(Q-1) + 1} \right)}}} {\rm{d}}x \\
		&= {\left( {1 + z} \right)^{ - ({M_{\mathrm{t}} - KL - J(Q-1) + 1})}},
	\end{aligned}
\end{equation}
where ${M_{\mathrm{t}} - KL - J(Q-1) + 1}$ refers to the diversity gain to each user after consuming a certain DoF to eliminate interference and achieve multiplexing gain. Then, the original expression for $R_c$ is derived in Theorem \ref{CommunicationTightExpression}.

\begin{theorem}\label{CommunicationTightExpression}
The communication performance can be given by
\begin{equation}\label{TightCommunicationExpression}
	\begin{aligned}
		R_c =& \int_0^\infty  {\frac{{1 - {{\left( {1 + z} \right)}^{ - \left( {M_{\mathrm{t}} - LK - J(Q-1) + 1} \right)}}}}{z}} \\
		& \int_0^1 {\frac{{2\left( {L - 1} \right)\eta_L {{\left( {1 - {\eta_L ^2}} \right)}^{L - 2}}}}{ {\rm{H}}\left( {z,K,\alpha ,\eta_L } \right)  + 1 }d\eta_L }dz,
	\end{aligned}
\end{equation}
where ${\rm{H}}\left( {z,K,\alpha ,\eta_L } \right) =K{z^{\frac{2}{\alpha }}}B\left( {\frac{z}{{z + {\eta_L ^{ - \alpha }}}},1 - \frac{2}{\alpha },K + \frac{2}{\alpha }} \right) + \frac{1}{{{\eta_L ^2}}}\left( {\frac{1}{{{{\left( {1 + z{\eta_L ^\alpha }} \right)}^K}}} - 1} \right)$.
\end{theorem}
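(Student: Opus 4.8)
The plan is to start from the integral identity in \eqref{LemmaEquationLa} applied conditionally on the serving-link distance $r=\|\mathbf{d}_1\|$, so that the rate factorizes under the $dz/z$ integral into a \emph{signal} term $1-\mathrm{E}[e^{-zg_{k,1}^k}]$ and an \emph{interference} term $\mathrm{E}[e^{-zI_{\mathrm C}}]$. The signal term is already pinned down in \eqref{UsefulSignalPower} through the Gamma law $g_{k,1}^k\sim\Gamma(M_{\mathrm t}-KL-J(Q-1)+1,1)$, giving the numerator $1-(1+z)^{-(M_{\mathrm t}-LK-J(Q-1)+1)}$. The remaining work is therefore to evaluate $\mathrm{E}[e^{-zI_{\mathrm C}}]$ and average it over the relevant distances, which is exactly what produces the inner $\eta_L$-integral in \eqref{TightCommunicationExpression}.

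For the interference term I would condition additionally on the cluster-edge distance $\|\mathbf{d}_L\|$, since nulling removes all contributions from the $L$ nearest BSs and the residual field $\{\mathbf{d}_i\}_{i\ge L+1}$ is a PPP of intensity $\lambda_b$ on the complement of ${\cal O}(0,\|\mathbf{d}_L\|)$. Invoking the PGFL of the PPP together with $g_{k,i}\sim\Gamma(K,1)$ (so $\mathrm{E}_g[e^{-sg}]=(1+s)^{-K}$) turns the Laplace transform into $\exp\!\big(-2\pi\lambda_b\int_{\|\mathbf{d}_L\|}^{\infty}(1-(1+zr^{\alpha}u^{-\alpha})^{-K})\,u\,du\big)$. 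The key reduction is the substitution $w=r/u$ followed by $s=zw^{\alpha}/(1+zw^{\alpha})$, so that after one integration by parts the radial integral collapses to exactly $\tfrac12 H(z,K,\alpha,\eta_L)$ with $\eta_L=r/\|\mathbf{d}_L\|$; this is where the incomplete Beta function $B(\tfrac{z}{z+\eta_L^{-\alpha}},1-\tfrac2\alpha,K+\tfrac2\alpha)$ and the companion rational term appear, recovering the stated $H$.

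The final ingredient is the distance law. Writing $t_n=\pi\lambda_b\|\mathbf{d}_n\|^2$, the Poisson representation gives $t_n=\sum_{m\le n}U_m$ with $U_m$ i.i.d.\ unit exponentials, so the scale $t_1=\pi\lambda_b r^2$ is exponential while the ratio $\eta_L^2=t_1/t_L\sim\mathrm{Beta}(1,L-1)$; a change of variables yields the density $2(L-1)\eta_L(1-\eta_L^2)^{L-2}$ quoted in the statement. Averaging the interference Laplace transform, which after the reduction reads $\exp(-\pi\lambda_b r^2 H(\eta_L))=\exp(-t_1H)$, over the exponential scale $t_1$ produces the factor $1/(1+H)$; a further average over the marginal of $\eta_L$ assembles the inner integral, and combining with the signal term under the outer $dz/z$ integration gives \eqref{TightCommunicationExpression}.

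I expect the main obstacle to be the closed-form evaluation of the residual spatial integral: matching the integration-by-parts boundary terms to the precise form of $H$ requires $\alpha>2$ so that the inner boundary term vanishes, and one must keep $\eta_L$ and the absolute scale $t_1$ consistently parametrized when the exponential average is taken. A secondary subtlety is decoupling the serving-distance scale $t_1$ from the ratio $\eta_L$ in the final averaging, which is what renders the single $1/(1+H)$ factor and hence the tractable $\eta_L$-integral.
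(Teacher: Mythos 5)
Your proposal follows essentially the same route as the paper's Appendix B: the Hamdi identity splits the rate into the Gamma-distributed signal term and the interference Laplace transform, the PGFL over the PPP outside ${\cal O}(0,\|\mathbf{d}_L\|)$ with $g_{k,i}\sim\Gamma(K,1)$ yields the exponential of $-\pi\lambda_b r^2 {\rm{H}}(z,K,\alpha,\eta_L)$ after the same change of variables, and averaging over the exponential scale and the $\mathrm{Beta}(1,L-1)$ ratio density gives the $1/({\rm{H}}+1)$ factor and the $\eta_L$-integral. Your derivation of the ratio density from the Poisson/Gamma representation replaces the paper's citation of an external lemma, and you correctly flag the decoupling of the scale from $\eta_L$ that the paper's marginalization implicitly assumes; otherwise the two arguments coincide.
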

\begin{proof}
	Please refer to Appendix B. 
\end{proof}

According to (\ref{TightCommunicationExpression}), the average data rate is independent of the BS density $\lambda_b$, and thus the communication ASE $T^{\rm{ASE}}_c$ increases linearly with the BS density. Moreover, it can be observed that the communication performance decreases monotonically as $J(Q-1)$ increases. 
Since $R_c$ in Theorem \ref{CommunicationTightExpression} has complicated relationships with other key system parameters (e.g., $M_{\mathrm{t}}$, $K$, $L$, $J$, and $Q$), we seek for a more tractable expression for $R_c$ for the purpose of exploring the properties between the optimal clustering sizes for S\&C ($Q$/$L$) and the optimal number of targets/users ($J$/$K$). 

\subsection{Approximation of Communication Rate}
To obtain a more tractable expression of the communication rate, we resort to a simple but tight approximation as shown in Appendix C in detail. Before deriving the approximate expression for users’ average data rate, we introduce the mean interference-to-signal-ratio (MISR)-based gain method \cite{Haenggi2014MISR}, which provides a simple and effective way to approximate the performance of a complicated system in comparison with a baseline. Specifically, according to equation (5) in \cite{Haenggi2014MISR}, the coverage probability can be approximated by $P_{L}(\gamma) = P_1(\gamma/G_L)$, where $\gamma$ is the SIR threshold and $G_L$ denotes the effective gain by adopting interference nulling in the cooperative cluster. As a result, the MISR-based method approximates the coverage probability based on the horizontal gap $G_L$ between the SIR distribution to be solved and a reference SIR distribution. In the following, we exploit this approximation in terms of SIR distribution to analyze the average rate and obtain a more tractable expression in Theorem 2.

\begin{theorem}\label{CommunicationLooseExpression}
	The communication rate can be approximated as
	\begin{equation}\label{ApproximationCommunicationRate}
			\tilde R_c = \int_0^\infty  {\frac{{ {1 - {e^{ - z {\frac{{\Gamma \left( {L + \frac{\alpha }{2}} \right)}\left( {{M_{\mathrm{t}}-J(Q-1)+1} - KL} \right)}{K {\Gamma \left( {L + 1} \right)\Gamma \left( {1 + \frac{\alpha }{2}} \right)}}} }}} }}{{z {\rm{F}}(z,\alpha)}}} dz,
	\end{equation}
	where ${\rm{F}}\left( {z,\alpha} \right) = {\left( {{e^{ - z}} - 1} \right) + {z^{\frac{2}{\alpha} }} \Gamma \left( {1 - \frac{2}{\alpha },z} \right)}$.
\end{theorem}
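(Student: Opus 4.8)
The plan is to sidestep the complicated inner integral over $\eta_L$ in Theorem~\ref{CommunicationTightExpression} by invoking the MISR-based horizontal-shift approximation $P_L(\gamma)\approx P_1(\gamma/G_L)$ of \cite{Haenggi2014MISR} and re-evaluating the rate integral with a single effective SIR gain. The guiding observation is that, relative to a single-antenna, single-user, no-cooperation reference network, the nulling scheme endows the desired link with three multiplicative gains: (i) the array/diversity gain carried by the effective channel $g_{k,1}^k\sim\Gamma(N,1)$ identified in (\ref{UsefulSignalPower}), where $N = M_{\rm{t}}-KL-J(Q-1)+1$; (ii) the equal power split $1/K$ across the served users seen in (\ref{CommunicationSIR}); and (iii) the MISR gain $G_L$ earned by silencing the $L-1$ nearest interferers. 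I would argue that these combine into one effective SIR scaling $G_L N/K$, so that the reference rate integral is simply evaluated at a shifted argument.

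The central quantitative step is computing $G_L$. Following \cite{Haenggi2014MISR}, $G_L={\rm{MISR}}_1/{\rm{MISR}}_L$ with ${\rm{MISR}}_L={\rm{E}}[\sum_{i=L+1}^\infty(\|{\bf{d}}_1\|/\|{\bf{d}}_i\|)^\alpha]$, the mean interference-to-signal ratio when the cluster of the $L$ nearest BSs is nulled. I would map the squared BS distances onto the arrival epochs of a unit-rate Poisson process, $T_i=\pi\lambda_b\|{\bf{d}}_i\|^2\sim\Gamma(i,1)$, so that integrating out the field beyond $T_L$ gives ${\rm{MISR}}_L=\frac{2}{\alpha-2}{\rm{E}}[T_1^{\alpha/2}T_L^{1-\alpha/2}]$. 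Writing $T_1^{\alpha/2}T_L^{1-\alpha/2}=T_L(T_1/T_L)^{\alpha/2}$ and invoking the gamma--beta decomposition $T_1/T_L\sim{\rm{Beta}}(1,L-1)$ independent of $T_L\sim\Gamma(L,1)$ collapses the expectation to ${\rm{E}}[T_L]\,{\rm{E}}[(T_1/T_L)^{\alpha/2}]$, yielding ${\rm{MISR}}_L=\frac{2}{\alpha-2}\frac{\Gamma(L+1)\Gamma(1+\alpha/2)}{\Gamma(L+\alpha/2)}$. The prefactor cancels in the ratio, so $G_L=\frac{\Gamma(L+\alpha/2)}{\Gamma(L+1)\Gamma(1+\alpha/2)}$, which is exactly the factor multiplying $N/K$ in the exponent of (\ref{ApproximationCommunicationRate}) and correctly reduces to $1$ when $L=1$.

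With $G_L$ in hand I would assemble $\tilde R_c$ through the Hamdi identity (\ref{LemmaEquationLa}). For the numerator, since the diversity order $N$ is typically large the Gamma-distributed desired gain concentrates about its mean, so I replace ${\rm{E}}[e^{-zg_{k,1}^k}]$ by its deterministic mean-gain surrogate and fold in the $1/K$ power split and the $G_L$ boost, producing the term $1-e^{-zG_LN/K}$. For the interference, ${\rm{E}}[e^{-zI_{\rm{C}}}]$ of the aggregate inter-cell field of the reference PPP (single antenna, no nulling) is obtained by the standard conditioning-on-nearest-distance argument; carrying out the radial integral over the interfering BSs and averaging over the serving distance produces the function ${\rm{F}}(z,\alpha)$ appearing in the denominator of (\ref{ApproximationCommunicationRate}). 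Collecting the two factors recovers the stated integral.

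The hard part will be justifying the tightness of the \emph{combined} approximation rather than any single integral: the MISR shift of \cite{Haenggi2014MISR} is derived for SISO networks and rests on matching the near-origin slope of the SIR distribution, so I must verify that the multi-antenna gain concentration and the $1/K$ power split enter multiplicatively alongside $G_L$ without spawning cross terms, and that replacing the full order-statistics average of Theorem~\ref{CommunicationTightExpression} by one horizontal shift stays accurate across the whole SIR range and not merely in the tail. A secondary technical requirement is $\alpha>2$, which is needed for the integrability of $T_1^{\alpha/2}T_L^{1-\alpha/2}$ and the convergence of the radial interference integral that underlie both $G_L$ and ${\rm{F}}(z,\alpha)$.
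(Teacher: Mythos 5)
Your proposal matches the paper's own Appendix C proof in all essentials: both replace the desired and interfering channel gains by their means inside the Hamdi identity to obtain the factor $1-e^{-zN/K}$ with $N=M_{\mathrm{t}}-KL-J(Q-1)+1$, both compute the $L=1$ reference interference Laplace transform and average over the serving distance to produce $1/{\rm{F}}(z,\alpha)$, and both apply the MISR horizontal shift $P_L(\gamma)=P_1(\gamma/G_L)$ with $G_L=\Gamma(L+\alpha/2)/(\Gamma(L+1)\Gamma(1+\alpha/2))$. The only difference is that you derive $G_L$ from first principles via the gamma--beta decomposition of the Poisson arrival epochs, where the paper simply cites \cite{Haenggi2014MISR}; your value agrees, so the argument is sound.
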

\begin{proof}
	Please refer to Appendix C. 
\end{proof}

Theorem \ref{CommunicationLooseExpression} provides a more tractable form of the data rate, and, based on which, the communication ASE $T^{\rm{ASE}}_c$ can be approximated as
\begin{equation}\label{ApproximateASEcommunication}
T^{\rm{ASE}}_c = \lambda_b K\int_0^\infty  {\frac{{1 - {e^{ - z {Y(K,L,J,Q)} }}}}{{z {\rm{F}}(z,\alpha )}}} dz,
\end{equation}
where $Y(K,L,J,Q) = {\frac{{\Gamma \left( {L + \frac{\alpha }{2}} \right)}\left( {{M_{\mathrm{t}}-J(Q-1)+1} - K L} \right)}{K {\Gamma \left( {L + 1} \right) \Gamma \left( {1 + \frac{\alpha }{2}} \right)}}}$. It is verified that (\ref{ApproximateASEcommunication}) achieves a good approximation by Monte Carlo simulations, as shown in Section \ref{SimulationsSection}. It can be observed that only the term $Y(K,L,J,Q)$ in (\ref{ApproximateASEcommunication}) involves cooperative cluster sizes $L$ and $Q$. In addition, the optimal $L^*$ must belong to the range $[1, \frac{M_{\mathrm{t}}-J(Q-1)+1}{K}]$. Thus, with any given $K$, the optimal cluster size can be obtained by calculating $\frac{\partial Y(K,L,J,Q)}{\partial L} = 0$.

However, it is still difficult to analyze the optimal cooperative cluster size, since the specific relationship of the Gamma function on $L$ is unclear if $\alpha$ takes any value. To draw useful insights, we prove the optimal number of cooperative clusters under some typical parameters in wireless cellular networks.

\begin{Pro}\label{AlphaEqual2}
	When $\alpha \to 2$, the optimal $L^* = 1$ when $T^{\rm{ASE}}_c$ is maximized.
\end{Pro}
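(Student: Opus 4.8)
The plan is to reduce the maximization of $T^{\rm{ASE}}_c$ over the communication cluster size $L$ to the maximization of the single scalar $Y(K,L,J,Q)$ appearing in (\ref{ApproximateASEcommunication}), and then to show that in the limit $\alpha\to 2$ this scalar is strictly decreasing in $L$, forcing the optimum to the left endpoint $L=1$.

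First I would argue that $T^{\rm{ASE}}_c$ is a monotonically increasing function of $Y$. The prefactor $\lambda_b K$ does not involve $L$, and for each fixed $z>0$ the integrand $\frac{1-e^{-zY}}{z\,{\rm{F}}(z,\alpha)}$ has $Y$-derivative $\frac{e^{-zY}}{{\rm{F}}(z,\alpha)}$, which is positive provided ${\rm{F}}(z,\alpha)>0$. Since $\frac{1}{z}\left(1-e^{-zY}\right)$ grows with $Y$ while the $L$-independent weight ${\rm{F}}(z,\alpha)$ merely rescales it, differentiating under the integral sign shows $\partial T^{\rm{ASE}}_c/\partial Y>0$. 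Consequently, for fixed $K$, $J$, $Q$, maximizing $T^{\rm{ASE}}_c$ over the feasible set $L\in[1,\frac{M_{\mathrm{t}}-J(Q-1)+1}{K}]$ is equivalent to maximizing $Y(K,L,J,Q)$ over the same set.

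The crux is therefore the positivity of ${\rm{F}}(z,\alpha)=\left(e^{-z}-1\right)+z^{2/\alpha}\,\Gamma\!\left(1-\tfrac{2}{\alpha},z\right)$, recalling the paper's convention that $\Gamma(a,b)$ is the lower incomplete gamma. Because $e^{-z}-1<0$ while the second term is positive, this needs a genuine argument rather than inspection. I would establish it from the endpoint behaviour together with the absence of interior sign changes: near $z=0$, using $\Gamma(1-\tfrac{2}{\alpha},z)\approx z^{1-2/\alpha}/(1-\tfrac{2}{\alpha})$ gives ${\rm{F}}(z,\alpha)\approx z\cdot\frac{2/\alpha}{1-2/\alpha}>0$ for $\alpha>2$; as $z\to\infty$ the incomplete gamma tends to the complete gamma and ${\rm{F}}(z,\alpha)\to\infty$; and one checks ${\rm{F}}(z,\alpha)$ does not dip below zero in between, e.g. by noting it is proportional to the nonnegative interference Laplace-transform quantity from which it was derived in Appendix C, or by a direct sign analysis of $\partial_z{\rm{F}}$. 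This positivity is the step I expect to be the main obstacle, as everything downstream is then routine.

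Finally I would take the limit $\alpha\to 2$ in $Y$. Writing $A:=M_{\mathrm{t}}-J(Q-1)+1$, we have $Y=\frac{\Gamma(L+\alpha/2)}{K\,\Gamma(L+1)\,\Gamma(1+\alpha/2)}\left(A-KL\right)$; as $\alpha/2\to 1$ we get $\Gamma(L+\alpha/2)\to\Gamma(L+1)$ and $\Gamma(1+\alpha/2)\to\Gamma(2)=1$, so the Gamma ratio collapses to unity and $Y\to\frac{A-KL}{K}$. This limiting expression is affine in $L$ with negative slope, hence strictly decreasing on $[1,A/K]$, so its maximum over the feasible values of $L$ is attained at the smallest admissible value $L=1$. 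Since the feasible set contains only finitely many integers and the strict ordering $Y|_{L=1}>Y|_{L\ge 2}$ holds in the limit, continuity of $Y$ in $\alpha$ transfers this ordering to all $\alpha$ sufficiently close to $2$, yielding $L^{*}=1$ and completing the proof.
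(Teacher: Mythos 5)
Your proof is correct and follows essentially the same route as the paper's: as $\alpha \to 2$ the Gamma ratio in $Y(K,L,J,Q)$ collapses to unity, leaving an expression affine and strictly decreasing in $L$, so the maximum of $T^{\rm{ASE}}_c$ is attained at $L=1$. The only addition is your explicit justification that $T^{\rm{ASE}}_c$ is increasing in $Y$ via the positivity of ${\rm{F}}(z,\alpha)$ (which the paper takes for granted, and which indeed holds since ${\rm{F}}$ arises as the reciprocal of a Laplace transform of a nonnegative interference term and is therefore at least one); this is a refinement of the same argument rather than a different approach.
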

\begin{proof}
	When $\alpha \to 2$, $\frac{{\Gamma \left( {L + \frac{\alpha }{2}} \right)}}{{\Gamma \left( {L + 1} \right)\Gamma \left( {1 + \frac{\alpha }{2}} \right)}} \approx 1$. Then, $T^{\rm{ASE}}_c$ decreases monotonically as $L$ increases. Thus, the optimal $L^* = 1$ when the ASE is maximized.
\end{proof}

Proposition \ref{AlphaEqual2} illustrates that when the pathloss coefficient $\alpha$ is close to 2, it is not necessary for multi-antenna BS to use its spatial dimensions to null interference in neighbouring cells. 

\begin{Pro}\label{NoCooperativeNecessary}
	When $\alpha = 4$, the optimal $L^* = 1$ when $T^{\rm{ASE}}_c$ is maximized.
\end{Pro}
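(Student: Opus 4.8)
The plan is to first reduce the bivariate ASE maximization to a single scalar problem and then show the optimal cluster size collapses to $L=1$. Writing $C\triangleq M_{\mathrm{t}}-J(Q-1)+1$ and specializing Theorem~\ref{CommunicationLooseExpression} to $\alpha=4$, the Gamma ratio simplifies to $\frac{\Gamma(L+2)}{\Gamma(L+1)\Gamma(3)}=\frac{L+1}{2}$, so that $Y(K,L,J,Q)=\frac{(L+1)(C-KL)}{2K}$. Let $g(\cdot)$ denote the rate integral in (\ref{ApproximationCommunicationRate}), so $T^{\rm{ASE}}_c=\lambda_b K\,g(Y)$; since $\frac{1-e^{-zY}}{z\,{\rm{F}}(z,\alpha)}$ is strictly increasing in $Y$ for each $z>0$, at fixed $K$ the ASE tracks $Y$. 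I would first record the contrast with Proposition~\ref{AlphaEqual2}: here $Y$ is a downward parabola in $L$ with vertex $L_v=\frac{C-K}{2K}$, which exceeds $1$ whenever $C>3K$. Hence the claim is false at fixed small $K$ and must be read as a statement about the joint optimum over $(K,L)$, i.e. the spatial resources are spent on multiplexing rather than on nulling.

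Next I would eliminate the prefactor by a scale-invariant reparametrization. Setting $\rho=KL/C\in(0,1)$ gives $K=\rho C/L$ and $Y=\frac{L(L+1)(1-\rho)}{2\rho}$, so that $T^{\rm{ASE}}_c=\lambda_b\frac{\rho C}{L}\,g(Y)$. Because $Y$ is monotone in $\rho$ at fixed $L$, I can treat $y=Y$ as the free variable; solving $\rho=\frac{L(L+1)}{L(L+1)+2y}$ and substituting yields the compact form
\begin{equation}
\Psi(L)\triangleq\max_{\rho}\frac{T^{\rm{ASE}}_c}{\lambda_b C}=\max_{y>0}\frac{g(y)}{L+\frac{2y}{L+1}}.
\end{equation}
The proposition is then equivalent to $\Psi(1)\ge\Psi(L)$ for all integers $L\ge 2$.

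The core comparison is elementary once the location of the maximizer is controlled. If $y^\star(L)$ is the optimizer in $\Psi(L)$, differentiating gives the first-order condition $\frac{y^\star g'(y^\star)}{g(y^\star)}=\frac{L+1}{2L+1}$. The key lemma I would prove is $y^\star(L)\le L+1$ for every $L\ge 2$. Granting this, the identity $L+\frac{2y}{L+1}-(1+y)=(L-1)(1-\frac{y}{L+1})\ge 0$ for $y\le L+1$ shows that the denominator at cluster size $L$ dominates the $L=1$ denominator evaluated at the same $y$, whence
\begin{equation}
\Psi(L)=\frac{g(y^\star)}{L+\frac{2y^\star}{L+1}}\le\frac{g(y^\star)}{1+y^\star}\le\max_{y>0}\frac{g(y)}{1+y}=\Psi(1),
\end{equation}
which is exactly the desired inequality.

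Establishing the bound $y^\star(L)\le L+1$, i.e. the elasticity estimate $\frac{(L+1)g'(L+1)}{g(L+1)}\le\frac{L+1}{2L+1}$, is the step I expect to be the main obstacle, because $g$ and $g'$ are available only as the Laplace-type integrals $g(y)=\int_0^\infty\frac{1-e^{-zy}}{z{\rm{F}}(z,4)}dz$ and $g'(y)=\int_0^\infty\frac{e^{-zy}}{{\rm{F}}(z,4)}dz$, with no elementary closed form. Since $g$ is strictly concave, the elasticity $\frac{yg'(y)}{g(y)}$ is non-increasing in $y$ (which I would verify directly from these integrals), so $y^\star(L)\le L+1$ is equivalent to $\frac{(L+1)g'(L+1)}{g(L+1)}\le\frac{L+1}{2L+1}$, the binding case being the smallest cluster $L=2$, namely $\frac{3g'(3)}{g(3)}\le\frac35$. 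I would bound these integrals using the asymptotics ${\rm{F}}(z,4)=(e^{-z}-1)+\sqrt{z}\,\Gamma(\tfrac12,z)\sim z$ as $z\to0$ and $\sim\sqrt{\pi z}$ as $z\to\infty$, together with a monotone bracketing of the integrand, to pin the elasticity below $\frac{L+1}{2L+1}$. Finally I would check that returning from the continuous relaxation to integer $(K,L)$ introduces no better nontrivial cluster, since $\Psi(L)$ is decreasing across $L\ge1$ once the bound holds.
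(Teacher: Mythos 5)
Your reduction is clean: the reparametrization $\rho = KL/C$ (with $C=M_{\mathrm{t}}-J(Q-1)+1$), the passage to $\Psi(L)=\max_{y>0} g(y)\big/\big(L+\tfrac{2y}{L+1}\big)$, and the identity $L+\tfrac{2y}{L+1}-(1+y)=(L-1)\big(1-\tfrac{y}{L+1}\big)$ are all correct, and your observation that the claim only holds at the \emph{joint} optimum over $(K,L)$ matches the paper's reading. However, there is a genuine gap at exactly the step you flag: the lemma $y^\star(L)\le L+1$ is never established, and the entire comparison $\Psi(L)\le\Psi(1)$ collapses without it. Worse, the first-order condition you use to reformulate that lemma is miscomputed. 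Differentiating $g(y)/\big(L+\tfrac{2y}{L+1}\big)$ gives $\frac{g'(y^\star)}{g(y^\star)}=\frac{2}{L(L+1)+2y^\star}$, hence $\frac{y^\star g'(y^\star)}{g(y^\star)}=\frac{2y^\star}{L(L+1)+2y^\star}$, which is not the constant $\frac{L+1}{2L+1}$ you state (that value would force $y^\star=(L+1)^2/2$). Consequently the elasticity test you propose to check, $\frac{3g'(3)}{g(3)}\le\frac{3}{5}$ for $L=2$, is not the right target: the condition actually equivalent to $y^\star(2)\le 3$ is $\frac{g'(L+1)}{g(L+1)}\le\frac{2}{(L+1)(L+2)}$, i.e.\ $\frac{3g'(3)}{g(3)}\le\frac{1}{2}$ --- strictly stronger than what you set out to verify, and verified nowhere, even numerically. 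Since $g$ is available only as a Laplace-type integral, this is not a formality; the proposition hinges on it.

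The paper's own proof sidesteps this obstacle entirely by working with the quadratic $y(L)=\frac{(L+1)(C-KL)}{2K}$ directly: if the vertex $L_v=\frac{C-K}{2K}$ is at most $1$, the optimum is $L=1$ immediately; if $L_v>1$ (i.e.\ $K<C/3$), substituting the vertex value and differentiating in $v=K/C$ shows $T^{\rm{ASE}}_c$ is increasing in $K$ throughout that regime, so $K$ is pushed up to $C/3$, at which point $L_v=1$ again. That is a monotonicity argument in $K$ which never requires locating or bounding the maximizer of the rate integral $g$ --- precisely the quantity your route must control. Either adopt that argument, or, if you want to keep your decomposition, prove the corrected bound $\frac{g'(L+1)}{g(L+1)}\le\frac{2}{(L+1)(L+2)}$ for all $L\ge 2$ (together with uniqueness of $y^\star$); note this looks delicate at $L=2$, where it may be close to tight.
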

\begin{proof}
	Please refer to Appendix D.
\end{proof}

According to Propositions \ref{AlphaEqual2} and \ref{NoCooperativeNecessary}, when $\alpha \to 2$ or $\alpha = 4$, the ASE of communication is generally a decreasing function of the cooperative cluster size $L$. This can be explained by the fact that the communication performance improvement brought by interference nulling cannot compensate for the performance loss caused by the reduction of the diversity/multiplexing gain for the average throughput of the networks. Thus, it is optimal retain all spatial dimensions for multiplexing and diversity. 

Interestingly, contrary to communication, it is verified in simulations that interference nulling is preferred for improving the sensing ASE, as shown in Fig.~\ref{figure7}.

\begin{figure}[t]
	\centering
	\includegraphics[width=7cm]{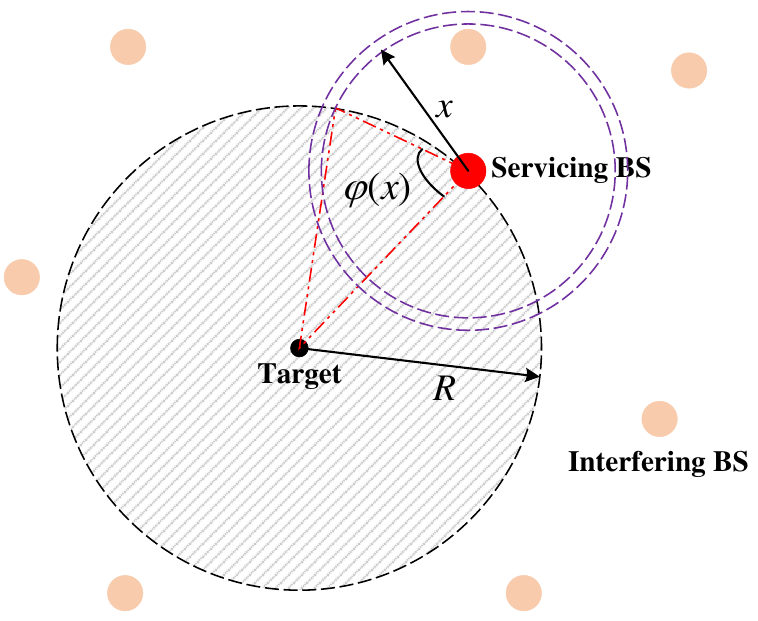}
	\caption{Illustration of sensing interference hole.}
	\label{figure3}
\end{figure}

\section{Sensing Performance}
\label{SensingPerformance}
In this section, we initially note a distinctive hole region in the distribution of sensing interference distances, and then we propose a geometry method to calculate the accurate probability density function (PDF) of interference distances. First, we analyze the distribution of  the effective signal channel gain and the interference channel gain as follows. 
\begin{thm}\label{EffectiveChannelGain}
	The expected transmit beamforming gain at the target's direction and sensing interference channel gain between BSs can be given bt ${\rm{E}}[h^t_{j, 1}]=K$ and ${\rm{E}}[h_{i, 1}]=K$.
\end{thm}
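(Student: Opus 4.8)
The plan is to recognize that both quantities are sums of $K$ terms of the common form $|\mathbf{c}^H\mathbf{w}_k|^2$, where each $\mathbf{w}_k$ is a \emph{unit-norm} beamforming vector (recall $\mathbf{w}_k^i=\tilde{\mathbf{w}}_k^i/\|\tilde{\mathbf{w}}_k^i\|$), and to show that the per-term expectation equals one in both cases, so that summing over $k=1,\ldots,K$ yields $K$. The two statements differ only in the choice of the ``probe'' vector $\mathbf{c}$: for $h_{j,1}^t$ it is the deterministic transmit steering vector $\mathbf{a}(\theta_j)$, whereas for $h_{i,1}$ it is the random post-combining interference channel $\mathbf{h}_{i,1}(\theta_j)$.

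First I would treat $\mathrm{E}[h_{j,1}^t]=\mathrm{E}\big[\sum_{k=1}^{K}|\mathbf{a}^H(\theta_j)\mathbf{w}_k^1|^2\big]$. The key observation is that each ZF beamformer $\mathbf{w}_k^1$ is a unit vector obtained from the i.i.d. Rayleigh channels of the serving BS through a rotation-equivariant (zero-forcing) construction; by the unitary invariance of the underlying complex Gaussian channels, $\mathbf{w}_k^1$ is isotropically distributed on the unit sphere of $\mathbb{C}^{M_{\mathrm{t}}}$, so that $\mathrm{E}[\mathbf{w}_k^1(\mathbf{w}_k^1)^H]=\frac{1}{M_{\mathrm{t}}}\mathbf{I}_{M_{\mathrm{t}}}$. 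Since $\mathbf{a}(\theta_j)$ is deterministic with $\|\mathbf{a}(\theta_j)\|^2=M_{\mathrm{t}}$, this gives $\mathrm{E}[|\mathbf{a}^H(\theta_j)\mathbf{w}_k^1|^2]=\mathbf{a}^H(\theta_j)\,\mathrm{E}[\mathbf{w}_k^1(\mathbf{w}_k^1)^H]\,\mathbf{a}(\theta_j)=\frac{1}{M_{\mathrm{t}}}\|\mathbf{a}(\theta_j)\|^2=1$, and summing over the $K$ users yields $\mathrm{E}[h_{j,1}^t]=K$.

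Next I would handle $\mathrm{E}[h_{i,1}]=\mathrm{E}\big[\sum_{k=1}^{K}|\mathbf{h}_{i,1}^H(\theta_j)\mathbf{w}_k^i|^2\big]$, where $\mathbf{h}_{i,1}^H(\theta_j)=\mathbf{v}_j^H(\theta_j)\mathbf{G}_{i,1}^H$ is the equivalent interfering channel after MRC filtering. Here the argument is complementary: I would condition on the beamformers $\mathbf{w}_k^i$ and exploit that, for an interfering (inter-cluster) BS, $\mathbf{h}_{i,1}(\theta_j)$ is a Rayleigh-faded channel with i.i.d. unit-variance entries, $\mathrm{E}[\mathbf{h}_{i,1}(\theta_j)\mathbf{h}_{i,1}^H(\theta_j)]=\mathbf{I}_{M_{\mathrm{t}}}$, and is statistically independent of $\mathbf{w}_k^i$ (which is designed from BS $i$'s own local channels). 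Then $\mathrm{E}[\,|\mathbf{h}_{i,1}^H\mathbf{w}_k^i|^2\mid\mathbf{w}_k^i\,]=\mathbf{w}_k^{iH}\mathrm{E}[\mathbf{h}_{i,1}\mathbf{h}_{i,1}^H]\mathbf{w}_k^i=\|\mathbf{w}_k^i\|^2=1$, and taking the outer expectation and summing over $k$ gives $\mathrm{E}[h_{i,1}]=K$.

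The main obstacle I anticipate is rigorously justifying the isotropy identity $\mathrm{E}[\mathbf{w}_k^1(\mathbf{w}_k^1)^H]=\frac{1}{M_{\mathrm{t}}}\mathbf{I}_{M_{\mathrm{t}}}$ in the first part: the ZF beamformer must simultaneously null the Rayleigh-faded intra/inter-cell communication channels \emph{and} the deterministic sensing-direction steering vectors of the neighbouring targets, and these fixed null constraints strictly break full rotational symmetry. I would argue that, after conditioning on the nulled subspace, the residual beamforming direction remains uniformly distributed within its orthogonal complement, whose averaging still returns a scaled identity on the relevant subspace; combined with the fact that the typical target direction $\theta_j$ is itself \emph{not} among BS~$1$'s null constraints and is generic relative to them, the per-term gain still averages to one. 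The normalization bookkeeping for the second part, namely taking the post-MRC interference channel with unit-variance entries after the array factor $\kappa M_{\mathrm{r}}$ has been extracted in (\ref{SensingSIR}), is routine but should be stated explicitly so as to land exactly on $K$ rather than a scaled value.
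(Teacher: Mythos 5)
Your proposal is correct and follows essentially the same route as the paper's proof: establishing that the post-MRC equivalent interference channel is $\mathcal{CN}(0,\mathbf{I}_{M_{\mathrm{t}}})$ and that each normalized ZF beamformer is an isotropic unit-norm vector independent of the probe vector, then computing the per-term expectation as one. Your worry about the null constraints breaking isotropy is largely moot in the paper's model, since the sensing nulling constraints are the equivalent channels $\tilde{\mathbf{h}}_{1,q}^H(\theta_j)=\mathbf{v}_j^H(\theta_j)\mathbf{G}_{1,q}^H$, which are themselves random Gaussian rather than deterministic steering vectors, so the beamformer remains isotropic relative to $\mathbf{a}(\theta_j)$ exactly as the paper asserts.
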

\begin{proof}
	The channel from the interfering BS $q$ to the serving BS is denoted by ${\bf{G}}_{q,1}^H$, where the element of $m$th row and $n$th column of ${\bf{G}}_{q,1}^H$ is $g_{m,n} \sim {\cal{CN}}(0,1)$. The receive filter is ${\bf{v}}_j^H = \frac{1}{\sqrt{M_{\mathrm{r}}}}[1, \cdots, e^{ -{j \pi(M_{\mathrm{r}}-1)  \cos(\theta_j) }}]^T$. After adopting the receive filter ${\bf{v}}_j^H (\theta_j)$, the equivalent interference channel is denoted by ${\bf{h}}_{q,1}^H(\theta_j) = {\bf{v}}_j^H (\theta_j) {\bf{G}}_{q,1}^H $. Let ${\bf{h}}_{q,1}^H(\theta_j) = [h_{1}, \cdots, h_{M}]$, where $h_{m} = \frac{1}{\sqrt{M_{\mathrm{r}}}}(g_{m,1} + e^{ -{j \pi \cos(\theta_j) }} g_{m,2} + \cdots +  e^{ -{j \pi ({M_{\mathrm{r}}}-1) \cos(\theta_j) }} g_{m,{M_{\mathrm{r}}}})$. It can be found that $h_{m}$ is a linear combination of complex Gaussian variables, and thus $h_{m}$ is still a complex Gaussian variable, i.e., ${\bf{h}}_{q,1}^H(\theta_j) \sim {\cal{CN}}(0,{\bm{I}}_{M_{\mathrm{t}}})$. Based on the beamforming matrix in (\ref{TransmitBeamforming}), ${\bf{W}}_1$ is calculated independently of ${{\bf{a}}^H}(\theta_j )$ as the beamforming is designed based on the user channel and inter-cell interfering channel. Therefore, ${\bf{w}}_k$ is independent isotropic unit-norm random vector. Therefore, ${\rm{E}}[h^t_{j, 1}]=K$ and ${\rm{E}}[h_{i, 1}]=K$. This thus completes the proof.
\end{proof}

Even in the presence of angle estimation errors, this Lemma still holds since the beamforming vector is designed based on the random user channels and it can be proved in a similar way. 
For the effective signal channel gain at the target's direction, note that $h^t_{j, 1}$ is an exponentially distributed linear combination of $K$ complex normal random variables. Thus, we utilize $\Gamma(K,1)$ to approximate the effective transmit beamforming gain at the target's direction and sensing interference channel gain based on the expected value of $g_{k,1}^{k}$ and $h_{j,1}^{t}$.
		The distribution of the sensing interference ${{h_{i,1}}}$, as defined below (\ref{SensingSIR}), is a linear combination of $K$ complex normal random variables. Therefore, by neglecting the spatial correlation, $h_{i, 1}$ can be approximated $\Gamma(K,1)$.
Then, under a given distance $R$ from the serving BS to the typical target, we can derive the conditional radar information rate expectation as follows:
\begin{equation}\label{RadarRateExpression}
	\begin{aligned}
		R_s \!=& {\rm{E}}\!\left[ {\log \!\left( \! {1 +  \frac{\Delta T \kappa M_{\mathrm{r}} |\xi|^2  {h_{j,1}^t}}{{\sum\nolimits_{q = Q+1}^\infty  {{h_{q,1}}} {{\left\|  {{{\bf{d}}_q} - {{\bf{d}}_1}} \right\|}^{ - \alpha }} {{R}^{  2 \beta }}}}} \right)}  \! \bigg| \left\|{{\bf{d}}_1}  \right\| \!= \!R \right] \\
		=& \int_0^\infty  {\frac{{1 - {\rm{E}}\left[ {{e^{ - z  \Delta T \kappa M_{\mathrm{r}}|\xi|^2  h_{j,1}^t}}} \right]}}{z}} {\rm{E}}\left[ {{e^{ - z I_{\rm{S}}}}} \right]dz,
	\end{aligned}
\end{equation}
where $I_{\rm{S}} = \sum\nolimits_{q = Q+1}^\infty  {{h_{q,1}}} {{\left\|  {{{\bf{d}}_q} - {{\bf{d}}_1}} \right\|}^{ - \alpha }} {{R}^{  2 \beta }}$.
According to the analysis of the distribution of effective signals and interference signals, we have ${\rm{E}}\left[ {{e^{ - z  \Delta T \kappa M_{\mathrm{r}}|\xi|^2  h_{j,1}^t}}} \right] = {{{\left( {1 + \Delta T \kappa M_{\mathrm{r}} |\xi|^2  z} \right)}^{ - K}}}$. 
We note that it is challenging to derive the radar information rate expression due to the special PDF of distance from the interfering BSs to the serving BS. Specifically, considering that the typical target is sensed by the closest BS, i.e., the serving BS, it is impossible to find another BS in the circle defined with the target as its center and a radius equal to the distance to the serving BS, refer to the gray shaded area in Fig.~\ref{figure3} (namely interference hole).\footnote{The concept of the sensing interference hole was first introduced in \cite{Olson2022RethinkingCells}; this work, however, examines the interference distribution within a multi-BS cooperation framework.} Hence, deriving the radar information rate expression without eliminating the interference hole will result in a deviation from the true value of approximately 15\%, as depicted in Fig.~\ref{figure4}. To the best of our knowledge, no prior work has considered the distribution characteristics of sensing interference in ISAC networks.

To tackle the above issue, we analyze the exact PDF of interfering BS's distance from a perspective of geometry, and obtain a extremely tight expressions with $Q = 1$ as follow. 
\begin{Pro}\label{LaplaceTransformSensing}
	When $Q = 1$, the average radar information rate $R_s$ can be given by
	\begin{equation}\label{SensingRateExpression}
		R_s = \int_0^\infty  {\frac{{1 - {{{\left( {1 +  \Delta T \kappa M_{\mathrm{r}}|\xi|^2 z} \right)}^{ - K}}}}}{z}} \int_0^\infty {{\cal L}_{{I_{\rm{S}}}}}(z) f(R) dR dz,
	\end{equation}
	where
	\begin{equation}\label{LaplaceTransformUnder_R}
		\begin{aligned}
			&{{\cal L}_{{I_{\rm{S}}}}}(z) = \\
			& \!  \exp \bigg(  - R \bigg( K{z^{\frac{2}{\beta }}}{{\left( {\frac{R}{{\pi \lambda_b }}} \right)}^{\frac{{2\alpha }}{\beta } - 1}} \!B\left( {1,1 - \frac{2}{\beta },K + \frac{2}{\beta }} \right) \!+ 1 \\
			&- \! {\int_0^2 \! {\frac{2}{\pi }\arccos  {\frac{t}{2}} \bigg( {1 - {{{{\bigg(\! {1 \!+\! z{{\bigg( {\frac{R}{{\pi \lambda_b }}} \bigg)}^{\alpha  - \beta /2}}\!{t^{ - \beta }}} \bigg)}^{-K}}}}} \! \bigg)} tdt} \bigg) \! \bigg), 
		\end{aligned}
	\end{equation}
	and $f(R) = 2\pi {\lambda _b}R{e^{ - \pi {\lambda _b}{R^2}}}$.
\end{Pro}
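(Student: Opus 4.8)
The plan is to compute the conditional Laplace transform $\mathcal{L}_{I_S}(z)=\mathrm{E}[e^{-zI_S}]$ appearing in (\ref{RadarRateExpression}) and then de-condition on the serving distance $R$. Since the signal factor $\mathrm{E}[e^{-z\Delta T\kappa M_{\mathrm{r}}|\xi|^2 h^t_{j,1}}]=(1+\Delta T\kappa M_{\mathrm{r}}|\xi|^2 z)^{-K}$ is already available from the $\Gamma(K,1)$ approximation of $h^t_{j,1}$ established in Lemma \ref{EffectiveChannelGain}, the only remaining object is $\mathcal{L}_{I_S}(z)$. With $Q=1$ no sensing interference is nulled, so the interferers are exactly the BSs other than the serving one, which I would model as a PPP of intensity $\lambda_b$ on the plane subject to a single spatial constraint: because the serving BS is the closest BS to the typical target at the origin, no interfering BS can lie inside the disk $\mathcal{O}(0,R)$, i.e.\ the sensing interference hole of Fig.~\ref{figure3}. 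The decisive geometric observation is that the serving BS at $\mathbf{d}_1$, being at distance exactly $R$ from the target, lies on the boundary of this hole.

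First I would apply the probability generating functional of the PPP together with the i.i.d.\ fading $h_{q,1}\sim\Gamma(K,1)$ (so that $\mathrm{E}_h[e^{-sh}]=(1+s)^{-K}$), giving $\mathcal{L}_{I_S}(z)=\exp\big(-\lambda_b\int_{\mathbb{R}^2\setminus\mathcal{O}(0,R)}(1-(1+zR^{2\beta}\|\mathbf{x}-\mathbf{d}_1\|^{-\alpha})^{-K})\,d\mathbf{x}\big)$. The step I expect to be the main obstacle is that the integrand is radially symmetric about the serving BS $\mathbf{d}_1$, whereas the excluded hole is centered at the origin, so the usual trick of integrating in polar coordinates about the reference point does not apply directly; this is exactly the feature that makes the communication derivation of Theorem \ref{CommunicationTightExpression} inapplicable here. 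To resolve it I would place polar coordinates at $\mathbf{d}_1$ and, for each radius $r=\|\mathbf{x}-\mathbf{d}_1\|$, compute the angular measure of the circle $C_r$ that falls outside the hole. Applying the law of cosines to the triangle formed by the origin, $\mathbf{d}_1$, and $\mathbf{x}$ (center separation $R$), a point is outside the hole iff $\cos\psi<r/(2R)$; hence the admissible angular measure is $2\pi$ for $r\ge 2R$ and $2\pi-2\arccos(r/2R)$ for $0\le r<2R$, with the threshold $r=2R$ being the diameter of the hole through $\mathbf{d}_1$.

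With this angular weighting in hand, I would split the exponent into a full-plane part (weight $2\pi$) and a hole-correction part (the deficit $2\arccos(r/2R)$ supported on $[0,2R]$). The full-plane part is handled by a power substitution of the form $w\propto zR^{2\beta}r^{-\alpha}$ followed by an integration by parts, collapsing to the complete Beta function and the accompanying $z$- and $R$-dependent prefactor that multiply $B(\cdot)$ in (\ref{LaplaceTransformUnder_R}). For the hole-correction part I would substitute $t=r/R$, mapping the domain to $t\in[0,2]$ and producing the finite integral over $[0,2]$ in (\ref{LaplaceTransformUnder_R}); the elementary identity $\int_0^2\tfrac{2}{\pi}\arccos(t/2)\,t\,dt=1$ then isolates the constant ``$1$'' from the $K$-dependent piece, which is precisely the ``$+1-\int_0^2\cdots$'' structure appearing there. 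Finally, substituting $\mathcal{L}_{I_S}(z)$ and the signal factor back into (\ref{RadarRateExpression}) and de-conditioning on $R$ with the nearest-BS density $f(R)=2\pi\lambda_b R e^{-\pi\lambda_b R^2}$ yields (\ref{SensingRateExpression}). The genuinely novel and delicate step is the off-center angular-measure computation above, which is what the interference hole forces and what removes the roughly $15\%$ bias incurred by ignoring it.
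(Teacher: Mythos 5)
Your proposal follows essentially the same route as the paper's Appendix E: the PGFL with $\Gamma(K,1)$ fading, the off-center hole geometry yielding (via the law of cosines) an angular deficit $2\arccos\left(r/(2R)\right)$ on $[0,2R]$ about the serving BS (equivalently the paper's non-homogeneous density $\lambda_b\left(1-\varphi(x)/\pi\right)$), the split of the exponent into a full-plane Beta-function term plus a finite hole-correction integral under $t=r/R$, and de-conditioning with the nearest-BS density $f(R)$. The one quibble is your attribution of the ``$+1$'' to the identity $\int_0^2\frac{2}{\pi}\arccos\left(t/2\right)t\,dt=1$: in the paper's derivation that constant is not produced by the PGFL at all but arises from absorbing the $e^{-\pi\lambda_b R^2}$ factor of $f(R)$ under the change of variables $\pi\lambda_b R^2\mapsto R$ (which is also why the factors $\left(R/(\pi\lambda_b)\right)$ appear in the exponents of the stated $\mathcal{L}_{I_{\rm S}}(z)$) --- a cosmetic discrepancy in bookkeeping, rooted in the paper's own ambiguous normalization, that does not affect the correctness of your geometric argument.
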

\begin{proof}
	Please refer to Appendix E.
\end{proof}

In (\ref{LaplaceTransformUnder_R}), the second part denotes the subtracted interference-free region. BS density generally affects the average sensing performance due to the different pathloss coefficients of effective signals and interference. In the following, we provide a relationship among the pathloss coefficients, the sensing performance and the BS density. Intuitively, when $\alpha > 2\beta$, the sensing performance increases with respect to the BS density. This is due to the fact that as the distance $\left\| {\bf{d}}_1 \right\|$ and $\left\| {\bf{d}}_q - {\bf{d}}_1 \right\|$ decreases, the average effective echo signal power grows more rapidly than that of the interference. On the contrary, when $\alpha < 2 \beta$, the sensing performance decreases with respect to the BS density. When $\alpha = 2 \beta$, we have 
\begin{equation}
	T^{\rm{ASE}}_s = \lambda_b J \int_0^\infty  {\frac{{1 - {{\left( {1 +  \Delta T \kappa M_{\mathrm{r}}z} |\xi|^2  \right)}^{ - K}}}}{z {\rm{I}}(z,K,\alpha)}dz},
\end{equation}
where ${\rm{I}}(z,K,\alpha) = Kz^{\frac{1}{\alpha }}B\left( {1,1 - \frac{1}{\alpha },K + \frac{1}{\alpha }} \right)  - \int_0^2 {\frac{2}{\pi }\arccos \left( {\frac{t}{2}} \right)\left( {1 - \frac{1}{{{{\left( {1 + z{t^{ - 2\alpha }}} \right)}^K}}}} \right)} tdt + 1$. In this case, the sensing ASE $T^{\rm{ASE}}_s$ increases linearly with the BS density. 

However, when $Q \ge 2$, it is challenging to derive a tractable expression of the radar information rate due to the complicated PDF of interfering BSs. To tackle this issue, we exploit the approximation by removing the interference when $Q \ge 2$ to obtain a tractable expression in Theorem 2.

\begin{theorem}\label{ASEsensingExpression}
	When $Q \ge 2$, the sensing ASE can be given by
	\begin{equation}
		T^{\rm{ASE}}_s = \lambda_b J  \int_0^\infty  {\frac{{1 - {{{\left( {1 + \Delta T \kappa M_{\mathrm{r}}z} |\xi|^2 \right)}^{ - K}}}}}{z}} \tilde I_{\mathrm{S}} dz,
	\end{equation}
	where 
	$\tilde I_{\mathrm{S}} \!=\! \int_0^\infty \int_0^\infty \!\exp \bigg(\!  - \!\pi \lambda_b \bigg( r_Q^2\left( {{{{{\left( {1 + z{R^{2\alpha }}r_Q^{ \!- \beta }} \right)}^{-K}}}} \!-\! 1} \!\right) \\ + \! K{z^{\frac{2}{\beta }}}{R^{\frac{{4\alpha }}{\beta }}} \! B \!\bigg(\! {\frac{{z{R^{2\alpha }}r_Q^{ - \beta }}}{{z{R^{2\alpha }}r_Q^{ - \beta } + 1}},1 \!- \frac{2}{\beta },K + \frac{2}{\beta }} \!\bigg) \! \bigg)\! \bigg)\! {f_{{r_{q}}}}\left( r \right)f_R(r) dRdr_q$.
\end{theorem}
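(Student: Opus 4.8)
The plan is to extend the single-cluster argument of Proposition~\ref{LaplaceTransformSensing} to the multi-BS nulling regime $Q\ge 2$, where the exact support of the interfering-BS distances is no longer a disk and must be replaced by a tractable surrogate. I would begin from the conditional radar-rate identity (\ref{RadarRateExpression}), which, through the integral representation (\ref{LemmaEquationLa}), already separates the effective-signal Laplace transform from the interference Laplace transform. Because $h^t_{j,1}$ is approximated by $\Gamma(K,1)$ (Lemma~\ref{EffectiveChannelGain} and the ensuing discussion), the effective-signal factor is immediately ${\rm E}[e^{-z\Delta T\kappa M_{\rm r}|\xi|^2 h^t_{j,1}}]=(1+\Delta T\kappa M_{\rm r}|\xi|^2 z)^{-K}$, which is precisely the numerator of the claimed expression. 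The entire remaining effort is thus to evaluate ${\rm E}[e^{-zI_{\rm S}}]$ and to de-condition it over the relevant distances, after which multiplying by $\lambda_b J$ and invoking $T^{\rm ASE}_s=\lambda_b J R_s$ closes the proof.

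For the interference I would treat the BSs indexed $q\ge Q+1$ as a PPP of intensity $\lambda_b$ and apply the probability generating functional, with the mark law $h_{q,1}\sim\Gamma(K,1)$ giving the per-point factor $(1+zR^{2\beta}\rho^{-\alpha})^{-K}$, where $\rho=\|\mathbf{d}_q-\mathbf{d}_1\|$ and $R=\|\mathbf{d}_1\|$. The decisive modelling step is the approximation announced before the theorem: instead of excluding the exact set of $Q$ BSs nearest the \emph{target} (which, viewed from the serving BS, produces the irregular hole of Fig.~\ref{figure3} and the $\arccos$ correction of (\ref{LaplaceTransformUnder_R})), I would remove the interference from a disk $\mathcal{O}(\mathbf{d}_1,r_Q)$ centred at the serving BS, with $r_Q$ an order-$Q$ contact distance of density $f_{r_q}$. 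Conditioned on $R$ (density $f_R(r)=2\pi\lambda_b r e^{-\pi\lambda_b r^2}$) and on $r_Q$, the interferers form a PPP on $\mathbb{R}^2\setminus\mathcal{O}(\mathbf{d}_1,r_Q)$, so
\begin{equation}
{\rm E}\!\left[e^{-z I_{\rm S}}\,\middle|\,R,r_Q\right]=\exp\!\left(-2\pi\lambda_b\!\int_{r_Q}^{\infty}\!\left(1-\left(1+zR^{2\beta}\rho^{-\alpha}\right)^{-K}\right)\rho\,d\rho\right).
\end{equation}
The vanishing of the $\arccos$ factor is exactly what the disk surrogate buys: every ring of radius $\rho>r_Q$ now contributes in full, which is what renders $Q\ge 2$ analytically tractable.

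I would then evaluate the radial integral in closed form. Integration by parts with $v=\rho^2/2$ peels off the boundary contribution $\pi\lambda_b r_Q^2\big((1+\cdot)^{-K}-1\big)$, matching the first term inside the exponential of $\tilde I_{\rm S}$, while the residual integral is carried to incomplete-Beta form by the power-law substitution $t=\frac{c\rho^{-\gamma}}{1+c\rho^{-\gamma}}$ (with $c$ the signal coefficient and $\gamma$ the governing pathloss exponent), yielding the $B\!\big(\cdot,1-\tfrac{2}{\beta},K+\tfrac{2}{\beta}\big)$ term with its stated prefactor. De-conditioning by integrating this conditional Laplace transform against $f_R$ and $f_{r_q}$ produces $\tilde I_{\rm S}$, and substitution into (\ref{RadarRateExpression}) gives the stated $T^{\rm ASE}_s$.

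The main obstacle is the disk approximation rather than the algebra. The truly nulled set is the $Q$ BSs closest to the target, yet the interference is governed by distances to the serving BS, so replacing the irregular exclusion support by $\mathcal{O}(\mathbf{d}_1,r_Q)$ must be argued to be tight. I would justify it by observing that nulling the $Q$ nearest interferers eliminates precisely the dominant near-field terms of $I_{\rm S}$, so that the mismatch between the true and surrogate exclusion regions affects only the rapidly decaying tail; a secondary subtlety is the statistical coupling between $R$ and $r_Q$, which the factorised density $f_R f_{r_q}$ treats as independent and which I would argue is negligible for the interference-limited ASE, consistent with the fidelity reported in the simulations.
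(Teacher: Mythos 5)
Your proposal follows essentially the same route as the paper's Appendix F: factor out the $\Gamma(K,1)$ signal term to get $(1+\Delta T\kappa M_{\mathrm{r}}|\xi|^2z)^{-K}$, apply the PGFL over the complement of a disk of radius $r_Q$ centred at the serving BS, reduce the radial integral to the boundary term plus an incomplete Beta function, and de-condition over $f_R$ and $f_{r_Q}$ treated as independent via Slivnyak's theorem. The only substantive divergence is your justification for dropping the $\arccos$ hole correction: instead of your tail-dominance heuristic, the paper proves a short lemma using the CCDF $P\left[\tfrac{r_Q}{2R}\ge x\right]=1-\left(1-\tfrac{1}{4x^2}\right)^{Q-2}$ to show that for large $Q$ the exclusion radius $r_Q$ exceeds $2R$ with high probability, so the interference-free hole around the target lies entirely inside the already-excluded disk and the correction term vanishes over an empty integration range rather than merely being small.
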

\begin{proof}
	Please refer to Appendix F.
\end{proof}

Additionally, when considering clutter interference, it can be analyzed in a similar manner as discussed in Remark \ref{ClutterInterferenceRemark}.
\begin{remark}\label{ClutterInterferenceRemark}
			According to the equation (\ref{LemmaEquationLa}), the conditional expectation can be derived as follows:
			\begin{equation}\label{RadarRateExpression2}
				\begin{aligned}
					R_s = \int_0^\infty  {\frac{{1 - {\rm{E}}\left[ {{e^{ - z \xi \Delta T \kappa M_{\mathrm{r}} h_{j,1}^t}}} \right]}}{z}} {\rm{E}}\left[ {{e^{ - z I_{\rm{S}}}}} \right] {\rm{E}}\left[ {{e^{ - z I_{\rm{X}}}}} \right]dz,
				\end{aligned}
			\end{equation}
			where $I_{\rm{S}} = \sum\nolimits_{q = Q+1}^\infty  {{h_{q,1}}} {{\left\|  {{{\bf{d}}_q} - {{\bf{d}}_1}} \right\|}^{ - \alpha }} {{R}^{  2 \beta }}$ and $I_{\rm{X}} = \sum\nolimits_{x = 1}^X |\xi_{c,x}|^2  {\left\|  {{{\bf{d}}_{c,x}}} \right\|}^{ - 2 \beta }{{R}^{  2 \beta }}$. Here, $I_X$ denotes the clutter interference, and $\xi_x$ represents the RCS of the corresponding clutter reflection link. Then, by deriving the Laplace transformation of $I_X$ in a similar way to Proposition 3 of the revised manuscript, the conditional expectation of radar information rate can be obtained.
\end{remark}

\vspace{-0.5mm}
\section{System Optimization}
In this section, we study the optimization of the cooperative ISAC networks to achieve a flexible balance between S\&C. Based on the derivations in Sections \ref{CommunicationPerformance} and \ref{SensingPerformance}, the ASE for S\&C are both functions of the serving numbers of users $K$ and targets $J$ as well as the cooperative cluster sizes $L$ and $Q$. Then, we present two performance metrics, including the sensing-communication (S-C) ASE region and the sum ASE, to 
verify the effectiveness of the proposed cooperative ISAC schemes. First, we propose to use the S-C ASE region (defined below) to characterize all the achievable communication ASE and achievable sensing ASE pairs under the constraint of the spatial DoF. Without loss of generality, the S-C network performance region is defined as
\begin{equation}
	\begin{aligned}
			\mathcal{C}_{\mathrm{c}-\mathrm{s}}(K,L,J,Q) \triangleq & \big\{ (\hat r_c, \hat r_s): \hat r_c \leq  T^{\rm{ASE}}_c,  \hat r_s \leq  T^{\rm{ASE}}_s,\\
		&   KL +J(Q-1) \le M_{\mathrm{t}}, J \le J_{\max} \big\},
	\end{aligned}
\end{equation}
where $(\hat r_c, \hat r_s)$ represents an achievable S\&C performance pair.
A direct way to find the boundary of the S-C region (as shown in Fig.~\ref{figure2}) is to exhaustively search all feasible variables $(K,L,J,Q)$ and calculate the corresponding S\&C ASE derived in Sections \ref{CommunicationPerformance} and \ref{SensingPerformance}. However, such an operation is with high computational complexity, especially when the number of transmit antennas is large.

It is observed that in (\ref{ApproximateASEcommunication}), to maximize the communication ASE $T^{\rm{ASE}}_c$ efficiently, we can relax $K$ to continuous variables and substituting $v = \frac{K}{ M_{\mathrm{t}}-J(Q-1)+1}$ into (\ref{ApproximationCommunicationRate}), we have 
\begin{equation}
	T^{\rm{ASE}}_c = \lambda_b (M_{\mathrm{t}}-J(Q-1)+1) G(v),
\end{equation}
where $G(v) = \int_0^\infty  {\frac{{1 - {e^{ - z {\frac{{\Gamma \left( {L + \frac{\alpha }{2}} \right)}\left( {\frac{1}{v} - L} \right)}{{\Gamma \left( {L + 1} \right)}\Gamma \left( {1 + \frac{\alpha }{2}} \right)}} }}}}{{zF(z,\alpha )}}} dz$. Therefore,
we know that the optimal $v^*$ is only related to variable $L$. The following lemma gives the optimal $v^*$ to maximize $T^{\rm{ASE}}_c$.

\begin{thm}\label{OptimalKL}
	With any given $J$ and $Q$, the optimal $K$ can be uniquely found by solving the following equation:
	\begin{equation}
		\int_0^{\infty} \frac{1-e^{-z\left(\frac{1}{v}-L\right)}-z e^{-z\left(\frac{1}{v}-L\right)} / v}{z {\rm{F}}(z, \alpha) } d z=0.
	\end{equation}
\end{thm}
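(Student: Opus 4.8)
The plan is to collapse the stated question — optimality in $K$ with $J$ and $Q$ held fixed — into a one–dimensional, strictly unimodal maximization, and then simply read off its first–order condition. Fix $J$ and $Q$ and abbreviate $A \triangleq M_{\mathrm{t}} - J(Q-1)+1>0$. Using the reparametrization $v = K/A$ introduced just before the lemma, the admissible range is $v\in(0,1/L)$, the upper limit coming from the requirement that the effective diversity order $M_{\mathrm{t}}-KL-J(Q-1)+1 = A(1-vL)$ stay positive. Substituting $K=vA$ into Theorem~\ref{CommunicationLooseExpression} and (\ref{ApproximateASEcommunication}) gives $T^{\rm{ASE}}_c = \lambda_b A\,\phi(v)$ with
\[
\phi(v) = v\int_0^\infty \frac{1 - e^{-z\,C\left(\frac1v - L\right)}}{z\,{\rm{F}}(z,\alpha)}\,dz,\qquad C = \frac{\Gamma\!\left(L+\tfrac{\alpha}{2}\right)}{\Gamma(L+1)\,\Gamma\!\left(1+\tfrac{\alpha}{2}\right)}.
\]
Since $\lambda_b A$ is a fixed positive scalar, maximizing $T^{\rm{ASE}}_c$ over $K$ is equivalent to maximizing $\phi$ over $v$; I would note at the outset that $\phi>0$ on the open interval while $\phi(0^+)=\phi\big((1/L)^-\big)=0$ (throughput vanishes as $K\to0$, and as $v\to1/L$ the per--stream diversity gain and hence the rate collapse).

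Second, I would differentiate $\phi$ under the integral sign — justified by dominated convergence once the endpoint behaviour of ${\rm{F}}(z,\alpha)$ is pinned down — and apply the product rule, obtaining
\[
\phi'(v) = \int_0^\infty \frac{1 - e^{-z C\left(\frac1v-L\right)} - \frac{zC}{v}\,e^{-z C\left(\frac1v-L\right)}}{z\,{\rm{F}}(z,\alpha)}\,dz.
\]
Setting $\phi'(v)=0$ reproduces the stated equation; the positive constant $C$ enters only as a common scale in both exponential terms and can be normalized into the dummy integration variable. Existence of an interior stationary point is then automatic from $\phi$ vanishing at both ends, being continuous, and being strictly positive inside.

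Third — and this is the substance of the claim — I would prove the stationary point is unique, i.e. that $\phi$ is strictly unimodal. The cleanest route is to change variables to the per--stream diversity order $\xi \triangleq \frac1v - L\in(0,\infty)$, a decreasing bijection of $v$, and write $\phi = m(\xi)/(\xi+L)$ with $m(\xi)=\int_0^\infty \frac{1-e^{-zC\xi}}{z\,{\rm{F}}(z,\alpha)}\,dz$. Differentiating, $m'(\xi)=C\int_0^\infty \frac{e^{-zC\xi}}{{\rm{F}}(z,\alpha)}\,dz>0$ and $m''(\xi)=-C^2\int_0^\infty \frac{z\,e^{-zC\xi}}{{\rm{F}}(z,\alpha)}\,dz<0$, so $m$ is strictly increasing and strictly concave. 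The sign of $\phi'$ then equals that of $\Theta(\xi)\triangleq m'(\xi)(\xi+L)-m(\xi)$, whose derivative $\Theta'(\xi)=m''(\xi)(\xi+L)<0$ shows $\Theta$ is strictly decreasing; combined with $\Theta(0^+)>0$ and $\Theta(\infty)<0$, this forces a single root $\xi^\star$, hence a unique maximizing $v^\star=1/(\xi^\star+L)$ and a unique optimal $K^\star=v^\star A$.

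The main obstacle is the strict concavity of $m$, which rests entirely on the positivity of ${\rm{F}}(z,\alpha)$ and on the legitimacy of differentiating under the integral — both of which hinge on a careful analysis of ${\rm{F}}(z,\alpha)$ as $z\to0^+$ and $z\to\infty$ to guarantee that $m$, $m'$, and $m''$ all converge and keep the asserted signs. Once this endpoint/sign behaviour (and the accompanying dominating bounds) is secured, concavity follows and the unimodality argument delivers uniqueness immediately; verifying the integrability and sign of ${\rm{F}}$ near the two limits is the only genuinely technical step.
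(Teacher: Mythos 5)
Your proposal is correct and takes essentially the same route as the paper: differentiate the ASE in the normalized variable $v=K/(M_{\mathrm{t}}-J(Q-1)+1)$, identify the stated integral as the first-order condition, and combine concavity with the signs of the derivative at the two endpoints of the feasible interval to get existence and uniqueness of the root. You actually go further than the paper, which merely asserts concavity (with a sign typo, ``$G''(v)>0$''): your computation $m''(\xi)=-C^2\int_0^\infty z e^{-zC\xi}/\mathrm{F}(z,\alpha)\,dz<0$ supplies the missing verification, and you correctly restore the prefactor $v$ that the paper's displayed $G(v)$ omits but implicitly uses when computing $G'(v)$.
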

\begin{proof}
	$G(v)$ is a concave function with respect to $v$ since the second-order derivative of $G(v)$ satisfies $G''(v)>0$. The fist-order derivative of $G(v)$ is $G'(v) = \int_0^{\infty} \frac{1-e^{-z\left(\frac{1}{v}-L\right)}-z e^{-z\left(\frac{1}{v}-L\right)} / v}{z {\rm{F}}(z, \alpha) } d z$. Then, we have $\lim\limits_{v \to 0} G'(v) = \int_0^{\infty} \frac{1}{z {\rm{F}}(z, \alpha) } d z > 0$, and $G'(1) = \int_0^{\infty} \frac{1-e^{-z\left(1-L\right)}-z e^{-z\left(1-L\right)} }{z {\rm{F}}(z, \alpha) } d z < 0$. Considering that $G(v)$ is a concave function and $G'(v) < 0$, there is always a unique solution $v^*$ within $[0,1]$ to solve the equation $G'(v) = 0$.
\end{proof}

Based on Lemma \ref{OptimalKL}, the optimal $K$ can be obtained by a binary search instead of a one-dimensional search. Moreover, with $L=1$, the optimal ratio of scheduled users $\frac{K}{M_{\mathrm{t}}+1}$ is constant, which is also verified in simulations (c.f. Fig.~\ref{figure5a}).
For the profile of S\&C, we first identify two corner points of this S-C region denoted by $(\hat r_c, r_s^{\max})$ and $(r_c^{\max}, \hat r_s)$, respectively, where $(\hat r_c, r_s^{\max})$ indicates the minimum sensing ASE constrained by the maximum communication ASE, while $(r_c^{\max}, \hat r_s)$ indicates the minimum communication ASE constrained by the maximum sensing ASE, as shown in Fig.~\ref{figure2}. 

\begin{remark}
	According to the Theorems \ref{CommunicationTightExpression} and \ref{CommunicationLooseExpression}, it can be verified that the communication ASE $T^{\rm{ASE}}_c$ decreases monotonically with the increase of $J(Q-1)$. Thus, at the optimal communication point of the S-C region, $J = J_{\max}$ and $Q = 1$, then the optimal $K^*$ and $L^*$ can be obtained by Lemma \ref{OptimalKL}. The optimal communication part is to satisfy $\frac{K}{ M_{\mathrm{t}}-J(Q-1)+1} = v$, where $v$ can be deemed as a constant parameter with a given $L$.
\end{remark}

\begin{thm}
	For the optimal sensing performance, $K^* = 1$ and $L^* = 1$, where the optimal $K^*$ and $L^*$ mean in terms of serving users and cooperative cluster size for communication.
\end{thm}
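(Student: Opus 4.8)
The plan is to reduce the four-variable maximization of $T^{\rm{ASE}}_s = \lambda_b J R_s$ subject to $KL + J(Q-1)\le M_{\mathrm{t}}$ and $J\le J_{\max}$ to a short dominance argument, after isolating how $K$ and $L$ each enter the objective. The structural fact I would exploit is that the communication cluster size $L$ affects only the degrees-of-freedom budget, whereas the number of served users $K$ affects both the budget and the per-target rate; accordingly I must (i) remove $L$ from $R_s$, (ii) settle the monotonicity of $R_s$ in $K$, and (iii) combine the two.

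First I would show that $R_s$ is independent of $L$. By the Gamma approximations of Lemma~\ref{EffectiveChannelGain}, the effective signal gain $h^t_{j,1}$ and every inter-cell interference gain $h_{q,1}$ are modelled as $\Gamma(K,1)$ variables whose law is fixed by $K$ alone: nulling more communication channels (larger $L$) changes which subspace the unit-norm beams occupy but not the unit mean of $|\mathbf{a}^H(\theta_j)\mathbf{w}_k|^2$. Inspecting the closed forms in Proposition~\ref{LaplaceTransformSensing} and Theorem~\ref{ASEsensingExpression}, $R_s$ therefore depends on $K$, $Q$, $\alpha$, $\beta$ and $\lambda_b$ but carries no $L$. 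Hence $L$ enters the problem only through $KL+J(Q-1)\le M_{\mathrm{t}}$, and since neither $R_s$ nor $J$ reacts to $L$, shrinking $L$ can only enlarge the feasible set for $(J,Q)$; thus $L^* = 1$.

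Second I would establish that $R_s(K,Q)$ is non-increasing in $K$. Writing $\mathrm{SIR}_s = c\,\tilde h^t_{j,1}/\big(\sum_{q>Q}\tilde h_{q,1} w_q\big)$ with $c=\Delta T\kappa M_{\mathrm{r}}|\xi|^2$, geometric weights $w_q$, and $\tilde h_K := \Gamma(K,1)/K = \tfrac1K\sum_{k=1}^K \mathcal E_k$ an average of unit-mean exponentials, the normalized gains are ordered in the convex order, $\tilde h_{K+1}\le_{\mathrm{cx}}\tilde h_K$. Since $\log(1+cx/D)$ is concave in the signal gain $x$ but convex in each interference gain, the two orderings push $R_s$ in opposite directions, so the sign of the net change must be settled quantitatively. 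I would do so by conditioning on the geometry, reducing the dominant single-interferer contribution to $\mathrm E[\log(1+c'B/(1-B))]$ with $B\sim\mathrm{Beta}(K,K)$, and showing this decreases in $K$ (for $c'=1$ it equals $1,\tfrac56,\dots\to\log 2$ at $K=1,2,\dots,\infty$): the convex singularity $-\log(1-B)$ dominates, so the interference-side loss outweighs the signal-side gain and $R_s$ strictly decreases in $K$.

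Finally I would conclude by dominance. Given any feasible $(K,L,J,Q)$, the tuple $(1,1,J,Q)$ is feasible because $1+J(Q-1)\le KL+J(Q-1)\le M_{\mathrm{t}}$, and
\[
T^{\rm{ASE}}_s(1,1,J,Q)=\lambda_b J\,R_s(1,Q)\ge \lambda_b J\,R_s(K,Q)=T^{\rm{ASE}}_s(K,L,J,Q),
\]
using the $L$-independence and the monotonicity in $K$. Hence every feasible point is dominated by one with $K=L=1$, and the sensing-optimal corner satisfies $K^*=1$ and $L^*=1$. The main obstacle is step (ii): because the signal and interference tails move $R_s$ oppositely, the comparison must be quantitative, and the convex-order/$\mathrm{Beta}(K,K)$ route has to be transported through the full PPP interference field and, for $Q=1$, through the interference-hole correction of Proposition~\ref{LaplaceTransformSensing}, whose subtracted term also depends on $K$ and must be verified not to reverse the sign.
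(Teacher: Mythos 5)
Your overall architecture matches the paper's in outline --- $L$ never enters $R_s$, so $L^*=1$ follows purely from the DoF budget, and the conclusion is a dominance argument over the feasible set --- but your treatment of $K$ takes a genuinely different and more ambitious route. The paper disposes of $K$ with a first-moment (mean-field) approximation: as in (\ref{ConditionalExpectionC}), it replaces $h^t_{j,1}$ and each $h_{q,1}$ by their expectations, which by Lemma \ref{EffectiveChannelGain} both equal $K$, so $K$ cancels between numerator and denominator and the approximated $R_s$ is simply \emph{independent} of $K$; optimality of $K^*=1$ then follows because it maximizes the feasible region for $(J,Q)$, with no monotonicity claim required. You instead retain the exact $\Gamma(K,1)$ fluctuations and try to prove that $R_s$ strictly \emph{decreases} in $K$ via the convex ordering $\tilde h_{K+1}\le_{\mathrm{cx}}\tilde h_K$. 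That is a stronger statement, and your single-interferer computation $\mathrm{E}[\log(1+\Gamma_1/\Gamma_2)]=\psi(2K)-\psi(K)$ is a convincing data point, but, as you concede, the signal-side concavity and interference-side convexity push in opposite directions and you never close the comparison for the full PPP interference field nor for the $K$-dependent hole correction of Proposition \ref{LaplaceTransformSensing}; so your step (ii) is an acknowledged gap rather than a completed argument. Note also that this step is stronger than what the conclusion needs: $R_s(1,Q)\ge R_s(K,Q)$ (which the paper obtains as approximate equality) already suffices for your final dominance inequality, which is exactly the paper's logic. What your route would buy, if completed, is a rigorous guarantee that the mean-field step hides no sign error --- that finite-$K$ fluctuations can only favor the $K=1$ corner --- at the cost of a delicate stochastic-ordering argument the paper deliberately avoids.
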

\begin{proof}
	Similar to the adopted approximation operation in (\ref{ConditionalExpectionC}) in Appendix C,  the SIR expression of radar information rate can be approximated as
	\begin{equation}\label{ApproximateSensing}
		\begin{aligned}
			&{\rm{E}}_{g, \Phi_b}\left[ {\log \left( {1 + \mathrm{SIR}_s} \right)} \big| R\right] \\
			\approx& {\rm{E}}_{\Phi_b}\left[ {\log \left( {1 + \frac{ \Delta T \kappa M_{\mathrm{r}} |\xi|^2  {\rm{E}}\left[h_{j,1}^t\right]}{{\sum\nolimits_{q = Q+1}^\infty  {\rm{E}}\left[ {{h_{q,1}}} \right] {{\left\|  {{{\bf{d}}_q} - {{\bf{d}}_1}} \right\|}^{ - \alpha }} {{R}^{  2 \beta }}}}} \right)} \right] \\
			=& {\rm{E}}_{\Phi_b}\left[ {\log \left( {1 + \frac{ \Delta T \kappa M_{\mathrm{r}}|\xi|^2 }{{\sum\nolimits_{q = Q+1}^\infty   {{\left\|  {{{\bf{d}}_q} - {{\bf{d}}_1}} \right\|}^{ - \alpha }} {{R}^{  2 \beta }}}}} \right)} \right]. 
		\end{aligned}
	\end{equation}
	It is found that the radar information rate is not related to the number of users $K$. Thus, using the minimum number of users and communication cooperative cluster, i.e., $K^* = 1$ and $L^* = 1$, the feasible region of the sensing variables is largest, and thus achieving the optimal sensing performance.
\end{proof}

\begin{figure}[t]
	\centering
	\includegraphics[width=7cm]{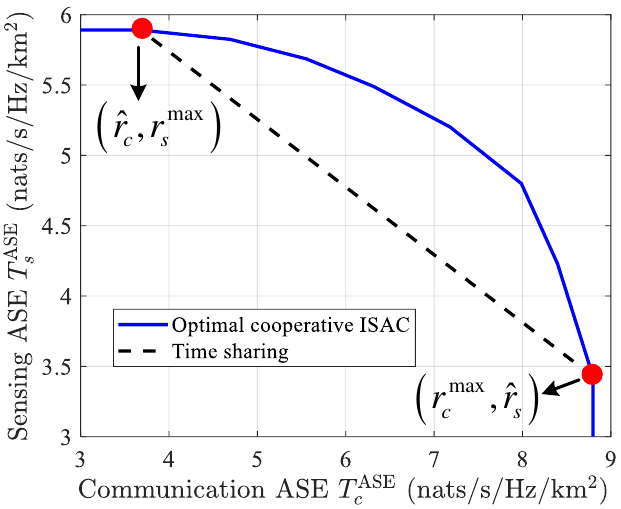}
	\caption{Illustration of the S-C network performance region.}
	\label{figure2}
\end{figure}

Then, for the optimal radar information rate, the optimal $J^*$ and $Q^*$ can be obtained by a two-dimensional search.
Here, an inner bound on the S-C region of Fig.~\ref{figure2} can be achieved with a simple time-sharing strategy based on these two corner points, i.e., $(\hat r_c, r_s^{\max})$ and $(r_c^{\max}, \hat r_s)$.
In addition to these two corner points of the S-C performance region, by exploiting the monotonic increase/decrease of sensing/communication performance with variable $J(Q-1)$, we can obtain the boundary of the S-C region by a binary search over $J(Q-1)$. With each given $J(Q-1)$, the optimal $T^{\rm{ASE}}_c$ and $T^{\rm{ASE}}_s$ can be separately obtained by a two-dimension search. It is worth noting that since at the performance boundary, $Q$ and $J$ must be integers, we also need to verify that there are feasible $Q$ and $J$ with a given integer $J (Q-1)$.

Let us consider the case that there are minimum service quality constraints for the users/targets being served/detected, such as sensing accuracy and real-time communication quality, i.e., $L \ge L^{th}$ and $Q \ge Q^{th}$. In this case, it is not difficult to find that the S\&C ASE is a non-increasing function with respect to the service quality constraints $L^{th}$/$Q^{th}$ since the feasible region becomes smaller with the increase of $L^{th}$/$Q^{th}$. Therefore, there is a new tradeoff between the average performance of S\&C ($R_s$ and $R_c$) and the whole network performance ($T^{\rm{ASE}}_s$ and $T^{\rm{ASE}}_c$).

Next, the second metric of ISAC networks, i.e., the sum ASE, is defined as a function of the S\&C rates and given by
\begin{equation}\label{WeightedSum}
	T^{\rm{ASE}} = \rho T^{\rm{ASE}}_c + \left(1-\rho \right)T^{\rm{ASE}}_s,
\end{equation}
which represents the total ASE of the ISAC network. Similarly, we can optimize the total ASE $T^{\rm{ASE}}$ to illustrate that the spectral efficiency of ISAC networks is effectively improved compared to sensing-only and communication-only networks. In (\ref{WeightedSum}), $\rho$ represents the weighting factor of S\&C performance. The problem formulation can be expressed as
\begin{alignat}{2}
	\label{P1}
	(\rm{P1}): & \begin{array}{*{20}{c}}
		\mathop {\max }\limits_{J,K,Q,L} \quad  T^{\rm{ASE}}
	\end{array} & \\ 
	\mbox{s.t.}\quad
	& KL +J(Q-1) \le M_{\mathrm{t}}, & \tag{\ref{P1}a}\\
	& J,K,Q,L \ge 1, & \tag{\ref{P1}b} \\
	& J \le J_{\max}. & \tag{\ref{P1}c}
\end{alignat}
It is not difficult to prove that the S-C performance boundary is a convex region, then the optimal total ASE can be obtained by searching the ASE of the boundary point.

\begin{figure}[t]
	\centering
	\includegraphics[width=7.5cm]{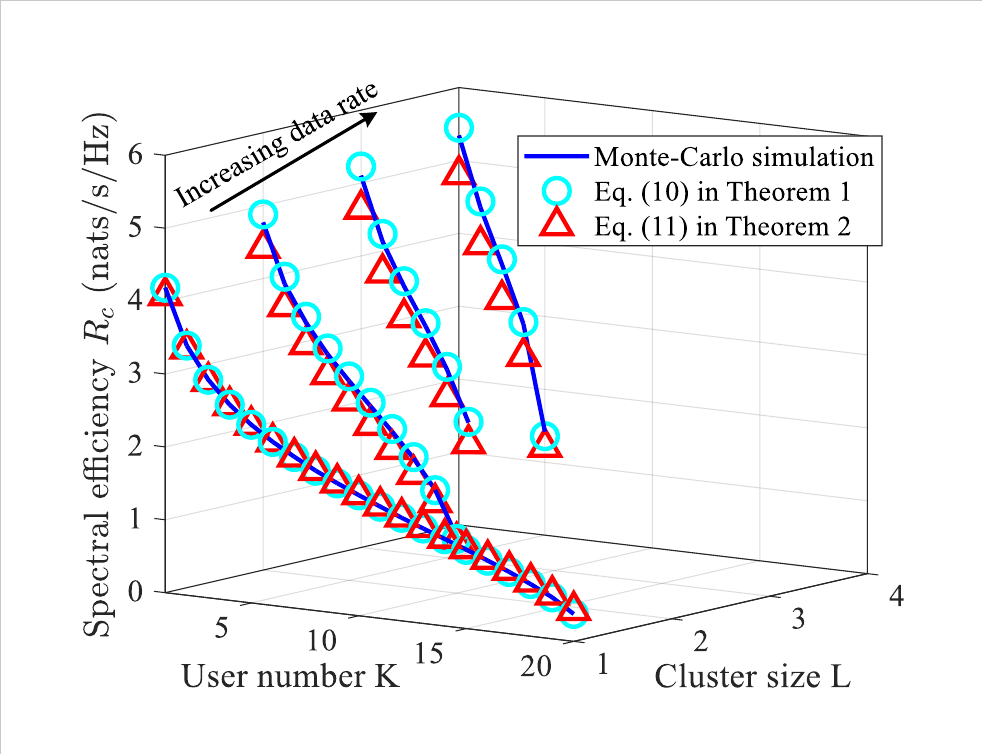}
	\caption{User's average data rate $R_c$ with respect to $K$ under different cooperative cluster size $L$.}
	\label{figure5b}
\end{figure}

\section{Simulations}
\label{SimulationsSection}

Using numerical results, we study the fundamental insights of ISAC networks and verify the tightness of the derived tractable expression by comparing with Monte Carlo simulation results in this section. Numerical simulations are averaged over network typologies and small-scale channel fading realizations. The system parameters are given as follows: the number of transmit antennas $M_{\mathrm{t}} = 20$, the number of receive antennas $M_{\mathrm{r}} = 10$, the transmit power $P_{\mathrm{t}} = 1$W at each BS, $|\xi|^2 = 0.1$, $\kappa = 1$, matching filter gain $\Delta T = 1$, the BS density $\lambda_b = 1/km^2$, $J_{\max} = 10$, pathloss coefficients $\alpha = 4$, and $\beta = 2$.

\subsection{Communication Performance Maximization}

Fig.~\ref{figure5b} illustrates that the results of the original expression for $R_c$ in Theorem 1 are consistent with the simulation results, which validates our analysis and derivation in Section \ref{CommunicationPerformance}. The Monte Carlo simulation results are generated based on the actual number of interference nulling requests received at each BS, where some interference nulling requests may be randomly abandoned due to the limitation of DoF in spatial resources.
Moreover, the gap between the results in Theorem 2 and the simulation results is quite small, which demonstrates that the approximation $\tilde R_c$ in Theorem 2 is quite tight and it becomes more tight for $L = 1$. As shown in Fig.~\ref{figure5b}, under a given $L$, the communication spectral efficiency $R_c$ decreases as $K$ increases. The main reason is that the same spectrum resources of the system are allocated to more users, and thus the average data rate of each user will reduce. Moreover, one can observe that with $K = 1$, as the cooperative cluster size $L$ increases, the average data rate of the scheduled user will increase. The main reason is that if only single user is scheduled in each cell, the performance gain improvement brought by interference nulling is higher than the reduction of the diversity gain. Differently, with $K = 5$, as $L$ increases, the spectral efficiency increases first and then decreases. This suggests that when providing services with more users simultaneously, the benefit of interference nulling does not always outweigh its cost to the reduction of diversity gain, thereby reducing the average data rate $R_c$ for large cluster size. 

\begin{figure}[t]
	\centering
	\includegraphics[width=7.5cm]{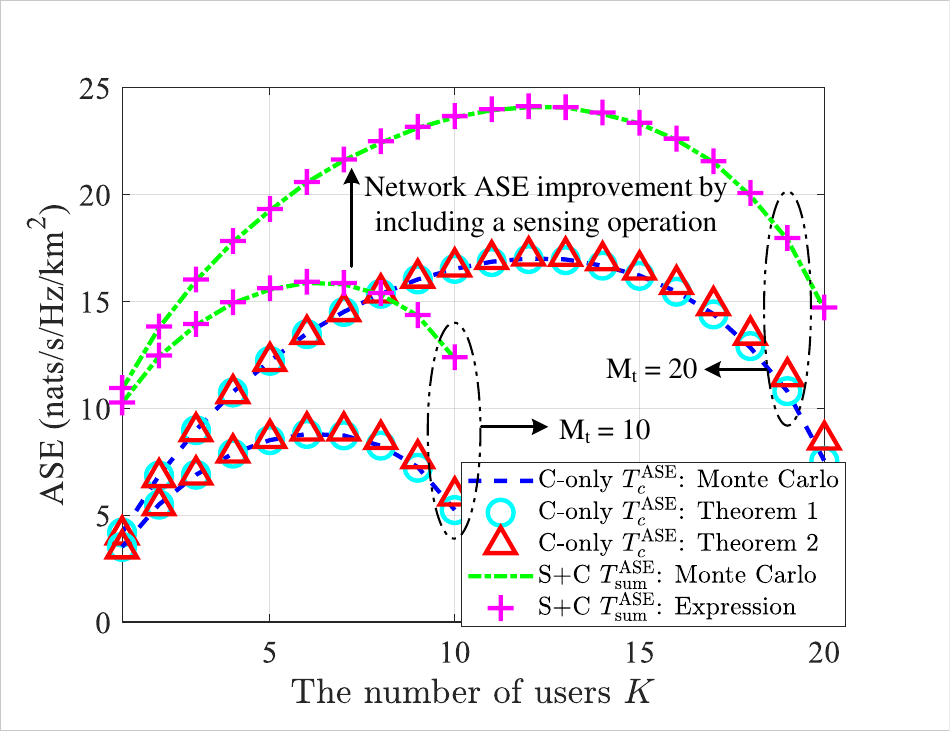}
	\caption{Communication ASE ($T^{\rm{ASE}}_c$) and the total ASE ($T^{\rm{ASE}}_{\rm{sum}}$) with respect to $K$, with $L = 1$.}
	\label{figure5a}
\end{figure}

In Fig.~\ref{figure5a}, with $M_{\mathrm{t}} = 10$ and $M_{\mathrm{t}} = 20$, respectively, the tractable expressions derived based on Theorems  \ref{CommunicationTightExpression} and \ref{CommunicationLooseExpression} provide a remarkably tight approximation which is invariably indistinguishable from truth across all user number cases. The total ASE of the ISAC networks $T^{\rm{ASE}}_{\rm{sum}} = T^{\rm{ASE}}_c + T^{\rm{ASE}}_s$ is significantly higher than the communication ASE $T^{\rm{ASE}}_c$. This is attributed to the fact that adding the analysis function of sensing data to the traditional communication network can effectively improve the overall spectrum efficiency of the network. In addition, Fig.~\ref{figure5a} shows that compared to communication-only networks, the ISAC network achieves approximately constant improvements for different user scheduling numbers, which can be explained by the analysis in (\ref{ApproximateSensing}). In addition, when we maximize the communication ASE, the ratio between the number of users and the number of BS antennas is approximately 60\% under different numbers of transmit antennas, which is consistent with the analysis of Lemma \ref{OptimalKL}. Moreover, it can be found from Fig.~\ref{figure5a} that when the number of scheduled users is less, increasing the number of transmit antennas can only bring limited ASE performance improvement. This is expected since the reasonable spatial resource allocation for multiplexing and diversity gain can effectively maximize the performance gain of ISAC networks. 

To verify the analysis of optimal spatial resource allocation for the maximization of the communication ASE $T^{\rm{ASE}}_c$, we compared $T^{\rm{ASE}}_c$ versus different $L$ and $K$ in Fig. \ref{figure6}. It is shown that at the optimal communication ASE, $K = 12$ and $L = 1$, i.e., there is no need to do interference nulling for the communication ASE maximization, which confirms our analysis in Proposition \ref{NoCooperativeNecessary}. The reason is again that more spatial DoF is consumed to serve interference suppression, and then the multiplexing gain and diversity gain are inevitably reduced, the overall performance of the network decreases.

\begin{figure}[t]
	\centering
	\includegraphics[width=7.5cm]{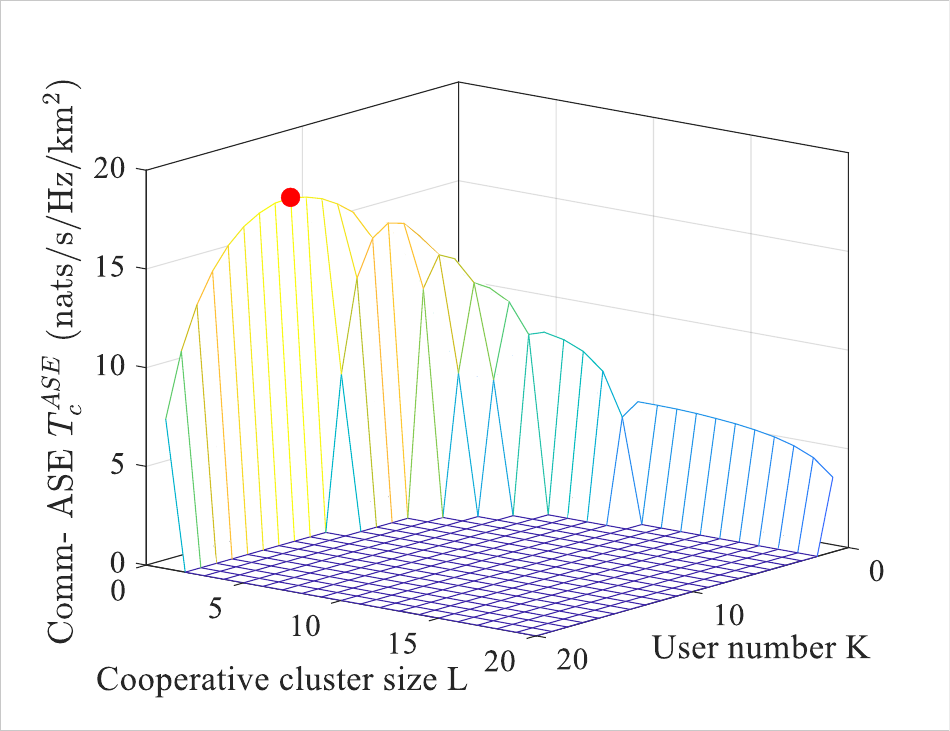}
	\caption{Illustration of the optimal cooperative cluster size $L$ of the maximized ASE of communication.}
	\label{figure6}
\end{figure}

\subsection{Sensing Performance Maximization}

\begin{figure}[t]
	\centering
	\includegraphics[width=7.5cm]{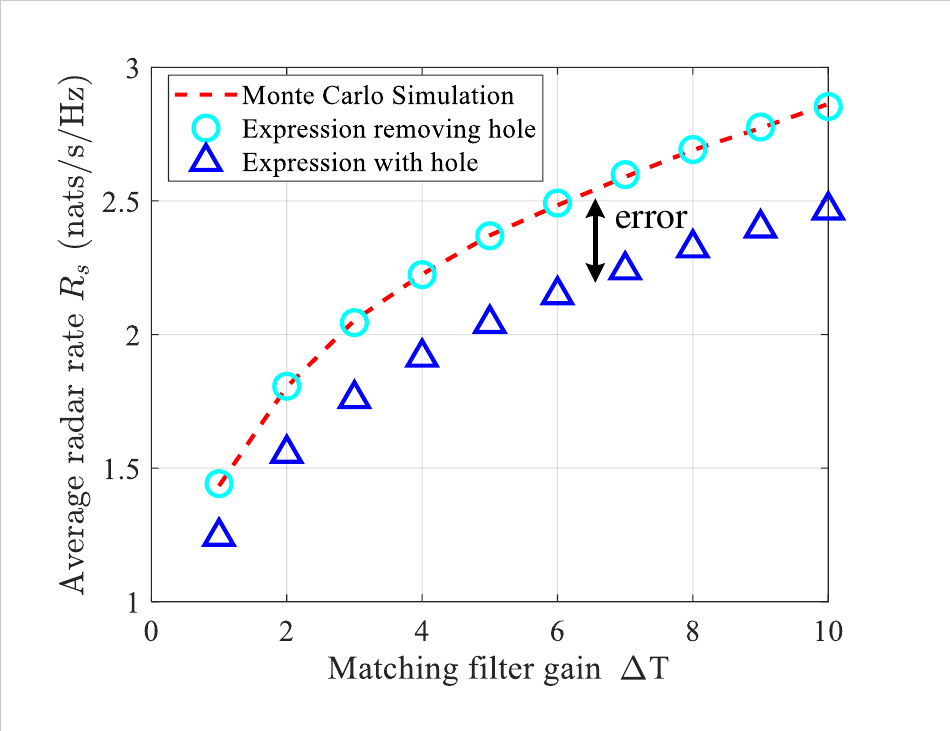}
	\caption{ Comparisons of spectral efficiency $R^{\rm{ASE}}_s$ between the proposed expression and that without consideration of hole.}
	\label{figure4}
\end{figure}

In this subsection, we set $M_{\mathrm{r}} = 10$ and $J_{\max} = 10$, to verify the derived expression under different matching filter gain and cooperative cluster size, as shown in Figs.~\ref{figure4} and \ref{figure7}. Fig.~\ref{figure4} shows that without removing the interference within the hole between the target and the serving BS, the radar information rate of the derived expression is about 15\% less than that of the Monte Carlo simulations. This is expected because the benchmark scheme ignores that there is a circle area without interfering BSs (c.f. the gray shaded circle in Fig.~\ref{figure3}), i.e., more interference is involved, leading to a radar information rate that is underestimated compared to the actual value. Fig.~\ref{figure4} illustrates that by removing the interference in the hole, the derived expression in Proposition \ref{LaplaceTransformSensing} achieves extremely accurate description of radar information rate under different matching filter gain.
\begin{figure}[t]
	\centering
	\includegraphics[width=7.5cm]{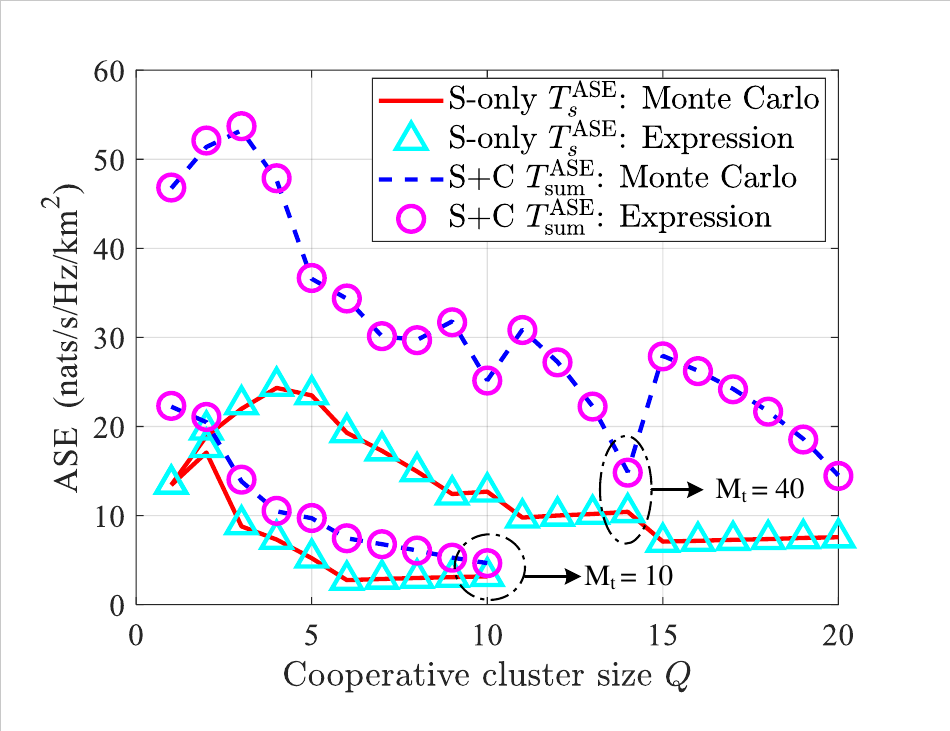}
	\caption{Sensing ASE $T^{\rm{ASE}}_s$ and the total ASE $T^{\rm{ASE}}_{\rm{sum}}$ comparisons with respect to cooperative sensing cluster size $Q$.}
	\label{figure7}
\end{figure}

As shown in Fig. \ref{figure7}, with $K = 1$ and $L = 1$, the tightness of the derived expression of the sensing ASE $T^{\rm{ASE}}_s$ is verified. It is observed that the sensing ASE first increases and then decreases when $Q$ increases. Thus, to maximize the sensing ASE, though the number of targets $J$ that can be sensed simultaneously will inevitably be reduced, interference nulling for sensing with a proper cooperative cluster size can effectively improve the sensing ASE. This is expected since the interference does not suffer from round-trip pathloss as echo signals and that the distance from other interfering BSs may be closer than that from the target. Specifically, when $M_{\mathrm{r}} = 40$, our proposed cooperative scheme can achieve up to twice the ASE of that with no interference nulling case ($Q=1$). With more transmit antennas, there are more DoF for interference nulling to achieve the maximum sensing ASE. It is worth noting that the optimal total ASE performance $T^{\rm{ASE}}_{\rm{sum}}$ does not necessarily increase with $Q$, especially when the number of transmit antennas is relatively small. The main reason is that as $Q$ increases, the communication ASE $T^{\rm{ASE}}_c$ will drop sharply, which may not be compensated by the performance increase in $T^{\rm{ASE}}_s$. Moreover, since $J(Q-1)$ must be an integer, as the cooperative cluster size increases, $J$ may be appropriately reduced to satisfy the DoF constraints (c.f. ({\ref{P1}a})). This causes the sensing ASE to fluctuate as the cooperative cluster size increases, as shown in Fig.~\ref{figure7}. 

\subsection{Tradeoff Between Sensing and Communication}

\begin{figure}[t]
	\centering
	\includegraphics[width=7.5cm]{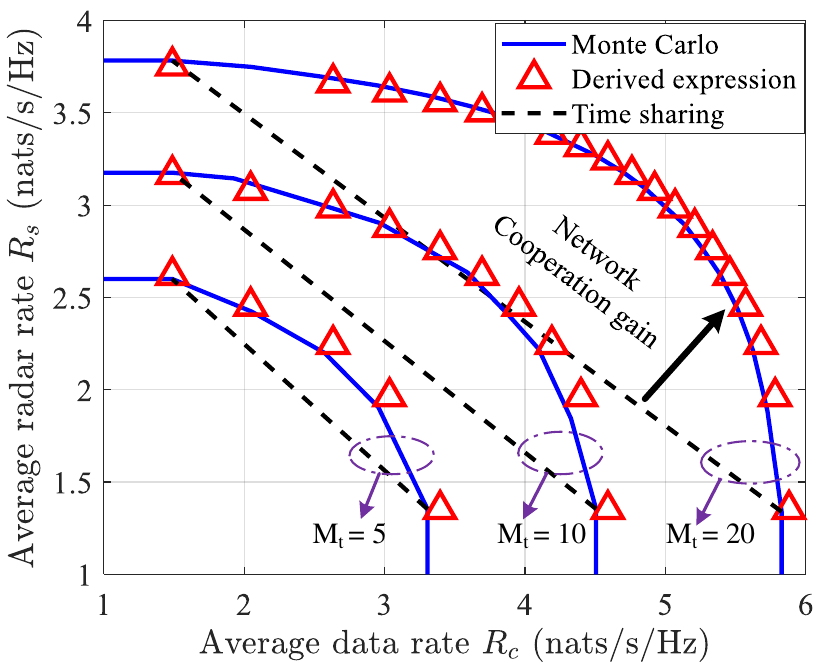}
	\caption{Average user/target's spectral efficiency tradeoff between S\&C.}
	\label{figure9}
\end{figure}

In this subsection, we verify the effectiveness of the proposed cooperative ISAC scheme, including the ASE performance boundary $\mathcal{C}_{\mathrm{C}-\mathrm{S}}(K,L,J,Q)$, with $J_{\max} = 5$. Here, the time-sharing scheme achieved based on two corner points is compared to illustrate the effectiveness of the cooperative ISAC scheme under different setups. The tradeoff profile of the average data rate $R_c$ and the average radar information rate $R_s$ is shown in Fig. \ref{figure9}, which confirms the accuracy of analytical results and the flexibility of our proposed cooperative ISAC networks. With the increase in the number of transmit antennas, the performance boundaries of S\&C are expanded significantly. Also, it is observed from Fig. \ref{figure9} that the $(R_c, R_s)$ region of the optimal cooperative scheme becomes much larger than that of the time-sharing scheme as the number of transmit antennas increases. For example, with $M_{\mathrm{t}} = 10$ and $M_{\mathrm{t}} = 20$, under the same data rate $R_c$, the radar information rates $R_s$ of the proposed cooperative scheme are respectively 38\% and 67\% higher than that of the time-sharing scheme. This expectation arises from the increased number of transmit antennas, enhancing the network's DoF in effectively coordinating spatial resources, leading to improved multiplexing, diversity, and interference nulling gains for S\&C.

\begin{figure}[t]
	\centering
	\includegraphics[width=7.5cm]{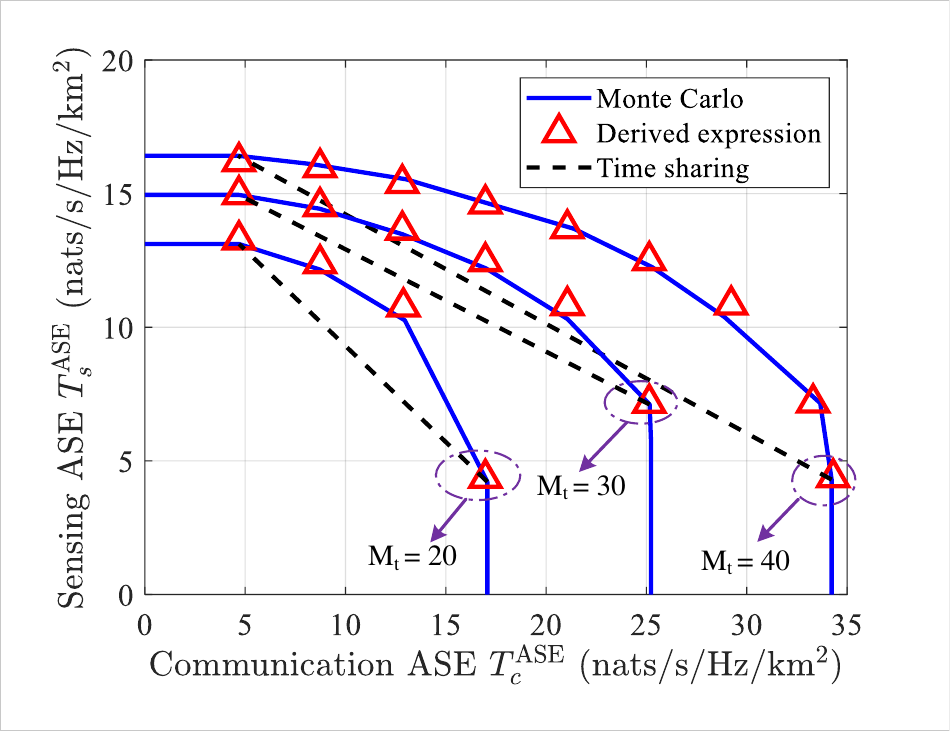}
	\caption{Area spectral efficiency tradeoff between S\&C versus different transmit antenna number $M_t$.}
	\label{figure11}
\end{figure}

\begin{figure}[t]
	\centering
	\includegraphics[width=7.5cm]{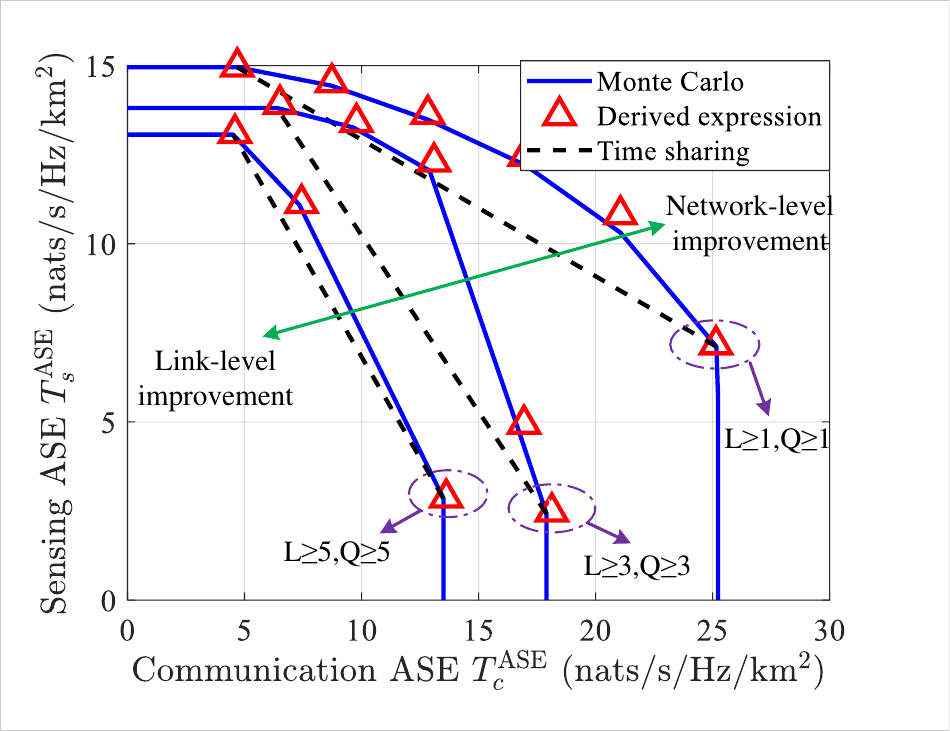}
	\caption{Area spectral efficiency tradeoff between S\&C versus different link-level constraints.}
	\label{figure10}
\end{figure}

Fig.~\ref{figure11} extends the comparison from the individual rates to the network ASEs and illustrates that the network ASE performance can be effectively extended by exploiting the optimal cooperative strategy. Specifically, the proposed cooperative scheme can improve the communication performance by up to 48\% and 33\% as compared to the time-sharing scheme with $M_{\mathrm{t}} = 40$ and $M_{\mathrm{t}} = 30$, respectively. Similar to the average S\&C performance depicted in Fig.~\ref{figure9}, the S-C ASE region for the proposed cooperative ISAC scheme significantly expands compared to the time-sharing scheme as the number of transmit antennas grows. In addition, as shown in Fig. \ref{figure11}, as the number of antennas increases, the communication ASE improves more significantly than the sensing ASE. The main reason is that the adopted beamforming method can achieve multiplexing and  diversity gains for communication.

\begin{figure}[t]
	\centering
	\includegraphics[width=7.5cm]{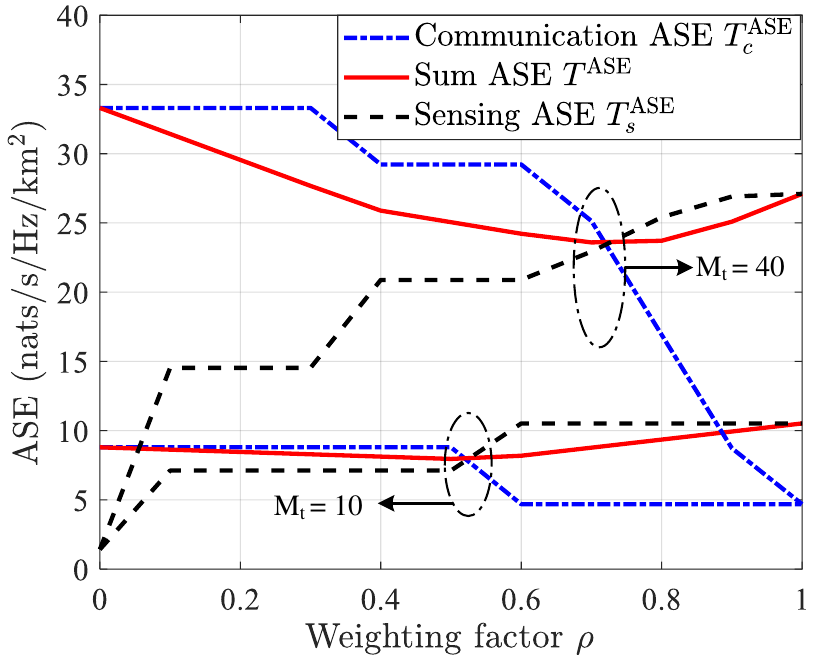}
	\caption{Weighted sum ASE comparison with respect to weighting factor $\rho$.}
	\label{figure12}
\end{figure}

In Fig.~\ref{figure10}, with $M_{\mathrm{t}} = 30$ shows the boundary comparison of network performance under different link-level quality constraints, i.e., $L \ge L^{th}$ and $Q \ge Q^{th}$, where $Q^{th}$ and $L^{th}$ represent the minimum cooperative cluster sizes to achieve the required S\&C quality for the served targets/users. The relationship between the link quality constraint and network performance boundary is evident: as the link quality constraint becomes more stringent, the network performance boundary diminishes. This effect is primarily due to the constraint restricting the optimization range of variables $L$ and $Q$, consequently reducing the DoF in balancing the network performance of S\&C. In addition, as the link quality constraints become more stringent, the performance gain achieved by our proposed cooperative method gradually decreases compared to the time-sharing scheme. This behavior can be explained as follows: The link quality constraint actually guarantees the quality of S\&C at the same time, and therefore limits the feasible solution space of ASE optimization. In turn, compared to the time-sharing scheme, the collaboration gain of our proposed method actually comes from the sacrifice of link quality, which improves the overall performance of scheduled users in the network. Fig.~\ref{figure12} shows the optimal weighted summation ASE $T^{\rm{ASE}}$ for problem (P1) under different weighting factor $\rho$. As the weighting coefficient $\rho$ increases, the sensing/communication ASE increases/decreases monotonically. Moreover, it can be found that $T^{\rm{ASE}}$ is always less than the maximum of the sensing ASE and the communication ASE, since the weighted summation ASE value is always among the sensing-only ASE and the communication-only ASE.

\section{Conclusion}
In this paper, we proposed a novel cooperative scheme in ISAC networks by adopting the coordinated beamforming technique. With SG tools, the S\&C performance in ASE expression is described analytically. We revealed that, unlike communication, the distance PDF of sensing interference is affected by the interference hole region, and we presented a mathematical method to obtain a tight tractable expression by removing the probability with respect to areas where interference does not exist. We proved that it is not necessary to perform interference nulling when maximizing communication ASE; by contrast, interference nulling is required when maximizing sensing ASE. We formulated a profile optimization problem for ISAC network performance, whose performance is compared to the time-sharing scheme to verify that the optimal spatial resource allocation in ISAC networks can effectively improve the cooperative gain at the network level. The simulation results demonstrate the benefits of the proposed cooperative ISAC scheme and provide insightful guidelines for designing the practical large-scale ISAC networks. Performing performance analysis on ISAC networks in low load scenarios and dense clutter interference environments provides valuable avenues for future research. Further investigation is needed into the joint multi-target mutual information analysis, taking into account inter-target dependencies. In addition, sensing performance metrics based on estimation theory, such as the Cramer-Rao lower bound (CRLB), are also worth investigating in network-level ISAC. 

\section*{Appendix A: \textsc{Proof of sensing ASE expression}}

Before deriving the mutual information rate for sensing, we first provide a practical approximated mutual information rate for multi-target sensing, and then derive the approximated mutual information rate for single-target sensing. Let ${\bm{\zeta}}$ denote the target parameters of all targets and ${\bm{\zeta}}_j$ denote the parameters of target $j$. 
		\begin{thm}
			If the reflections from the targets can be approximated as independent, the mutual information rate can be approximated as
			\begin{equation}
				I({\bf{y}}_s;{\bm{\zeta}}|{\bf{X}}_1) \approx \sum\nolimits_{j = 1}^J I({\bf{y}}_s;{\bm{\zeta}}_j|{\bf{X}}_1).
			\end{equation}
			where ${\bm{\zeta}} = \{{\bm{\zeta}}_1,\cdots,{\bm{\zeta}}_J\}$ denotes the parameters of $J$ targets, and ${\bm{\zeta}}_1$ denotes the parameters of target $j$. 
	\end{thm}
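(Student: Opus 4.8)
The plan is to begin from the exact chain rule of mutual information and then use the independence hypothesis to decouple the per-target contributions. First I would order the target list ${\bm{\zeta}} = \{{\bm{\zeta}}_1,\ldots,{\bm{\zeta}}_J\}$ and write, with no assumptions whatsoever,
\[
I({\bf{y}}_s;{\bm{\zeta}}|{\bf{X}}_1) = \sum\nolimits_{j=1}^J I({\bf{y}}_s;{\bm{\zeta}}_j|{\bm{\zeta}}_{<j},{\bf{X}}_1),
\]
where ${\bm{\zeta}}_{<j} \triangleq \{{\bm{\zeta}}_1,\ldots,{\bm{\zeta}}_{j-1}\}$. This exact identity reduces the whole claim to showing that each conditional term may be replaced by its unconditional counterpart $I({\bf{y}}_s;{\bm{\zeta}}_j|{\bf{X}}_1)$, since summing the latter gives exactly the right-hand side of the Lemma.

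Second, I would make the echo model explicit as a superposition. Writing ${\bf{r}}_j = {\bf{H}}_s({\bm{\zeta}}_j){\bf{X}}_1{\bf{P}}_1$ for the round-trip response of target $j$ (with ${\bf{H}}_s({\bm{\zeta}}_j)=\xi_j {\bf{b}}(\theta_j){\bf{a}}^H(\theta_j)$ as defined before (\ref{ASEsensing})), the observation decomposes as ${\bf{y}}_s = \sum\nolimits_{j} {\bf{r}}_j + {\bf{w}}$, where ${\bf{w}}$ collects the inter-cluster interference, clutter, self-interference, and noise of (\ref{TargetEchoSignal}). The hypothesis that the reflections are approximately independent means that the responses ${\bf{r}}_j$, equivalently the pairs $({\bm{\zeta}}_j,\xi_j)$, are mutually statistically independent. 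I would then interpret each conditional term via differential entropy, $I({\bf{y}}_s;{\bm{\zeta}}_j|{\bm{\zeta}}_{<j},{\bf{X}}_1) = h({\bf{y}}_s|{\bm{\zeta}}_{<j},{\bf{X}}_1) - h({\bf{y}}_s|{\bm{\zeta}}_{\le j},{\bf{X}}_1)$, and observe that conditioning on ${\bm{\zeta}}_{<j}$ only removes the already-known reflections ${\bf{r}}_1,\ldots,{\bf{r}}_{j-1}$ from the effective disturbance seen when extracting ${\bm{\zeta}}_j$.

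Third, I would invoke independence once more to argue that this removal is asymptotically irrelevant for the information carried about target $j$. Because the ${\bf{r}}_{j'}$ are zero-mean, mutually independent, and independent of ${\bm{\zeta}}_j$, each contributes additively to the effective-noise covariance while remaining uncorrelated with the parameter of interest; moreover the monostatic receive filter ${\bf{v}}_j^H(\theta_j)$ attenuates contributions arriving from directions $\theta_{j'}\neq\theta_j$. Consequently, to first order the per-target information is insensitive to whether the other reflections are treated as known or as noise, so $I({\bf{y}}_s;{\bm{\zeta}}_j|{\bm{\zeta}}_{<j},{\bf{X}}_1) \approx I({\bf{y}}_s;{\bm{\zeta}}_j|{\bf{X}}_1)$. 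Substituting this into the chain-rule expansion of the first step yields the desired approximation.

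The main obstacle is making the third step quantitative. The exact gap equals the ``explaining-away'' discrepancy $I({\bf{y}}_s;{\bm{\zeta}}_{<j}|{\bm{\zeta}}_j,{\bf{X}}_1) - I({\bf{y}}_s;{\bm{\zeta}}_{<j}|{\bf{X}}_1)$, which is precisely the residual cross-target coupling and does not vanish unless the transmit/receive array responses ${\bf{a}}(\theta_{j'}),{\bf{b}}(\theta_{j'})$ are orthogonal across distinct targets. I would control it by appealing to the near-orthogonality of the steering vectors for resolvable angles, using the same Fej\'er-kernel sidelobe decay that underlies the mismatch factor $\kappa$; the leakage, and hence the approximation error, is small exactly in the angularly-resolvable regime in which the ``independent reflections'' hypothesis is itself meaningful. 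This makes the $\approx$ in the statement a controlled approximation rather than an identity.
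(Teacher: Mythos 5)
Your proposal is correct and follows essentially the same route as the paper: an exact chain-rule expansion of $I({\bf{y}}_s;{\bm{\zeta}}|{\bf{X}}_1)$ into per-target conditional terms, followed by dropping the cross-target conditioning under the independence hypothesis. Your version is in fact the more careful one---you state the chain rule correctly with ${\bm{\zeta}}_{<j}$ (the paper writes ${\bm{\zeta}}_{-j}$, i.e.\ conditioning on \emph{all} other targets, which is not the chain rule as an exact identity), and your third step together with the closing paragraph supplies the justification (independence plus near-orthogonality of the steering vectors via Fej\'er-kernel sidelobe decay) for the step that the paper simply asserts in one sentence.
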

	\begin{proof}
		First, the mutual information of multi-target parameters can be written as 
		\begin{equation}
			I({\bf{y}}_s;{\bm{\zeta}}|{\bf{X}}_1) = \sum\nolimits_{j = 1}^J I({\bf{y}}_s;{\bm{\zeta}}_j | {\bm{\zeta}}_{-j},{\bf{X}}_1),
		\end{equation} 
		where ${\bm{\zeta}}_{-j}$ denotes all target parameters except ${\bm{\zeta}}_j$. If the targets' reflections can be treated as approximately independent, the mutual information for each target becomes independent of others. 
		This approximation assumes that the conditional dependence on other targets can be neglected. Therefore, we have
		\begin{equation}
			I({\bf{y}}_s;{\bm{\zeta}}|{\bf{X}}_1) \approx \sum\nolimits_{j = 1}^J I({\bf{y}}_s;{\bm{\zeta}}_j|{\bf{X}}_1).
		\end{equation}
		This thus completes the proof.
	\end{proof}
	
	To efficiently model and optimize network performance, our approach employs the above practical approximation by summing individual terms for each target, served by the BS. 
	In this case, we can analyze the mutual information for all served targets separately. In the following, we simplify the notation by omitting the index $j$. The target's channel matrix $
	{\bf{H}}_s=\xi \cdot {\bf{b}}(\theta) \cdot {\bf{a}}(\theta)^H$. To facilitate the analysis of the sensing mutual information rate, we demonstrate that the target channel can be approximated as following a Gaussian distribution when the variance of $\theta$ is small.
	\begin{thm}\label{SmallVariance}
		With small variance of $\theta$, in ${\bf{H}}_s$, $h_{m,n}$ can be approximated by a Gaussian distribution.
	\end{thm}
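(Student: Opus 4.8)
The plan is to treat each scalar entry $h_{m,n}$ as a smooth deterministic function of the single random parameter $\theta$ and to invoke the standard linearization (delta-method) argument that an affine image of a small-variance Gaussian variable is again Gaussian.

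First I would write the entries of $\mathbf{H}_s=\xi\,\mathbf{b}(\theta)\mathbf{a}(\theta)^H$ explicitly. Using the steering-vector definitions $\mathbf{b}(\theta)=[1,\ldots,e^{j\pi(M_{\mathrm{r}}-1)\cos\theta}]^T$ and $\mathbf{a}^H(\theta)=[1,\ldots,e^{j\pi(M_{\mathrm{t}}-1)\cos\theta}]$, the $(m,n)$ entry of the rank-one outer product is
\[
h_{m,n}=\xi\,e^{j\pi(m-1)\cos\theta}\,e^{j\pi(n-1)\cos\theta}=\xi\,e^{j c_{m,n}\cos\theta},\qquad c_{m,n}\triangleq\pi(m+n-2).
\]
Thus $h_{m,n}=g(\theta)$ with $g(\theta)=\xi e^{jc_{m,n}\cos\theta}$ a smooth function of the lone random variable $\theta$, which I take to have mean $\bar\theta$ and small variance $\sigma_\theta^2$, modelling the concentrated angle uncertainty.

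Next I would perform a first-order Taylor expansion of $g$ about $\bar\theta$,
\[
g(\theta)=g(\bar\theta)+g'(\bar\theta)(\theta-\bar\theta)+O\big((\theta-\bar\theta)^2\big),
\]
where $g'(\bar\theta)=-j\xi c_{m,n}\sin\bar\theta\,e^{jc_{m,n}\cos\bar\theta}$. Because $\sigma_\theta^2$ is small, $\theta-\bar\theta$ concentrates near zero and the remainder is negligible with high probability, so $h_{m,n}\approx g(\bar\theta)+g'(\bar\theta)(\theta-\bar\theta)$. Since this is an affine map of the small-variance Gaussian scalar $\theta-\bar\theta$, and affine images of Gaussians are Gaussian, $h_{m,n}$ is approximately complex Gaussian with mean $g(\bar\theta)$ and variance $|g'(\bar\theta)|^2\sigma_\theta^2=|\xi|^2 c_{m,n}^2\sin^2\bar\theta\,\sigma_\theta^2$, which completes the argument.

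The main obstacle will be making the linearization rigorous, i.e.\ bounding the dropped quadratic term $\tfrac12 g''(\bar\theta)(\theta-\bar\theta)^2$ against the linear term; this is precisely where the small-variance hypothesis is indispensable, since it ensures the linear contribution dominates the higher-order corrections. A further subtlety is the degenerate broadside case $\bar\theta\in\{0,\pi\}$, where $\sin\bar\theta=0$ forces $g'(\bar\theta)=0$ and the leading stochastic contribution becomes quadratic rather than linear, so the simple first-order Gaussian approximation must be treated separately (or excluded by assumption). Away from these isolated angles the first-order term dominates and the Gaussian approximation for each $h_{m,n}$ holds.
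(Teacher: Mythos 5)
There is a genuine gap: you have located the randomness in the wrong place. You treat the reflection coefficient $\xi$ as a deterministic constant and derive Gaussianity of $h_{m,n}$ purely from the delta method applied to the "lone random variable" $\theta$. That yields a distribution with nonzero mean $g(\bar\theta)=\xi e^{jc_{m,n}\cos\bar\theta}$ and variance $|g'(\bar\theta)|^2\sigma_\theta^2=O(\sigma_\theta^2)$, i.e.\ a nearly degenerate Gaussian that collapses to a point mass precisely in the regime $\sigma_\theta^2\to 0$ where the lemma is invoked. In the paper, however, the RCS is modeled as $\xi\sim\mathcal{CN}(0,\sigma_\xi^2)$, and this is the essential source of the Gaussianity: the paper's proof writes $h_{m,n}=\xi e^{-j\frac{2\pi d}{\lambda}(m-n)\sin\theta}$, linearizes $\sin\theta$ about $\mu_\theta$ only to control the random phase perturbation, and concludes $h_{m,n}\sim\mathcal{CN}\bigl(0,\sigma_\xi^2\bigl(1+(\tfrac{2\pi d}{\lambda})^2(m-n)^2\cos^2(\mu_\theta)\sigma_\theta^2\bigr)\bigr)$ — a zero-mean complex Gaussian whose variance is merely inflated by the angle uncertainty. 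That zero-mean Gaussian form is what the subsequent mutual-information derivation in Appendix A actually requires (the log-det entropy formulas presuppose a Gaussian channel matrix), so your version does not deliver the statement the paper needs.

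The degenerate broadside case you flag ($\sin\bar\theta=0$ forcing $g'(\bar\theta)=0$) is a symptom of this misplacement: in the paper's argument that case is harmless, because Gaussianity is inherited from $\xi$ regardless of whether the first-order $\theta$-derivative vanishes — the perturbation term only affects the variance correction. To repair your proof, start from $\xi\sim\mathcal{CN}(0,\sigma_\xi^2)$, observe that multiplying a circularly symmetric complex Gaussian by a unit-modulus (deterministic or independent random) phase preserves approximate Gaussianity, and then use your Taylor expansion of the phase solely to quantify the $O(\sigma_\theta^2)$ variance inflation and to argue that the cross term $\xi\cdot(\theta-\mu_\theta)$ (a product of independent Gaussians, hence not exactly Gaussian) is negligible because it carries the small factor $\sigma_\theta$.
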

	\begin{proof}
		The complex Gaussian random variable $\xi$ is defined as $\xi \sim \mathcal{C N}\left(0, \sigma_{\xi}^2\right)$, where the real and imaginary components are independent and identically distributed as $\mathcal{N}\left(0, \frac{\sigma_{\xi}^2}{2}\right)$.
		First, the element in the $n$th row and the $m$th column, $h_{m, n}=\xi \cdot e^{-j \frac{2 \pi d}{\lambda}(m-1) \sin (\theta)} \cdot e^{j \frac{2 \pi d}{\lambda}(n-1) \sin (\theta)}$. It can be simplified as $h_{m, n}=\xi \cdot e^{-j \frac{2 \pi d}{\lambda}(m-n) \sin (\theta)}$. $\sin (\theta)$ is a nonlinear transformation of the Gaussian variable $\theta$. When $\sigma_\theta^2$ is small, we can make a useful approximation, a Taylor expansion of $\sin (\theta)$ can be given by
		$$
		\sin (\theta) \approx \sin \left(\mu_\theta\right)+\cos \left(\mu_\theta\right)\left(\theta-\mu_\theta\right).
		$$
		Thus, the expression becomes $
		h_{m, n}=\xi e^{-j \frac{2 \pi d}{\lambda}(m-n)\left(\sin \left(\mu_\theta\right)+\cos \left(\mu_\theta\right)\left(\theta-\mu_\theta\right)\right)}$.
		By linearizing the complex exponential, the complex exponential can be expanded as:
		\begin{equation}
			\begin{aligned}
				h_{m, n} \approx & \xi e^{-j \frac{2 \pi d}{\lambda}(m-n) \sin \left(\mu_\theta\right)} \\
				&\times \left(1-j \frac{2 \pi d}{\lambda}(m-n) \cos \left(\mu_\theta\right)\left(\theta-\mu_\theta\right)\right).
			\end{aligned}
		\end{equation}
		Here, $\theta-\mu_\theta$ is a Gaussian variable with zero mean and variance $\sigma_\theta^2$. Since the term $\left(\theta-\mu_\theta\right)$ is multiplied by a small constant, its effect is limited. Then, the distribution of the element $h_{m, n}$ can be approximated by:
		$$
		h_{m, n} \sim \mathcal{C N}\left(0, \sigma_{\xi}^2\left(1+\left(\frac{2 \pi d}{\lambda}\right)^2(m-n)^2 \cos ^2\left(\mu_\theta\right) \sigma_\theta^2\right)\right).
		$$
		Based on the above discussion, we have
		$$
		\mathbf{H}_s \approx \xi \left(\mathbf{A}_0+\left(\theta-\theta_0\right) \mathbf{B}_0\right),
		$$
		where $\mathbf{A}_0=\mathbf{b}\left(\theta_0\right) \mathbf{a}^H\left(\theta_0\right)$
		and $\mathbf{B}_0=\mathbf{b}^{\prime}\left(\theta_0\right) \mathbf{a}^H\left(\theta_0\right)+\mathbf{b}\left(\theta_0\right) \mathbf{a}^{\prime H}\left(\theta_0\right)$.
		This thus completes the proof.
	\end{proof}
	
Lemma \ref{SmallVariance} is practical because the accuracy of angle estimation is typically high, which supports the assumption that the variance of $\sigma^2_{\theta}$ is small and aligns with common practices in the field. Furthermore, in practical scenarios, the target moves at a limited speed, causing only small angle changes over short time intervals.
	Based on the approximation in Lemma 2, we assume $h_{m,n}$ can be approximated by a Gaussian distribution. To simplify the derivation, we utilize the original expression of ${\bf{H}}_s$ to derive the mutual information.
	Then, the mutual information $I({\mathbf{y}}_s ; \mathbf{H}_s \mid {\mathbf{X}}_1)$ is given by:
	\begin{equation}\label{MutualInformation}
		I({\mathbf{y}}_s ; \mathbf{H}_s \mid {\mathbf{X}}_1)=h({\mathbf{y}}_s \mid {\mathbf{X}}_1)-h({\mathbf{y}}_s \mid \mathbf{H}_s, {\mathbf{X}}_1),
	\end{equation}
	where $h({\mathbf{y}}_s \mid {\mathbf{X}}_1)$ is the differential entropy of ${\mathbf{y}}_s$ given ${\mathbf{X}}_1$.
	$h({\mathbf{y}}_s \mid \mathbf{H}_s, {\mathbf{X}}_1)$ is the conditional differential entropy of ${\mathbf{y}}_s$ given both $\mathbf{H}_s$ and ${\mathbf{X}}_1$. In the following, we simplify the notation by omitting $\theta_j$.
	The conditional differential entropy is given by: 
	\begin{equation}
		\begin{aligned}
			&h({\mathbf{y}}_s \mid \mathbf{H}_s, {\mathbf{X}}_1) \\
			= & \log \left( \det \left(\pi e \sum\nolimits_{i = Q+1}^{\infty}{d_i^{-\alpha} \bf{h}}_{q,1}^H \mathbf{R}_q {\bf{h}}_{q,1} + \sigma_n^2 \|\mathbf{v}\|^2\right)\right),
		\end{aligned}
	\end{equation}
	where ${\bf{h}}_{q,1}^H = {\bf{v}}^H {\bf{G}}_{q,1}^H$, and $\mathbf{R}_{q}$ denotes the covariance matrix of BS $q$'s  transmit signals. Then, we have
	\begin{equation}\label{FunctionHY}
		h({\mathbf{y}}_s | {\mathbf{X}}_1) = \log\left(\pi e   \mathbf{R}_y \right),
	\end{equation}
where 
\begin{equation}
	\begin{aligned}
		&\mathbf{R}_y=\mathbb{E}\left[{{\mathbf{y}}_s {\mathbf{y}}^H_s}\right] \\
		&=d_1^{-2\beta}\mathbf{v}^H \mathbf{H} {\bf{R}}_1 \mathbf{H}^H \mathbf{v} + \sum\nolimits_{i = Q+1}^{\infty}d_i^{-\alpha}{\bf{h}}_{q,1}^H {\mathbf{R}}_q {\bf{h}}_{q,1}^H +\sigma_n^2 \|\mathbf{v}\|^2.
	\end{aligned}
\end{equation} 
	Similarly, the $h({\mathbf{y}}_s | {\bf{H}}_s,{\mathbf{X}}_1)$ can be given by
	\begin{equation}\label{FunctionyHX1}
		h({\mathbf{y}}_s | {\bf{H}}_s,{\bf{X}}_1) \! =\! \log\!\left(\!\pi e   \! \!\sum\nolimits_{i = Q+1}^{\infty} \!d_i^{-\alpha}{\bf{h}}_{q,1}^H {\mathbf{R}}_q {\bf{h}}_{q,1} + \sigma_n^2 \|\mathbf{v}\|^2\!\right)\!.
	\end{equation}
	Plugging (\ref{FunctionHY}) and (\ref{FunctionyHX1}) into (\ref{MutualInformation}), we have
	\begin{equation}
		\begin{aligned}
			I({\mathbf{y}}_s; \mathbf{H} \!\mid\! {\mathbf{X}}_1)
			\!=  \!\log \!\left(\!1  \!+  \!\frac{ \Delta T |\xi|^2 d_i^{-2\beta}({\bf{v}}^H {\bf{b}}(\theta))^2  ({\bf{a}}^H(\theta) {\mathbf{X}}_1) ^2}{\sum\nolimits_{i = Q+1}^{\infty}d_i^{-\alpha}{\bf{h}}_{q,1}^H \mathbf{R}_q {\bf{h}}_{q,1} + \sigma_n^2 \|\mathbf{v}\|^2} \!\right)\!.
		\end{aligned}
	\end{equation}
	In the context of significant interference in dense cell scenarios, this paper specifically addresses an interference-limited network where noise is considered negligible. Therefore, the mutual information rate can be further approximated as
	\begin{equation}
		\begin{aligned}
			I({\mathbf{y}}_s; \mathbf{H} \!\mid \! {\mathbf{X}}_1)
			&\!=\!  \log \!\left(\! 1 + \! \frac{\Delta T |\xi|^2d_i^{-2\beta}({\bf{v}}^H {\bf{b}}(\theta))^2  ({\bf{a}}^H(\theta) {\mathbf{X}}_1) ^2}{\sum\nolimits_{i = Q+1}^{\infty}d_i^{-\alpha}{\bf{h}}_{q,1}^H \mathbf{R}_q {\bf{h}}_{q,1}}  \!\right) \\
			&=  \log \left(1 + {\rm{SIR}}_s \right),
		\end{aligned}
\end{equation}
where ${\rm{SIR}}_s =  \Delta T \kappa M_{\mathrm{r}} |\xi|^2 \frac{{h_{j,1}^t{{\left\| {{{\bf{d}}_1}} \right\|}^{ - 2\beta }}}}{{\sum\nolimits_{q = Q+1}^\infty  {{h_{q,1}}} {{\left\| {{{\bf{d}}_q} - {{\bf{d}}_1}} \right\|}^{ - \alpha }}}}$.

\section*{Appendix B: \textsc{Proof of Theorem \ref{CommunicationTightExpression}}}
The interference term with a given distance $r$ from the typical user to the serving BS, can be derived by utilizing Laplace transform. For ease of analysis, we introduce a geometric parameter $\eta_L = \frac{\left\| {{{\bf{d}}_1}} \right\|}{\left\| {{{\bf{d}}_{L}}} \right\|} $, defined as the distance to the closest BS normalized by the distance to the furthest BS in the cluster for typical user. When $\left\| {{{\bf{d}}_1}} \right\| = r$ and $\left\| {{{\bf{d}}_L}} \right\| = r_L$, we have
\begin{equation}\label{CommunicationEquationExpression}
	\begin{aligned}
		&{{\cal L}_{{I_{\rm{C}}}}}(z) = {\rm{E}}_{g, \Phi_b}\left[ {\exp \left( { - z{{\left\| {{{\bf{d}}_1}} \right\|}^\alpha }\sum\nolimits_{i = L+1}^\infty  {{{\left\| {{{\bf{d}}_i}} \right\|}^{ - \alpha }}} {{| {{\bf{h}}_{k,i}^H{{\bf{W}}_i}} |}^2}} \right)} \right]  \\
		&\overset{(a)}{=}  {\rm{E}}_{\Phi_b}\left[ \left( \prod _{{{{\bf{d}}_i}} \in \Phi_b \textbackslash {\cal{O}}(0,r_L)} {   {  {{{{\left( {1 + z{r^\alpha }{{\left\| {{{\bf{d}}_i}} \right\|}^{ - \alpha }}} \right)}^{-K}}}} } dx} \right) \bigg| r, r_L \right] \\
		&\overset{(b)}{=} \exp \left( { - 2\pi \lambda_b \int_{{r_L}}^\infty  {\left( {1 - {{{{\left( {1 + z{r^\alpha }{x^{ - \alpha }}} \right)}^{-K}}}}} \right)} xdx} \right) \\
		&\overset{(c)}{=} \exp \bigg(  - \pi \lambda_b {r^2}{\rm{H}}\left( {z,K,\alpha ,\eta_{L} } \right) \bigg) ,
	\end{aligned}
\end{equation}
where ${\rm{H}}\left( {z,K,\alpha ,\eta_L } \right) = K{z^{\frac{2}{\alpha }}}B\left( {\frac{z}{{z + {\eta_L ^{ - \alpha }}}},1 - \frac{2}{\alpha },K + \frac{2}{\alpha }} \right)+ \frac{1}{{{\eta_L ^2}}}\left( {\frac{1}{{{{\left( {1 + z{\eta_L ^\alpha }} \right)}^K}}} - 1} \right)$.
In (\ref{CommunicationEquationExpression}), ($a$) follows from the fact that the small-scale channel fading is independent of the BS locations and that the interference power imposed by each interfering BS at user $k$ is distributed as $\Gamma(K,1)$. Then, relation ($a$) is then obtained using the moment generating function (MGF) of this Gamma distribution. To derive ($b$), we use the probability generating functional (PGFL) of a PPP with density $\lambda_b$. ($c$) follows from distribution integral strategies and $\eta_{L} = \frac{r}{r_L}$. 

Then, we perform the marginal probability integral over $r$. By plugging (\ref{CommunicationEquationExpression}) into the following equation, where ${f_r}\left( r \right) = 2\pi {\lambda _b}r{e^{ - \pi {\lambda _b}{r^2}}}$, it follows that
\begin{equation}\label{LapaceformCOmmunication}
	\begin{aligned}
		{{\cal L}_{{I_{\rm{C}}}}}(z) = & \int_0^\infty  {\exp \left( { - \pi \lambda {r^2}{\rm{H}}\left( {z,K,\alpha ,\eta_{L} } \right)} \right)} f(r)dr \\
		= & \frac{1}{{{\rm{H}}\left( {z,K,\alpha ,\eta_{L} } \right) + 1}}.
	\end{aligned}
\end{equation}
According to Lemma 3 in \cite{Zhang2014StochasticGeometry}, the PDF of the distance ratio $\eta_L$ can be given by 
\begin{equation}\label{RatioEquation}
	{f_{\eta_L}}\left( x \right) = 2\left( {L - 1} \right)x{\left( {1 - {x^2}} \right)^{L - 2}}.
\end{equation}
Then, we perform the marginal probability integral over $\eta_L$. By plugging (\ref{LapaceformCOmmunication}) and (\ref{RatioEquation}) into (\ref{ASEcommunication}), the achievable rate can be denoted by
\begin{equation}
	\begin{aligned}
		R_c =& \int_0^\infty  {\frac{{1 - {{\left( {1 + z} \right)}^{ - {M_{\mathrm{t}} - KL - J(Q-1) + 1}}}}}{z}} \\
		& \int_0^1 {\frac{{2\left( {L - 1} \right)\eta_L {{\left( {1 - {\eta_L ^2}} \right)}^{L - 2}}}}{ {\rm{H}}\left( {z,K,\alpha ,\eta_L } \right) +1 }d \eta_L }dz.
	\end{aligned}
\end{equation}
This thus completes the proof.

\section*{Appendix C: \textsc{Proof of Theorem \ref{CommunicationLooseExpression}}}

We first derive the expression without interference nulling, and then obtain an approximate expression based on MISR method. Specifically, under the given distance from the typical user to the serving BS $r$, the conditional expectation can be further approximated as follows:
\begin{equation}\label{ConditionalExpectionC}
	\begin{aligned}
		&{\rm{E}}_{g, \Phi_b}\left[ {\log \left( {1 + \mathrm{SIR}_c} \right)} \big| r\right] \\
		\approx& {\rm{E}}_{\Phi_b}\left[ {\log \left( {1 + \frac{{\rm{E}}\left[{g_{k,1}^k}\right]}{{\sum\nolimits_{{{i}} = L+1}^\infty  {\rm{E}}\left[{{g_{k,i}}}\right] {{\left\| {{{\bf{d}}_{{i}}}} \right\|}^{ - \alpha }}{r^\alpha }}}} \right)} \right] \\
		=& \int_0^\infty  {\frac{{1 - {{e^{ - z\frac{{M_{\mathrm{t}} - KL - J(Q-1) + 1}}{K}}}} }}{ z }} {\rm{E}}\left[ {{e^{ - z I_{\rm{C}}}}} \right]dz.
	\end{aligned}
\end{equation}
Then, the Laplace transform of communication interference with $L = 1$ can be given by
\begin{equation}
	\begin{aligned}
	{\rm{E}}\left[ {{e^{ - z  I_{\rm{C}}}}} \right]  =& 	\exp \bigg(  - \left. {\pi \lambda x\left( {1 - {e^{ - z{r^\alpha }{x^{ - \alpha /2}}}}} \right)} \right|_{{r^2}}^\infty  \\
		&- \pi \lambda \int_{{r^2}}^\infty  {\frac{\alpha }{2}} z{r^\alpha }{x^{ - \frac{\alpha }{2}}}{e^{ - z{r^\alpha }{x^{ - \alpha /2}}}}dx \bigg)\\
		=& \exp \left( - {\pi \lambda {r^2}\left({\rm{F}}\left( {z,\alpha} \right) - 1\right)} \right),
	\end{aligned}
\end{equation}
where ${\rm{F}}\left( {z,\alpha} \right) = {{e^{ - z}}  + {z^{\frac{2}{\alpha } }}\Gamma \left( {1 - \frac{2}{\alpha },z} \right)}$.  By plugging ${\rm{E}}\left[ {{e^{ - z  I_{\rm{C}}}}} \right]$ and (\ref{ConditionalExpectionC}) into $\int_0^\infty {\rm{E}}_{g, \Phi_b}\left[ {\log \left( {1 + \mathrm{SIR}_c} \right)} \big|r\right]{f_r}\left( r \right) dr$, we have ${\rm{E}}_{g, \Phi_b}\left[ {\log \left( {1 + \mathrm{SIR}_c} \right)} \right] = \int_0^\infty  {\exp \left( { - \pi \lambda {r^2}({\rm{F}}\left( {z,\alpha } \right)-1)} \right)} f(r)dr = \frac{1}{{{\rm{F}}\left( {z,\alpha} \right) }}$.
Then, we need to further obtain the  effective gain of SIR with $L$ cooperative BSs compared to the case with $L = 1$. According to \cite{Haenggi2014MISR}, 
we can obtain an approximate coverage probability with $L-1$ interference nulling based on no-interference nulling case, i.e., $P_{L}(\gamma) = P_1(\gamma/G_L)$, where $G_L = \frac{{\Gamma \left( {L + \frac{\alpha }{2}} \right)}}{{\Gamma \left( {L + 1} \right)\Gamma \left( {1 + \frac{\alpha }{2}} \right)}}$ and $P_L(\gamma)$ denotes the coverage probability with SIR threshold $\gamma$. Therefore, $\frac{{\rm{E}}\left[{g_{k,1}^k}\right] }{{\sum\nolimits_{{{i}} = L+1}^\infty  {\rm{E}}\left[{{g_{k,i}}}\right] {{\left\| {{{\bf{d}}_{{i}}}} \right\|}^{ - \alpha }}{r^\alpha }}}$ and $\frac{{\rm{E}}\left[{g_{k,1}^k}\right] G_L}{{\sum\nolimits_{{{i}} = 2}^\infty  {\rm{E}}\left[{{g_{k,i}}}\right] {{\left\| {{{\bf{d}}_{{i}}}} \right\|}^{ - \alpha }}{r^\alpha }}}$ follow nearly identical distributions. Then, it follows that
\begin{equation}
	\begin{aligned}
		R_c \approx & \int_0^\infty  {\frac{{\left( {1 - {e^{ - z G_L  {\frac{{M_{\mathrm{t}} - KL - J(Q-1) + 1}}{K}} }}} \right)}}{z {{\rm{F}}\left( {z,\alpha} \right) }}}  dz \\ 
		= & \int_0^\infty  {\frac{{ {1 - {e^{ - z {\frac{{\Gamma \left( {L + \frac{\alpha }{2}} \right)}\left( {{M_{\mathrm{t}}-J(Q-1)+1} - KL} \right)}{K {\Gamma \left( {L + 1} \right)\Gamma \left( {1 + \frac{\alpha }{2}} \right)}}} }}} }}{{z {\rm{F}}(z,\alpha)}}} dz .
	\end{aligned}
\end{equation}
This thus completes the proof.

\section*{Appendix D: \textsc{Proof of Proposition \ref{NoCooperativeNecessary}}}
When $\alpha = 4$, $G_L = \frac{L+1}{2}$. For any given $K$, $Q$, $J$, we have ${\frac{{\Gamma \left( {L + \frac{\alpha }{2}} \right)}}{{\Gamma \left( {L + 1} \right)\Gamma \left( {1 + \frac{\alpha }{2}} \right)}}}{\frac{{{{M_{\mathrm{t}} -KL - J(Q-1) + 1}}}}{K}}  = \frac{1}{2}\left( { - {L^2} + \left( {\frac{{M_{\mathrm{t}} - J(Q-1) + 1}}{K} - 1} \right)L + \frac{{M_{\mathrm{t}} - J(Q-1) + 1}}{K}} \right) \triangleq y(L) $. $y(L)$ is a quadratic function about $L$. It can be found that when $\frac{{M_{\mathrm{t}} - J(Q-1) - K + 1}}{{2K}} \le 1$, $y(L)$ is maximized when $L = 1$. Otherwise, $y(L)$ is maximized when $L = \frac{{M_{\mathrm{t}} - J(Q-1) - K + 1}}{{2K}}$, let $v = \frac{K}{{M_{\mathrm{t}} - J(Q-1) + 1}}$, we have
\begin{equation}
	\begin{aligned}
		\frac{\partial T^{\rm{ASE}}_c}{\partial v}  &= \int_0^\infty  {\frac{{1 - {e^{ - z\frac{1}{4}{{\left( {\frac{1}{v} + 1} \right)}^2}}} - z\frac{1}{2}\left( {\frac{{1 + v}}{{{v^2}}}} \right){e^{ - z\frac{1}{4}{{\left( {\frac{1}{v} + 1} \right)}^2}}}}}{{zF(z,\alpha )}}} dz \\
		&= \int_0^\infty  {\frac{{1 - \left( {1 + z\frac{{1 + v}}{{2{v^2}}}} \right){e^{ - z\frac{1}{4}{{\left( {\frac{1}{v} + 1} \right)}^2}}}}}{{zF(z,\alpha )}}} dz \ge 0.
	\end{aligned}
\end{equation}
Thus, in this case with $\frac{{M_{\mathrm{t}} - J(Q-1) - K + 1}}{{2K}} \ge 1$, $T^{\rm{ASE}}_c$ is a monotonically increasing function with respect to $K$. Since $K \le \frac{{M_{\mathrm{t}}  - J(Q-1) + 1}}{3}$, $y(L)$ is maximized with $K = \frac{{M_{\mathrm{t}}  - J(Q-1) + 1}}{3}$, and in this case, $L = \frac{{M_{\mathrm{t}}  - J(Q-1) + 1}}{{2K}} - \frac{1}{2} = 1$. Combining the above analysis, we complete the proof.

\section*{Appendix E: \textsc{Proof of Proposition \ref{LaplaceTransformSensing}}}

As shown in Fig. \ref{figure3}, to accurately derive the PDF for the distance, we must determine the count of points distributed outside the hole within the strip. The strip's area can be represented as $2 x \mathrm{~d} x(\pi-\varphi(x))$, with $\varphi(x) = \arccos \left(\frac{x}{2R}\right)$ denoting the angle within the hole, as depicted in Fig. \ref{figure3}. Consequently, the Poisson distribution models the number of interfering points within this strip, with a mean of $\lambda_b 2 x \mathrm{~d} x(\pi-\varphi(x))$. Given that the precise point locations within the strip are inconsequential, we can disregard the hole and redistribute the points uniformly across the entire strip, which has an area of $2 \pi x \mathrm{~d} x$. This implies that a PPP within the hole can be effectively represented as a non-homogeneous PPP with a density of $\lambda_b(1-\frac{\varphi(x)}{\pi})$. Using this intermediate result and the above insights, we now derive tight bounds on the Laplace transform of sensing interference as follows:
\begin{equation}\label{HoleExpression}
	\begin{aligned}
		&	{{\cal L}_{{I_{\rm{S}}}}}(z) =  \exp \bigg(  - 2\pi \lambda_b \int_0^\infty  {\bigg( {1 - \frac{1}{{{{\left( {1 + z{R^{2\alpha }}{x^{ - \beta }}} \right)}^K}}}} \bigg)} xdx  \\
		& + \lambda_b \int_0^{2R} {\underbrace {2\arccos  {\frac{x}{{2R}}} }_{\text{{{{angle}}  {{in}} {{hole}}}}}\bigg( {1 - \frac{1}{{{{\left( {1 + z{R^{2\alpha }}{x^{ - \beta }}} \right)}^K}}}} \bigg)xdx}  \bigg) \\
		& \mathop  = \limits^{\frac{x}{R} = t}  \exp \bigg( { - \pi \lambda_b K{z^{\frac{2}{\beta }}}{R^{\frac{{4\alpha }}{\beta }}}B\left( {1,1 - \frac{2}{\beta },K + \frac{2}{\beta }} \right)} \\
		&+  {\lambda_b {R^2}\int_0^2 {2\arccos  {\frac{t}{2}}\bigg( {1 - \frac{1}{{{{\left( {1 + z{R^{2\alpha  - \beta }}{t^{ - \beta }}} \right)}^K}}}} \bigg)} tdt} \bigg).
	\end{aligned}
\end{equation}
In (\ref{HoleExpression}), the second part is to subtract the interference in the hole.
By calculating the probability integral of the target distance $R$, we have
\begin{equation}\label{LapalaceTransform}
	\begin{aligned}
		&{{\cal L}_{{I_{\rm{S}}}}}(z) = \\
		&\int_0^\infty \! \exp \bigg( \! - R \bigg( K{z^{\frac{2}{\beta }}}{{\left( {\frac{R}{{\pi \lambda_b }}} \right)}^{\frac{{2\alpha }}{\beta } - 1}} \! B\left( \! {1,1 - \frac{2}{\beta },K + \frac{2}{\beta }} \right) \!+\! 1 \\
		&-\! \int_0^2 \! {\frac{2}{\pi }\arccos  {\frac{t}{2}} \bigg( {1 \! - {{{{( {1 + z{{( {\frac{R}{{\pi \lambda_b }}} )}^{\alpha  - \beta /2}}{t^{ - \beta }}} )}^{-K}}}}} \bigg)} tdt \bigg) \bigg) dR.
	\end{aligned}
\end{equation}
Then, by plugging (\ref{LapalaceTransform}) into (\ref{RadarRateExpression}), it follows that
\begin{equation}
	R_s = \int_0^\infty  {\frac{{1 - {{{\left( {1 +  \Delta T \kappa M_{\mathrm{r}} |\xi|^2 z} \right)}^{ - K}}}}}{z}} \int_0^\infty {{\cal L}_{{I_{\rm{S}}}}}(z) f(R) dR dz.
\end{equation}
This completes the proof.

\section*{Appendix F: \textsc{Proof of Theorem \ref{ASEsensingExpression}}}
First, we draw the following conclusion to facilitate the derivation of sensing ASE expression.
\begin{thm}\label{IgnorableHole}
	When $Q \gg 1$, the term $\exp \left( {\lambda_b {R^2}\int_{{r_Q}/{R}}^2 {2\arccos \left( {\frac{t}{2}} \right)\left( {1 - \frac{1}{{{{\left( {1 + z{R^{2\alpha  - \beta }}{t^{ - \beta }}} \right)}^K}}}} \right)} tdt} \right) \to 0$ in (\ref{HoleExpression}).
\end{thm}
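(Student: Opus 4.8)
The plan is to trace how the hole-correction term of (\ref{HoleExpression}) behaves as the cooperative cluster grows, using the geometry already fixed in Fig.~\ref{figure3}. Recall that for $Q\ge 2$ the serving BS nulls the $Q$ nearest BSs, so the residual interference originates only from points lying beyond the $Q$-th nearest BS, whose distance to the serving BS is $r_Q$. After the normalisation $t=x/R$ used to obtain (\ref{HoleExpression}), the only portion of the target hole that still has to be corrected is the slice $t\in[r_Q/R,2]$, which is precisely the integration range appearing in the term of interest. My first step is therefore to control $r_Q$: since the BS locations form a homogeneous PPP of intensity $\lambda_b$, the distance to the $Q$-th nearest point obeys $\mathrm{E}[r_Q^2]=Q/(\pi\lambda_b)$, and $r_Q$ concentrates sharply around $\sqrt{Q/(\pi\lambda_b)}$, so $r_Q\to\infty$ almost surely as $Q\to\infty$.

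Second, I would compare $r_Q$ against the target distance $R$, whose law $f(R)=2\pi\lambda_b R e^{-\pi\lambda_b R^2}$ keeps $R$ of order $1/\sqrt{\lambda_b}$. Because $r_Q$ diverges while $R$ stays $O(1/\sqrt{\lambda_b})$, for every fixed $R$ the normalised lower limit $r_Q/R$ eventually exceeds the upper limit $2$; on that event the interval $[r_Q/R,2]$ is empty and the hole-correction integral is identically $0$. On the complementary event $r_Q/R<2$ I would bound the integrand using $\arccos(t/2)\in[0,\pi/2]$ and $0\le 1-(1+zR^{2\alpha-\beta}t^{-\beta})^{-K}\le 1$, giving a correction no larger than $\tfrac{\pi}{2}\lambda_b R^2\!\int_{r_Q/R}^2 t\,dt=\tfrac{\pi}{4}\lambda_b R^2\big(4-(r_Q/R)^2\big)$, which collapses to $0$ as $r_Q/R\uparrow 2$. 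A sharper endpoint estimate, $\arccos(t/2)=O(\sqrt{2-t})$, makes the contribution of the shrinking slice vanish even faster.

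Assembling these pieces is the third step: the almost-sure divergence of $r_Q$ together with the dominated, vanishing integrand shows that the hole-correction term in (\ref{HoleExpression})—the argument of the exponential in the statement—tends to $0$ when $Q\gg 1$, and remains negligible after averaging against $f(R)$ and $f_{r_Q}$. Consequently the exponential in the statement, whose argument is this vanishing correction, no longer perturbs ${\cal L}_{I_{\mathrm{S}}}(z)$ and can be dropped; this is exactly the simplification that turns (\ref{HoleExpression}) into the tractable $\tilde I_{\mathrm{S}}$ of (\ref{ASEsensingExpression}), i.e.\ it is what renders the interference hole ignorable for large clusters.

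I expect the main obstacle to be the boundary regime where $r_Q/R$ is close to, but below, $2$, namely the event in which a surviving BS sits just outside the nulled disk yet still clips the target hole, so that $[r_Q/R,2]$ is non-empty. There one must verify that the contribution is genuinely negligible rather than merely small for typical $R$; the clean route is the endpoint bound $\arccos(t/2)=O(\sqrt{2-t})$ combined with the concentration of $r_Q$, so that the joint average over $R$ and $r_Q$ is dominated by the vanishing probability of this boundary event. Care is also needed to confirm that the $R$-integral introduces no non-negligible large-$R$ tail, which the factor $e^{-\pi\lambda_b R^2}$ in $f(R)$ rules out.
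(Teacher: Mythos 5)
Your proof is correct and rests on the same geometric insight as the paper's: once the nulled disk of radius $r_Q$ around the serving BS covers the whole interference hole of radius $2R$, the correction integral over $[r_Q/R,\,2]$ is empty, so everything reduces to showing $P\left[r_Q \ge 2R\right] \to 1$ as $Q \to \infty$. The execution differs, though. The paper invokes Slivnyak's theorem to identify the law of $r_Q$ with that of the $Q$th-nearest distance from the origin and then uses the closed-form ratio CCDF $P\left[\frac{r_Q}{2R} \ge x\right] = 1-\left(1-\frac{1}{4x^2}\right)^{Q-2}$ cited from \cite{Zhang2014StochasticGeometry}, letting $Q \to \infty$; this handles the joint law of $R$ and $r_Q$ in one stroke. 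You instead use the concentration $\mathrm{E}[r_Q^2]=Q/(\pi\lambda_b)$ (equivalently the pathwise monotone divergence $r_Q \uparrow \infty$) against the $Q$-independent Rayleigh law of $R$, plus a dominated-convergence step with $2\arccos(t/2)\le\pi$ and $1-\left(1+zR^{2\alpha-\beta}t^{-\beta}\right)^{-K}\le 1$; since your argument is pathwise, it sidesteps the $R$--$r_Q$ dependence without needing the external ratio-distribution lemma, and it is more complete on the one point the paper glosses over, namely the boundary regime $r_Q/R \uparrow 2$, which you control explicitly (your stated constant $\frac{\pi}{4}$ drops the factor $2$ multiplying $\arccos(t/2)$ --- the bound should be $\frac{\pi}{2}\lambda_b R^2\left(4-(r_Q/R)^2\right)$ --- but this is immaterial to the limit) and sharpen via $\arccos(t/2)=O(\sqrt{2-t})$. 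You also correctly read the lemma as asserting that the \emph{exponent} vanishes, so the exponential factor tends to $1$ and can be dropped, which is exactly how it is used in Appendix F; the paper's literal wording ``$\exp(\cdot)\to 0$'' is a misstatement, since its own argument likewise establishes that the hole correction is zero with probability approaching one.
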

\begin{proof}
	First, we ignore the interference in the hole (c.f. Fig.~\ref{figure3}). 
	In this case, it is worth noting that the PDF of the distance from the $(Q-1)$th closest BS to the serving BS is the same as that from the $Q$th BS to the origin. This can be proved by Slivnyak's theorem \cite{mukherjee2014analytical}, i.e., for a PPP, because of the independence between all of the points, conditioning on a point at $x$ does not change the distribution of the rest of the process. The distance from the $(Q-1)$th closest BS to the serving BS is denoted by $r_{Q}$. Therefore, we have the PDF of the distance $r_Q$ from the $(Q-1)$th closest BS to the serving BS can be given by ${f_{{r_{Q}}}}\left( r \right) = {e^{ - \lambda \pi {r^2}}}\frac{{2{{\left( {\lambda \pi {r^2}} \right)}^Q}}}{{r\Gamma \left( Q \right)}}$, and according to \cite{Zhang2014StochasticGeometry}, the complementary cumulative distribution function (CCDF) of $\frac{r_Q}{2R}$ can be expressed as $P\left[\frac{r_Q}{2R} \ge x\right] = 1 - \left(1 - \frac{1}{4x^2}\right)^{Q-2}$.
	It can be readily proved that when $Q \gg 1$,  $P[\frac{r_Q}{2R} \ge x] \to 1$ for any given $x > 1$, which represents the interference distance from the BS at ${\bf{d}}_Q$ to the serving BS is always larger than $2R$. Therefore, the probability of interference in the hole is almost zero due to cooperative interference nulling when $Q \gg 1$, i.e., $\frac{r_Q}{R} \ge 2$. This thus proves that $\exp \left( {\lambda_b {R^2}\int_{{r_Q}/{R}}^2 {2\arccos \left( {\frac{t}{2}} \right)\left( {1 - \frac{1}{{{{\left( {1 + z{R^{2\alpha  - \beta }}{t^{ - \beta }}} \right)}^K}}}} \right)} tdt} \right) \to 0$ when $Q \gg 1$.
\end{proof}

According to Lemma \ref{IgnorableHole}, the interference term in the hole can be ignored when $Q \gg 1$. Thus, to simplify the expression derivation, we remove the second term in (\ref{SensingRateExpression}) for the case with $Q \ge 2$, and then the conditional Laplace transform of sensing interference can be given by 
\begin{equation}
	\begin{aligned}
		&\left[ {{e^{ - z I_{\rm{S}}}}} \big|R,r_Q \right] = \exp \bigg(  - \pi \lambda_b \bigg( r_Q^2\left( {{{{{\left( {1 + z{R^{2\alpha }}r_Q^{ - \beta }} \right)}^{-K}}}} - 1} \right) \\
		&+ K{z^{\frac{2}{\beta }}}{R^{\frac{{4\alpha }}{\beta }}}B\bigg( {\frac{{z{R^{2\alpha }}r_Q^{ - \beta }}}{{z{R^{2\alpha }}r_Q^{ - \beta } + 1}},1 - \frac{2}{\beta },K + \frac{2}{\beta }} \bigg) \bigg) \bigg).
	\end{aligned}
\end{equation}
Then, the Laplace transform of sensing interference is $\left[ {{e^{ - z I_{\rm{S}}}}} \right] = \int_0^\infty \int_0^\infty \left[ {{e^{ - z I_{\rm{S}}}}} \big|R,r_Q \right] {f_{{r_{Q}}}}\left( r_Q \right)f_r(R) dRdr_Q$, where $f_r(R)= 2\pi {\lambda _b}R{e^{ - \pi {\lambda _b}{R^2}}}$. Finally, by plugging $\left[ {{e^{ - z I_{\rm{S}}}}} \right]$ into (\ref{RadarRateExpression}) and (\ref{ASEsensing}), the tractable expressions of the radar information rate and sensing ASE can be obtained.

\footnotesize  	
\bibliography{mybibfile}
\bibliographystyle{IEEEtran}

\end{document}